\algnewcommand{\LeftComment}[1]{ \(\triangleright\) #1}
\newcommand{\mypara}[1]{{\smallskip \noindent \bf #1}\hspace{0.1in}}
\newtheorem{theorem}{Theorem}
\newtheorem{lemma}{Lemma}
\newtheorem{definition}{Definition}
\newtheorem{corollary}{Corollary}
\begin{document}

\title{On No-Sensing Adversarial Multi-player Multi-armed Bandits with Collision Communications}

\author{Chengshuai~Shi,~\IEEEmembership{Student Member,~IEEE} and Cong~Shen,~\IEEEmembership{Senior Member,~IEEE}
    \thanks{The work was supported in part by the US National Science Foundation (NSF) under Grant CNS-2002902 and ECCS-2029978, and a Commonwealth Cyber Initiative (CCI) cybersecurity research collaboration grant.}
	\thanks{The authors are with Charles L. Brown Department of Electrical and Computer Engineering, University of Virginia, Charlottesville, VA 22904, USA. Email: \texttt{\{cs7ync, cong\}@virginia.edu}.}
	}

	\maketitle

	\begin{abstract}
    We study the notoriously difficult no-sensing adversarial multi-player multi-armed bandits (MP-MAB) problem from a new perspective. Instead of focusing on the hardness of multiple players, we introduce a new dimension of hardness, called \emph{attackability}. All adversaries can be categorized based on the attackability and we introduce \emph{Adversary-Adaptive Collision-Communication (A2C2)}, a family of algorithms with forced-collision communication among players. Both attackability-aware and unaware settings are studied, and information-theoretic tools of the Z-channel model and error-correction coding are utilized to address the challenge of implicit communication without collision information in an adversarial environment. For the more challenging attackability-unaware problem, we propose a simple method to estimate the attackability enabled by a novel error-detection repetition code and randomized communication for synchronization. Theoretical analysis proves that asymptotic attackability-dependent sublinear regret can be achieved, with or without knowing the attackability. In particular, the asymptotic regret does not have an exponential dependence on the number of players, revealing a fundamental tradeoff between the two dimensions of hardness in this problem.

	\end{abstract}

	\section{Introduction}\label{sec:intro}
	The decentralized multi-player multi-armed bandits (MP-MAB) problem has received increasing interest in recent years  \cite{liu2010distributed,besson2018multi,boursier2019sic,lugosi2018multiplayer,bubeck2020non}. In MP-MAB, multiple players simultaneously play the bandit game without explicit communications and interact with each other only through arm collisions. When two or more players play the same arm simultaneously, they all get a reward $0$ (or equivalently loss $1$) instead of the true underlying reward of that action. This model is largely motivated by practical applications such as cognitive radio \cite{anandkumar2011distributed,avner2014concurrent,Gan2020tsp,Shi2020aistats} and wireless caching \cite{Xu2020twc}, where standard (single-player) MAB does not fully capture the system complexity and user interactions must be taken into account in conjunction with the bandit game.

	Depending on how rewards are generated, the MP-MAB game can be either stochastic or adversarial, as in the single-player bandit problem. Most of the existing MP-MAB works focus on the stochastic setting, in which a well-behaved stochastic model exists for each arm (albeit unknown to the players). However, it is often difficult to determine the correct stochastic assumptions in real-world applications, and there are use cases where such assumptions do not hold. For example, in cognitive radio systems, it is common to have channel availability or signal quality fluctuations due to changing environmental conditions or bursty radio frequency interference \cite{Shen2019tsp,alatur2020multi}. Adversarial MP-MAB is a more suitable model for such use cases, as it makes no stochastic assumptions on the rewards and assigns an arbitrary reward sequence to each arm exogenously. However, compared with stochastic MP-MAB, adversarial MP-MAB is a considerably harder problem because of the need to fight the adversary while interacting with other players.

	Since the MAB problem for a single player is well understood, a predominant approach for both stochastic and adversarial MP-MAB is to let each player play the single-player MAB game while avoiding collisions for as much as possible \cite{liu2010distributed,avner2014concurrent,rosenski2016multi,besson2018multi,bande2019adversarial}.
	Recently, a pioneering work \cite{boursier2019sic} proposes to purposely instigate collisions as a way to share information between players. Such \emph{implicit communication} is instrumental in breaking the performance barrier and achieving a regret that approaches the centralized multi-play MAB \cite{anantharam1987asymptotically,komiyama2015optimal}. This idea has been extended to several variants in the stochastic setting \cite{proutiere2019optimal,tibrewal2019multiplayer,magesh2019multi,boursier2019practical} as well as adversarial MP-MAB \cite{alatur2020multi,bubeck2020non}, with improved regret performance for all models.

	All the aforementioned works make an important assumption of \emph{collision sensing} -- any collision with another player is perfectly known. Such ``collision indicator'' plays a fundamental role in both collision avoidance and forced-collision communication. It is widely recognized that a more difficult problem in MP-MAB is the \emph{no-sensing} scenario, in which players can only observe the final rewards but not collisions. The difficulty lies in that the zero rewards can indistinguishably come from collisions or null arm rewards. Recently, there is some progress on the stochastic no-sensing problem \cite{lugosi2018multiplayer,bubeck2020coordination}. In particular, the fundamental idea of {\em implicit communication} is again proved crucial in achieving regret that approaches the centralized counterpart \cite{Shi2020aistats}.

	Nevertheless, the most difficult setting of no-sensing adversarial MP-MAB in a fully decentralized setting remains wide open. To the best of the authors' knowledge, reference \cite{bubeck2020non} is the only work that achieves a sublinear regret by a collision-avoidance design (i.e., no implicit communication) where ``safe'' arms are reserved for players. However, its asymptotic regret of $O(T^{1-\frac{1}{2M}})$ is almost linear in $T$ when $M$ is large, where $M$ is the number of players and $T$ is the time horizon of the game. We note that this exponential dependence on $M$ reveals a particular {\em dimension of hardness} (multiple players) in the no-sensing adversarial MP-MAB problem.

	Recent development has repeatedly demonstrated that implicit communication is crucial in achieving lower regret. However, as pointed out in \cite{bubeck2020non}, it is unclear how to {implicitly communicate without collision information in an adversarial environment}. This paper makes progress in the no-sensing adversarial MP-MAB problem by addressing the challenges in incorporating implicit communication. In particular, this work reveals a novel dimension of hardness associated with the no-sensing adversarial MP-MAB problem: \textit{attackability} of the adversary, that is orthogonal to the multi-player dimension of hardness. Technically, we depart from the approach of \cite{bubeck2020non} which always assumes the worst possible adversary while focusing on the multi-player hardness, and analyze the relationship between the attackability hardness and implicit communications. Notably, \emph{all} possible adversaries can be classified based on this new concept of attackability, which is defined either by a local view (for a ``one-time'' attack) or a global view (for the cumulative attacks). The hardness of attackability may or may not be aware by the players, and we develop a suite of \emph{Adversary-Adaptive Collision-Communication (A2C2)} algorithms under both attackability-aware and attackability-unaware settings, which adaptively adjust the implicit communication by learning the attackability of the adversary in an online manner. All of the A2C2 algorithms utilize some (common) new elements that have not been considered before in no-sensing adversarial MP-MAB, such as an information-theoretical Z-channel model and error-correction coding, to design a forced-collision communication protocol that can effectively fight against the adversary and achieve a non-dominant communication regret in the no-sensing setting. On the other hand, for the more challenging attackability-unaware setting, we show that a simple ``escalation'' estimation of the attackability, a novel error-detection repetition code, and randomized synchronizations are crucial to handle the unknown attackability. A key idea behind algorithms in the attackability-unaware setting is that \textit{communication error is not bad if it happens to {all} players}, as such error does not affect player synchronization.

	The regret analysis of the A2C2 algorithms shows that they can achieve attackability-dependent sublinear regrets asymptotically, without an exponential dependence on the number of players as in \cite{bubeck2020non}. This benefit, however, does not lead to a universally lower regret. In fact, we may view A2C2 of this paper and the method of \cite{bubeck2020non} as operating at two different regimes in the two-dimensional hardness space (multi-player and attackability). On one hand, the $T$ terms in the regret of A2C2 algorithms are oblivious to the number of players (i.e., $M$) and the overall dependency on $M$ is only a multiplicative factor, while the regret of \cite{bubeck2020non} has an exponential dependency on $M$. On the other hand, A2C2 algorithms have exponential dependencies on the attackability, which does not affect the regret of \cite{bubeck2020non}. Philosophically speaking, the regret comparison between A2C2 and \cite{bubeck2020non} shows that {\em one can trade off the multi-player dimension of hardness with the attackability dimension of hardness}, which may provide insight into other relevant adversarial bandit problems. A comparison of the regret bounds are given in Table \ref{table:regret} for both collision-sensing and no-sensing adversarial MP-MAB algorithms.

	The rest of the paper is organized as follows. Related works are surveyed in Section~\ref{sec:related}. The no-sensing adversarial MP-MAB problem is formulated in Section~\ref{sec:problem}. The general algorithm structure is presented in Section~\ref{sec:out}, followed by algorithms for known (Section~\ref{sec:kno}) and unknown (Section~\ref{sec:unkno}) attackability. The regret analyses of all algorithms are given in Section~\ref{sec:theory}. Discussions on some algorithmic details and future research directions are given in Section~\ref{sec:dis}. Numerical illustrations and experimental results are provided in Section~\ref{sec:exp} to support the theoretical analyses. Finally, Section~\ref{sec:conc} concludes the paper.

	\begin{table*}[htb]
		\caption{Regret Bounds of Adversarial MP-MAB Algorithms }
		\begin{center}
			\begin{tabular}{|lll|}
				\hline
				\textbf{Model}  &\textbf{Reference} &\textbf{Asymptotic Bound} \\
				\hline
				Centralized Multiplayer, Optimal Regret &\cite{audibert2013regret} &$\Theta\left(\sqrt{MKT}\right)$\\
				Decentralized, Collision Sensing &\cite{bande2019adversarial} &$\tilde{O}\left(K^{M}M^{2}T^{\frac{3}{4}}\right)$ \\
				Decentralized, Collision Sensing &\cite{alatur2020multi} &$\tilde{O}\left(M^{\frac{4}{3}}K^{\frac{1}{3}}T^{\frac{2}{3}}\right)$ \\
				Decentralized, Collision Sensing, $M=2$ &\cite{bubeck2020non} &$\tilde{O}\left(K^2\sqrt{T}\right)$\\
				Decentralized, No Sensing &\cite{bubeck2020non} &$\tilde{O}\left(MK^{\frac{3}{2}}T^{1-\frac{1}{2M}}\right)$\\
				Decentralized, No Sensing, Local Attackability & $\alpha$-aware A2C2 (this work) &$\tilde{O}\left(M^{\frac{4}{3}}K^{\frac{1}{3}}T^{\frac{2+\alpha+\epsilon}{3}}\right)$ \\
				Decentralized, No Sensing, Global Attackability & $\beta$-aware A2C2 (this work) &$\tilde{O}\left(M^2K^{\frac{2}{3}}T^{\max\{\frac{1+\beta}{2}, \frac{2}{3}\}}\right)$ \\
				Decentralized, No Sensing, Local Attackability & $\alpha$-unaware A2C2 (this work) &$\tilde{O}\left(M^{\frac{4}{3}}K^{\frac{1}{3}}T^{\frac{5+\alpha+\epsilon}{6}}\right)$ \\
				Decentralized, No Sensing, Global Attackability & $\beta$-unaware A2C2 (this work) &$\tilde{O}\left(M^2K^{\frac{1}{3}}T^{\max\{\frac{2+\beta+\epsilon}{3},\frac{3}{4}\}}\right)$\\
				\hline
			\end{tabular}
		\end{center}
		\begin{center}
			$K$: number of arms; $M$: number of players with $1<M\leq K$; $\epsilon$: an arbitrarily small constant;\\
			$\alpha$: local attackability (see Corollary \ref{asp:single_attack}); $\beta$: global attackability (see Corollary \ref{asp:overall_attack}).\\
			With the notation of $\tilde{O}(\cdot)$, the logarithmic factors of $T$ and $K$ are ignored.
		\end{center}
		\vspace{-0.2in}
		\label{table:regret}
	\end{table*}

 	\section{Related Work}
 	\label{sec:related}
 	\subsection{Overall Review}
 	\mypara{Collision-sensing stochastic and adversarial MP-MAB.} As stated in Section \ref{sec:intro}, initial approaches for collision-sensing MP-MAB adopt single-player MAB algorithms with various collision-avoidance protocols. Examples include Explore-then-Commit \cite{rosenski2016multi}, UCB \cite{liu2010distributed,besson2018multi}, $\epsilon$-greedy \cite{avner2014concurrent} for stochastic MP-MAB, and EXP3 \cite{bande2019adversarial} for adversarial MP-MAB with a regret of $O(T^\frac{3}{4})$. Although these strategies achieve sublinear regret, their performance cannot approach the centralized counterparts. In particular, for the stochastic environment, there is a multiplicative factor $M$ increase in the regret coefficient of $\log(T)$ compared with the natural lower bound of centralized MP-MAB \cite{anantharam1987asymptotically,komiyama2015optimal}, which has long been considered fundamental due to the lack of explicit communication among players.

 	The idea of implicit communication with forced collisions is introduced by the SIC-MMAB algorithm \cite{boursier2019sic}, where bits $1$ and $0$ are transmitted by collision and no collision, respectively. The theoretical analysis of SIC-MMAB shows, for the first time, that the regret of decentralized MP-MAB can approach the centralized lower bound in the stochastic environment.
 	The DPE1 algorithm \cite{proutiere2019optimal} further improves the regret by combining the KL-UCB algorithm \cite{garivier2011kl} with implicit communication. Similar ideas have also been extended to other stochastic variants, such as the heterogeneous setting \cite{tibrewal2019multiplayer,boursier2019practical}, where rewards are player-dependent. For the adversarial environment, implicit communication also proves to be effective. In particular, the C\&P algorithm \cite{alatur2020multi} achieves a regret of $O(T^{\frac{2}{3}})$  by invoking forced collisions to let players coordinately perform a centralized EXP3 algorithm. The performance for two players ($M=2$) has been improved to $O(\sqrt{T\log(T)})$ in \cite{bubeck2020non} by applying a filtering strategy with bandit-type information supported by implicit communication, which approaches the lower bound of $\Theta(\sqrt{T})$ \cite{audibert2013regret}.

 	\mypara{No-sensing stochastic and adversarial MP-MAB.} No-sensing MP-MAB represents a more challenging scenario and the progress has been limited. A collision-avoidance scheme is investigated in \cite{lugosi2018multiplayer} for the stochastic environment, which cannot approach the centralized lower bound.  Some initial attempts to incorporate implicit communication in the no-sensing stochastic setting, e.g., sharing arm indices instead of statistics, are discussed in \cite{boursier2019sic}. The EC-SIC algorithm proposed in \cite{Shi2020aistats} proves that it is possible to approach the centralized lower bound even without information of collision.  For the most difficult case of no-sensing adversarial MP-MAB, progress is extremely limited. To the best of our knowledge, \cite{bubeck2020non} is the only work studying this problem, and detailed comparisons between \cite{bubeck2020non} and A2C2 are provided in the next subsection.

 	\mypara{Cooperative MP-MAB.} This is another line of MP-MAB research where explicit communications are allowed (under certain constraints) and players do not collide with each other. Such scenarios have been studied in both stochastic and adversarial environments \cite{landgren2016distributed,shahrampour2017multi,wang2019distributed,awerbuch2008competitive,cesa2016delay}, which are under a completely different framework than this work.

 	\subsection{Comparison with \cite{bubeck2020non}}
 	As the only existing work studying the no-sensing adversarial MP-MAB problem, Ref. \cite{bubeck2020non} designs a novel collision-avoidance algorithm by reserving ``safe'' arms for players. Nevertheless, this approach gives up sharing information through implicit communication. As a result, intuitively, when there are more players involved in the game, it becomes increasingly difficult to avoid collisions without sufficient coordination enabled by (implicit) communications. This issue is reflected in its achievable regret of $O(T^{1-\frac{1}{2M}})$, which has an exponential dependency on $M$.  To address this critical issue of \cite{bubeck2020non}, this paper focuses on designing the implicit communication strategies through forced collisions and proposes the concept of attackability. The resulting suite of A2C2 algorithms have only multiplicative dependencies on $M$ in their achievable regrets. However, this is accomplished by incurring additional communication regrets caused by collisions, which has exponential dependencies on the attackability.
 	Thus, we conclude that A2C2 establishes an alternative dimension in the hardness space from attackability, in additional to the original dimension from multiple players in \cite{bubeck2020non}.
	\begin{figure}[htb]
		\centering
		\includegraphics[width=0.9\linewidth]{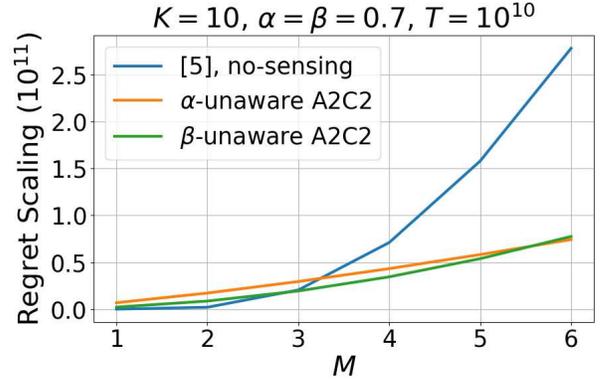}
		\caption{Multi-player dimension of hardness.}
		\label{fig:multiplayer}
	\end{figure}
	\begin{figure}[htb]
		\centering
		\includegraphics[width=0.9\linewidth]{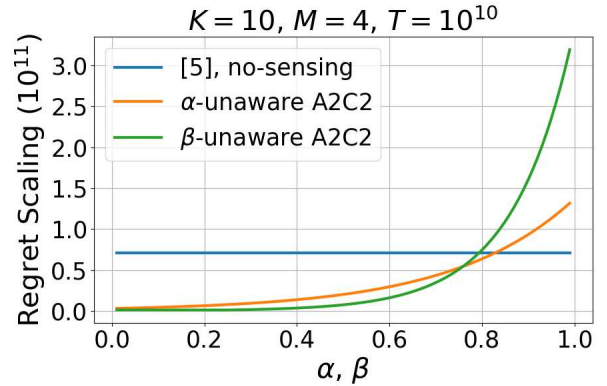}
		\caption{Attackability dimension of hardness.}
		\label{fig:attack}
	\end{figure}

    The following preview of the analytical results provides a more clear view of the two different dimensions of hardness. Fig.~\ref{fig:multiplayer} and Fig.~\ref{fig:attack} numerically illustrate the theoretical dependencies of the asymptotical regret (i.e. scaling) of A2C2 algorithms and the no-sensing algorithm in \cite{bubeck2020non} on the two dimensions of hardness. When fixing the attackability, i.e., fixing the local attackability parameter $\alpha$ to be $0.7$ or the global attackability parameter $\beta$ to be $0.7$, the regrets of $\alpha$-unaware A2C2 algorithm and the $\beta$-unaware A2C2 algorithm only increase slowly with $M$ in Fig.~\ref{fig:multiplayer}, since their $T$ terms are oblivious to $M$ and the overall dependency on $M$ is only a multiplicative factor. However, the regret of \cite{bubeck2020non} increases sharply with more players due to the exponential dependency. When $M$ is large (larger than $4$ in Fig.~\ref{fig:multiplayer}), the advantage of A2C2 algorithms is obvious. On the other hand, while fixing the number of players, the regret performance of \cite{bubeck2020non} is immune to the change of attackability. However, A2C2 algorithms have exponential dependencies on the attackability. As a result, their performances are very good when the adversary's attackability is weak or medium, but degrade quickly when the  attackability is extremely strong.

	\section{Problem Formulation}
	\label{sec:problem}

	\subsection{The no-sensing adversarial MP-MAB problem}
	\label{sec:model}

	We focus on the following decentralized no-sensing adversarial MP-MAB model. There are $K$ arms and $1<M\leq K$ players in the game\footnote{The special case of $M=1$ is addressed in Appendix \ref{app:singleplayer}.}. The arms are labeled  $1$ to $K$ and the players  $1$ to $M$, respectively. There are no \emph{explicit} communications among players, which is a key constraint of the decentralized MP-MAB problem. The time horizon $T$ is slotted and synchronized among players, and at each time step $t\in[T]$, each player $m\in[M]$ individually chooses and pulls arm $\pi_m(t)$. Simultaneously, an adversary selects loss $l_k(t)$ for each arm $k\in[K]$. The true loss $l_k(t)$ is player-independent and has a bounded support on $[0,1]$. A \emph{collision} happens if more than one player pull the same arm simultaneously. If no collision happens for player $m$ at time $t$, she receives the true loss $l_{\pi_m(t)}(t)$; otherwise, she always receives loss $1$ (i.e., the maximum loss) regardless of $l_{\pi_m(t)}(t)$. The actual loss $s_{\pi_m(t)}(t)$ received by player $m$ at time $t$ can be written as
	$$s_{\pi_m(t)}(t):=\underbrace{l_{\pi_m(t)}(t)(1-\eta_{\pi_m(t)}(t))}_{\text{no collision}}+\underbrace{\eta_{\pi_m(t)}(t)}_{\text{collision}},$$
	where $\eta_{\pi_m(t)}$ is the collision indicator defined as
	$\eta_{k}(t):=\mathds{1}\{|C_k(t)|>1\}$,
	with $C_k(t):=\{n\in[M]|\pi_n(t)=k\}$.

	If the players have access to both $s_{\pi_m(t)}(t)$ and $\eta_{\pi_m(t)}(t)$, it is a collision-sensing problem and player $m$ makes decision $\pi_m(t)$ with past information $\{\pi_m(v),s_{\pi_m(v)}(v),\eta_{\pi_m(v)}(v)\}_{v<t}$. When information of $\eta_{\pi_m(t)}(t)$ is unavailable and players only know $s_{\pi_m(t)}(t)$, the problem is a no-sensing one as considered in this paper. In this no-sensing setting, a loss $1$ can indistinguishably come from collisions or be exogenously generated by the adversary, and player $m$ makes decisions $\pi_m(t)$ only with $\{\pi_m(v), s_{\pi_m(v)}(v)\}_{v<t}$. The lack of information on the collision indicators complicates the MP-MAB problem in general \cite{boursier2019sic,Shi2020aistats}, and this challenge is more significant for the adversarial setting \cite{bubeck2020non}. Note that if $l_k(t)\neq 1$, $\forall k,t$, the no-sensing setting is equivalent to collision-sensing.

	In the adversarial MP-MAB model, the notion of regret can be generalized with respect to the best allocation of players to arms as follows \cite{alatur2020multi}:
	\begin{equation}\label{eqn:regret_def}
	    R(T):=\sum_{t=1}^T\sum_{m\in[M]}s_{\pi_m(t)}(t)-\min_{\begin{subarray}{c} k_1,...,k_M\in[K],\\ k_p\not=k_q,\forall p\not=q\end{subarray}}\sum_{t=1}^T\sum_{m\in[M]}l_{k_m}(t).
	\end{equation}
	We are interested in the \emph{expected regret} $\mathbb{E}[R(T)]$ where the expectation is with respect to the algorithm randomization.

	As shown in \cite{bubeck2020non}, one cannot obtain any non-trivial regret guarantees facing an adaptive adversary. This work thus focuses on the  \textit{oblivious} adversarial MP-MAB where the reward generation of the adversary is independent of the actions of players. Equivalently, the loss sequence is chosen by the adversary at the beginning of the game.

	\mypara{A motivating example.} The adversarial MP-MAB problem formulated in this section captures the key characteristics of a cognitive radio system, where arms correspond to channels, and players represent the distributed devices trying to communicate over the channels. The adversarial loss captures the unpredictable time-varying channel quality, e.g., due to bursty interference \cite{TV:05}. When two (or more) users use the same channel simultaneously, a packet collision happens and both of their attempted communications fail (i.e., a loss of $1$). This is a commonly used model for shared wireless medium, e.g., the CSMA protocol in Wi-Fi \cite{CSMA2012}.
	Regarding whether the collision is perceivable or not (i.e., collision-sensing or no-sensing), it is determined by the communication protocol and the sensing capabilities of devices. As stated in \cite{bonnefoi2017multi,lugosi2018multiplayer,besson2018multi}, the no-sensing setting is more suited to large scale Internet-of-Things (IoT) applications. Furthermore, minimizing the regret defined in Eqn.~\eqref{eqn:regret_def} is equivalent to minimizing the cumulative communication loss over the chosen channels, which is a meaningful metric for practical systems. As a final remark, since the changing of channel quality is typically caused by external factors such as the dynamic radio frequency environment instead of user actions, the assumption of an oblivious adversary is reasonable in the application of cognitive radio.

	\subsection{Attackabilities of the adversary}
	\label{sec:advattack}
	To explore the idea of forced-collision communication in the no-sensing adversarial setting, the overall horizon $T$ is divided into the exploration and communication phases, similar to the approaches in collision-sensing settings \cite{boursier2019sic, alatur2020multi}. Information is shared by purposely created collisions in the communication phases to maintain synchronization and coordination between players in the subsequent exploration phases. However, in the no-sensing setting, loss $1$ assigned by the adversary can be viewed as a certain ``attack'', since players have no knowledge whether it comes from the adversary or collision. Such loss-$1$ attack has very different impacts on the regret in different phases:
	\begin{itemize}
	    \item \textbf{Exploration phase.} The loss-$1$ attack does not negatively affect the exploration phase if the preceding communication phase is successful, as it would not ruin the synchronization among players. Especially, with successful synchronization, strategies can be designed to better allocate arms and minimize the regret during exploration phases.
	    \item \textbf{Communication phase.} The loss-$1$ attack in a communication phase requires special attention, as it may lead to communication errors for players, which jeopardize the essential synchronization among them and lead to a potential \emph{linear} regret due to collisions in the subsequent exploration phase, as illustrated in Fig.~\ref{fig:phase}.
	\end{itemize}
	Because of these different impacts, the worst-case adversarial attack scenario can have all-one loss sequences for all communication phases, which prevents any information sharing among players and may cause a linear loss over the entire $T$ time slots.

	\begin{figure}[bth]
		\centering
		\includegraphics[width=0.9 \linewidth]{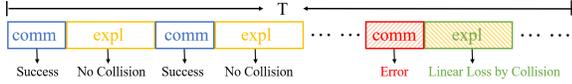}
		\caption{Illustration of communication and exploration phases with communication errors.}
		\label{fig:phase}
	\end{figure}

	From the previous studies in the stochastic setting \cite{boursier2019sic,Shi2020aistats}, it is clear that any bandit policy attempting to enable forced-collision communication in the no-sensing setting will have a dependency on the environment's ability to ``attack'' such communications. This naturally requires bounding the worst-case loss in communications. In the stochastic settings, this ability is characterized by a positive lower bound $\mu_{\min}$ such that $0<\mu_{\min}\leq \min_{k\in[K]}\mu_{k}$, where $\mu_k$ is the mean of arm $k$'s rewards. For example, such a lower bound is assumed to exist and be known to all players in \cite{boursier2019sic,Shi2020aistats}.

	Analogous to the role of $\mu_{\min}$ in the stochastic MP-MAB models, we propose a new metric to characterize the adversarial environment, called the adversary's \emph{attackability}, which represents the upper bound on the adversary's mechanism in generating loss $1$'s. This is a notable distinction to \cite{bubeck2020non} where no communication is utilized in the no-sensing setting, and thus modeling the adversary's attackability is not necessary. More specifically, in this work, we define two types of attackabilties: the \emph{local attackability} and the \emph{global attackability}, which provide two different ways to \emph{categorize} all the adversaries as detailed in this section.

	First, the local attackability aims at modeling the one-time worst-case attack. In a communication phase shown in Fig.~\ref{fig:phase}, the worst case is for this phase to see all loss $1$'s from the adversary, because no information can be reliably shared in such situation. In other words, the local attackability is captured by the maximum length of contiguous loss $1$'s assigned on the loss sequence, since it represents the longest duration that no reliable communication can happen. Without loss of generality, the local attackability is defined as follows.

	\begin{definition}\label{def:single_attack}
		For time horizon $T$, the \emph{local attackability} $W(T)$ of the adversary is defined as
		 $W(T) = \max_{k\in[K]}n_1^k(T)$,
		where $n_1^k(T)$ denotes the maximum length of the all-one loss sequences that are assigned by the adversary on arm $k$ throughout the $T$ time slots.
	\end{definition}

	It is clear that every possible adversary is featured with a $W(T)\in [0,T]$. In the case that the adversary has $W(T)=\Omega(T)$, it may (asymptotically) attack at all time slots. If $W(T)$ of an adversary is a constant independent of $T$, i.e. $O(1)$, the attack is finite each time. A more general case lies in between these two extremes and any possible adversary can be characterized by the local attackability parameter $\alpha$ as follows.
	\begin{corollary}\label{asp:single_attack}
		For any given adversary, there exists $\alpha \in [0,1]$ such that the local attackability of this adversary satisfies $W(T)\leq O\left(T^{\alpha}\right)$.
	\end{corollary}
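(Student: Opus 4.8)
The plan is to read this off directly from Definition~\ref{def:single_attack}, treating the statement as a re-parametrization of $W(T)$ by its polynomial growth rate. First I would note that $W(T)$ is always sandwiched between $0$ and $T$: by definition $W(T) = \max_{k \in [K]} n_1^k(T)$, and each $n_1^k(T)$ is the length of a contiguous block of time slots contained in $\{1,\dots,T\}$, so $0 \le n_1^k(T) \le T$ and hence $0 \le W(T) \le T$.

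Given this bound, the crude choice $\alpha = 1$ already yields $W(T) \le T = T^{\alpha}$, so an admissible $\alpha \in [0,1]$ always exists, which is all the statement literally asserts. To pin down the tightest such exponent, I would instead set
$$\alpha := \limsup_{T \to \infty} \frac{\log\bigl(\max\{W(T),1\}\bigr)}{\log T},$$
which, because $1 \le \max\{W(T),1\} \le T$ for $T \ge 2$, necessarily lies in $[0,1]$. By the definition of $\limsup$, for every $\epsilon > 0$ there is a horizon beyond which $W(T) \le T^{\alpha+\epsilon}$, hence $W(T) = O(T^{\alpha+\epsilon})$; one then either folds the arbitrarily small slack $\epsilon$ into the claim (this is exactly the role that $\epsilon$ plays in the regret bounds of Table~\ref{table:regret}), or, when the exact exponent is desired, takes $\alpha$ to be any element of the nonempty set $\{c \ge 0 : W(T) = O(T^c)\}$, which contains $c = 1$.

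There is no genuine obstacle here; the only subtlety is the treatment of polylogarithmic or lower-order factors, since e.g.\ $W(T) = \Theta(\sqrt{T}\,\log T)$ has growth exponent exactly $1/2$ yet is not literally $O(T^{1/2})$. This is absorbed by the $\epsilon$ slack as above, or by choosing $\alpha$ to be an exponent for which the big-$O$ bound genuinely holds rather than the infimal one. What matters is the conceptual payoff: every oblivious adversary is classified by a single local-attackability exponent $\alpha \in [0,1]$, with $\alpha$ arbitrarily close to $0$ exactly for adversaries whose longest loss-$1$ streaks grow sub-polynomially (e.g.\ polylogarithmically) in $T$, and $\alpha = 1$ for adversaries that can essentially attack at every time slot.
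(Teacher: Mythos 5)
Your argument is correct and matches the paper's own (implicit) justification: the paper simply observes that every adversary has $W(T)\in[0,T]$, so $\alpha=1$ always witnesses the claim, exactly as in your first paragraph. Your additional $\limsup$ construction of the tightest exponent and the remark on polylogarithmic slack go beyond what the paper states but are consistent with how the $\epsilon$ parameter is used in the regret bounds.
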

	The local attackability captures the one-time ``budget'' for the adversary attack. The adversaries sharing the same parameter $\alpha$ can be viewed as in the same category.

	Another perspective is to consider the overall attacks over $T$ as the \emph{global attackability}. It captures the total amount of loss $1$'s assigned on one arm, and is defined as follows.

	\begin{definition}\label{def:overall_attack}
		For given time horizon $T$, the \emph{global attackability} $V(T)$ of the adversary is defined as
		   $V(T) = \max_{k\in[K]}N_1^k(T)$,
		where $N_1^k(T)$ represents the total number of losses $1$ that are assigned by the adversary on arm $k$ throughout the $T$ time slots.
	\end{definition}
	Similar to $W(T)$, each possible adversary is featured with a $V(T)\in [0,T]$. If an adversary has $V(T)=\Omega(T)$, similar to $W(T)=\Omega(T)$, it may (asymptotically) attack at all time slots and no successful communication can happen. If $V(T)$ of an adversary is a constant independent of $T$, i.e., $O(1)$, the overall attacks are finite, which is negligible asymptotically and equivalent to the collision-sensing setting. Similarly, the adversaries can also be categorized with their global attackability parameter $\beta$ as follows.

	\begin{corollary}\label{asp:overall_attack}
		For any given adversary, there exists $\beta \in [0,1]$ such that the global attackability of this adversary satisfies $V(T) \leq O\left(T^{\beta}\right)$.
	\end{corollary}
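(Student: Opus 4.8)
The plan is to derive the statement directly from Definition~\ref{def:overall_attack}; the claim is essentially a bookkeeping observation, and the only real content is exhibiting an admissible exponent. First I would note that for every arm $k\in[K]$ and every horizon $T$, the count $N_1^k(T)$ ranges over a subset of the $T$ available time slots, so $N_1^k(T)\leq T$. Taking the maximum over $k$ yields $V(T)=\max_{k\in[K]}N_1^k(T)\leq T=T^1$. Hence $\beta=1\in[0,1]$ satisfies $V(T)\leq O(T^{\beta})$ with hidden constant $1$, which already proves the corollary. The identical one-line argument establishes Corollary~\ref{asp:single_attack}: a maximal run of contiguous loss $1$'s also occupies at most $T$ slots, so $W(T)\leq T$ and $\alpha=1$ works.

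Second, to make $\beta$ useful as the intended device for categorizing adversaries rather than just ``some exponent'', I would also pin down the smallest admissible value. Set
\[
\beta^{\star}:=\limsup_{T\to\infty}\frac{\log\big(\max\{V(T),1\}\big)}{\log T}.
\]
Because $1\leq\max\{V(T),1\}\leq T$, the ratio lies in $[0,1]$ for all $T\geq 2$, so $\beta^{\star}\in[0,1]$ is well-defined. For any $\epsilon>0$, the definition of $\limsup$ provides a threshold beyond which $\max\{V(T),1\}\leq T^{\beta^{\star}+\epsilon}$, i.e. $V(T)\leq O\big(T^{\min\{\beta^{\star}+\epsilon,\,1\}}\big)$, so the exponent can be taken arbitrarily close to $\beta^{\star}$. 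This also recovers the two extremes highlighted in the text: $V(T)=\Omega(T)$ forces $\beta^{\star}=1$ (attack at essentially all slots), while $V(T)=O(1)$ gives $\beta^{\star}=0$ (asymptotically negligible attack, equivalent to the collision-sensing setting).

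Since the argument is elementary, there is no genuine obstacle. The only points requiring mild care are the $O(\cdot)$ bookkeeping (a multiplicative constant and a starting index $T_0$, both trivial here) and the standard $\limsup$ caveat that $V(T)=O(T^{\beta^{\star}})$ need not hold at exactly $\beta=\beta^{\star}$ — which is immaterial for the corollary as stated, since the worst-case choice $\beta=1$ is always valid.
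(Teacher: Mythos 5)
Your proposal is correct and matches the paper's (implicit) justification: the paper offers no formal proof beyond observing that every adversary has $V(T)\in[0,T]$, which is exactly your first step showing $\beta=1$ always works. Your additional $\limsup$ construction of the minimal exponent $\beta^{\star}$ is a harmless refinement that aligns with the paper's stated intent of using $\beta$ to categorize adversaries, but it is not needed for the corollary as stated.
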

	Similar to local attackability, the adversaries sharing the same parameter $\beta$ can be viewed as in the same category, where the overall ``budget'' for the adversary attacks is of the same order. However, note that since the local attackability parameter $\alpha$ in Corollary \ref{asp:single_attack} does not provide any bound on the overall attack budget, it is more stringent than the global attack parameter $\beta$ in Corollary \ref{asp:overall_attack}. For an adversary satisfying a sublinear local attackability, it is possible to have a global attackability of $\Omega(T)$. For example, the adversary with loss sequences of $0,1,0,1, \cdots$ for all arms satisfies the local attackability of $W(T)=O(1)$, i.e., $\alpha = 0$, but also satisfies the global attackability of $V(T)=\Omega(T)$, i.e., $\beta=1$. Finally, we note that the following sections consider either local or global attackability, but not simultaneously.

	It is important to keep in mind that Corollaries~\ref{def:single_attack} and \ref{def:overall_attack} represent two ways of categorizing the adversaries rather than imposing constraints or requirements on them. Each category still has many adversaries as long as their scalings of attackability are the same, and every possible adversary is in a certain category. In addition, as shown in the subsequent sections, such categorization does not even need to be aware by the players. In this scenario, we do not impose more assumptions than \cite{bubeck2020non}. Rather, the attackability view represents a different angle of the {\em same} no-sensing adversarial MP-MAB problem, and the proposed algorithms can adapt to the varying attackability in an automatic way, based on the perceived category of the adversary that it faces.

	A final remark is that the proposed concepts of attackability are closely related to the different modes of external interference sources in the application of cognitive radio. Specifically, the concept of local attackability is suitable for the low-duty-cycle interference source as it captures the essence of burst interference. On the other hand, global attackability focuses on measuring the overall interference and thus is more appropriate to model an interference source with a median or high duty cycle, or the cumulative interference from many low-duty-cycle sources.

	\section{Algorithm Outline}\label{sec:out}
	All the algorithms proposed in this paper have two different phases: exploration phases and communication phases, and share a common leader--follower structure \cite{boursier2019practical,alatur2020multi}. Player $1$ (leader) determines arm assignments for the remaining players (followers). The arm assignment is transmitted to each follower in the communication phases. Then, in the following exploration phases, all the players keep sampling the assigned arms. This section introduces the arm assignment procedure for the exploration phases and gives a brief introduction of the communication phases, which will be separately discussed in the following sections for different attackability scenarios.

	\subsection{Exploration Phase}\label{sec:arm_assign}
	Assuming explicit communications are allowed, i.e., in the centralized model, the challenge of exploration phases is how to choose $M$ arms to explore for all the $M$ players. We note that this is similar to the adversarial multi-play problem, where the leader (the centralized agent) chooses $M$ arms $A(t)=\{A_1(t),...,A_M(t)\}$ at each time step $t$ for the followers to explore, i.e., player $m$ is assigned with arm $A_m(t)$. As commonly adopted in the multi-play setting \cite{uchiya2010algorithms}, each subset of $M$ distinct arms $\{A_1,...,A_M\}$ is viewed as a single meta-arm $A$ to be chosen by the leader. The set of all meta-arms $\mathcal{K}$ is defined as:
	$$
	\mathcal{K}:=\left\{\{A_1,...,A_M\}\subseteq[K]|A_m \not=A_n \text{ for any } m\not=n \right\}.
	$$

	An arm assignment policy that builds on \cite{alatur2020multi} is proposed in this paper. At time $t-1$, players first explore the assigned arms in $A(t-1)$. Then, the leader updates an unbiased loss estimator for each arm $k\in[K]$ as $\tilde{l}_k(t-1)=M  \frac{\hat{l}_{A_1(t-1)}(t-1)}{\sum_{A:k\in A\in \mathcal{K}}P_A(t-1)}  \mathds{1}\{k=A_1(t-1)\}$, where $P_A(t-1)$ is the probability that meta-arm $A$ is chosen at time $t-1$ and $\hat{l}_{A_1(t-1)}(t-1)$ is the loss that the leader observes on her arm $A_1(t-1)$ at time $t-1$. Note that the update only requires past observations from the leader, which is designed to reduce the communication burden and to facilitate the generalization to the decentralized setting\footnote{We note that the recent advance \cite{proutiere2019optimal} in stochastic MP-MAB proves that using information collected by only one leader is sufficient to have an optimal regret behaviour. However, it is unclear whether the similar argument holds for adversarial MP-MAB, which may be an interesting direction for future research.}.

	For time $t$, the cumulative loss estimator $\tilde{L}_A(t)$  for each meta-arm $A\in\mathcal{K}$ is first updated as the sum of the loss estimations of its elementary arms up to time $t-1$, i.e., $\tilde{L}_A(t)=\sum_{\upsilon=1}^{t-1}\sum_{k\in A}\tilde{l}_k(\upsilon)$. Then, the EXP3 algorithm \cite{auer2002nonstochastic} is applied to the meta-arm MAB problem, so that each meta-arm $A\in\mathcal{K}$ is sampled with a probability $P_A(t)$ which is proportional to $\exp(-\eta\tilde{L}_A(t))$, as the exploration meta-arm $A(t)$ for time slot $t$. The loss estimator $\{\tilde{l}_k(t)\}_{k\in[K]}$ is then again updated after pulling the chosen meta-arm. At time $t+1$, the same procedures are performed to get $\{\tilde{L}_A(t+1)\}_{A\in\mathcal{K}}$, $\{P_A(t+1)\}_{A\in\mathcal{K}}$ and $A(t+1)$. As shown in \cite{alatur2020multi} and Appendix \ref{app:cent}, this algorithm guarantees a regret bound of $2M\sqrt{K\log(K)T}$ when $\eta=\sqrt{\log\tbinom{K}{M}/MKT}$.

	Furthermore, since there are $|\mathcal{K}|=\tbinom{K}{M}$ meta-arms, computing the probability $P_A(t)$ for each meta-arm and the marginal probability $\sum_{A:k\in A\in\mathcal{K}}P_A(t)$ for an arm $k$ that is to be updated would lead to an exponential complexity if it is done naively. However, with a concept called K-DPPs \cite{taskar2013determinantal}, sampling and marginalization can be made more efficient. As shown in\cite{alatur2020multi}, the complexity of sampling a meta-arm and computing the marginal probability for a fixed arm can both be reduced to $O(KM)$ with K-DPPs, which makes it less complex for implementation.

	Lastly, with the centralized algorithm described above, the key adjustment to the decentralized setting is to notify followers of their assigned arms by forced-collision communications. However, to avoid a linear communication regret due to frequently updating, the exploration phase is extended from one time slot to $\tau$ slots. This means each player is fixated on one arm for at least $\tau$ slots, and the update happens only after each exploration phase. The leader then uses her samples of losses observed during this entire exploration phase as the feedback to assign arms for the next phase. Note that although this infrequent switching reduces the communication burden, it also degrades the regret guarantees \cite{arora2012online}; we will elaborate on this aspect in the analysis. When there is no ambiguity, the time variable in $A(t)$, $P_A(t)$, $\hat{l}_{k}(t)$, $\tilde{l}_{k}(t)$ and $\tilde{L}_A(t)$ are replaced by the corresponding phase index as $A(p)$, $P_A(p)$, $\hat{l}_{k}(p)$, $\tilde{l}_{k}(p)$ and $\tilde{L}_A(p)$ under the decentralized setting for the $p$-th phase.

	\subsection{Communication Phase}
	In the communication phases, arm assignments are transmitted from the leader to the followers with forced collisions. Functions \texttt{Send()} and \texttt{Receive()} are used in the algorithm description for the sending and receiving procedure with forced collisions. Every player is first assigned with a unique communication arm corresponding to her index, i.e., arm $m$ for player $m$. In the collision-sensing setting, players only need to take predetermined turns to communicate by having the ``receive'' player sample her own communication arm and the ``send'' user either pull (create collision; bit $1$) or not pull (create no collision; bit $0$) the receive player's communication arm to transmit one-bit information. For a player that is not engaged in the current peer-to-peer communication, she keeps pulling her communication arm to avoid interrupting other ongoing communications. Since the collision indicator is perfectly known in the collision-sensing setting, player $m$ can receive error-free information after implicit communication.

	In the more challenging no-sensing setting, there is no information about the collision indicator, which means attacks, i.e., loss $1$ assigned by the adversary, may cause communication errors and incur a linear regret in the subsequent exploration phase as shown in Fig.~\ref{fig:phase}. The no-sensing settings are discussed under four different scenarios (two with knowledge of attackability while the other two without such knowledge) in the following sections, respectively\footnote{The discussions focus on the no-sensing setting, but the proposed algorithms can be easily modified to cover the collision-sensing setting, with details provided in Appendix~\ref{app:singleplayer}.}.
	\begin{itemize}
	    \item \textbf{$\alpha$-aware and $\beta$-aware.} Section \ref{subsec:a_aware} (resp. \ref{subsec:b_aware}) proposes the $\alpha$-aware (resp. $\beta$-aware) A2C2 algorithm with Corollary \ref{asp:single_attack} (resp. \ref{asp:overall_attack}) and players having knowledge of $\alpha$ (resp. $\beta$).
	    \item \textbf{$\alpha$-unaware and $\beta$-unaware.} Similar to the above but players have no knowledge of $\alpha$ (resp. $\beta$). These are reported in Section \ref{subsec:a_unaware} and Section \ref{subsec:b_unaware}, respectively. This is the more challenging case, and the main focus of this work.
	\end{itemize}

	\section{Attackability Known to Players}
	\label{sec:kno}
	In this section, the attackability is assumed to be perfectly known by all players, but such information does not tell the players how and when the attacks would happen. The two definitions of attackability lead to two algorithms, which also serve as the building blocks of subsequent algorithm designs on the attackability-unaware setting.

	\subsection{$\alpha$-aware}\label{subsec:a_aware}
	Although the local attackability is theoretically more stringent than the global attackability, it is relatively easier to handle.
	The $\alpha$-aware A2C2 is presented in Algorithm \ref{alg:a_aware_leader} (leader) and Algorithm \ref{alg:a_aware_follower} (follower). Two information-theoretic concepts that are intimately related to reliable communications but less utilized in the bandit literature are introduced; similar ideas are also adopted in algorithm designs under other scenarios.

	\textbf{Z-channel model.} There exists an asymmetry in collision communication of MP-MAB: bit $1$ (collision) is always received correctly, while bit $0$ (no collision) can be potentially corrupted by a loss $1$ from the adversary. In other words, the adversary attack is \emph{asymmetric} -- she can attack bit 0 but not bit 1. From an information-theoretic point of view, this corresponds to a {Z-channel} model \cite{tallini2002capacity} as shown in Fig.~\ref{fig:z_channel}. We note that this connection to the Z-channel model was first utilized in \cite{Shi2020aistats} to study stochastic no-sensing MP-MAB. A key challenge in the adversary setting as compared to \cite{Shi2020aistats}, however, is that a \emph{fixed} crossover probability does not exist.

    \begin{figure}
	\centering
	\includegraphics[width=0.9\linewidth]{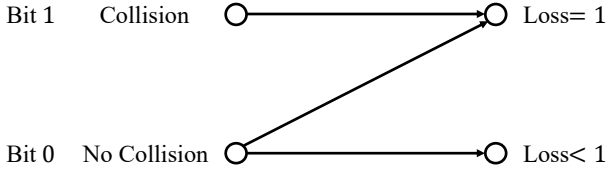}
	\caption{The Z-channel model for implicit communications in no-sensing adversarial MP-MAB.}
	\label{fig:z_channel}
    \end{figure}

	\textbf{Error-correction code with long blocklength.}  The idea of utilizing error-correction code naturally arises under the formulation of a Z-channel model. With the knowledge of $\alpha$, the key idea to overcome the full attackability is to ``overpower'' the adversary via codes that have sufficient error-correction capabilities \cite{LinBook}. Different error-correction codes, ranging from simple repetition code to more complex (and powerful) algebraic and nonlinear codes, can be adopted in the proposed algorithms. To facilitate the regret analysis, we choose to use the repetition code \cite{chen2013optimal} and functions \texttt{rEncoder()} and \texttt{rDecoder()} are used in the algorithms as the encoder and decoder\footnote{Note that more advanced codes can be used, but our regret analysis shows that the regret scaling is not affected.}. At the encoder, each information bit is expanded to a string of length $h(T,\alpha+\epsilon)=\Theta(T^{\alpha+\epsilon})=\omega(T^{\alpha})$, where $\epsilon > 0$ is a fixed constant which can be arbitrarily small\footnote{$f(T)=\Theta(g(T))$ iff $f(T)=O(g(T))$ and $f(T)=\Omega(g(T))$; $f(T)=\omega(g(T))$ iff $\liminf_{T\to\infty}\frac{f(T)}{g(T)}=\infty$.}. Then, the coded bits are sent via forced collisions. If the received bit sequence of length $h(T,\alpha+\epsilon)$ is all-one, then the decoder outputs bit $1$. Otherwise, it outputs bit $0$ (i.e., as long as there exists at least one received bit 0 in the sequence). With $h(T,\alpha+\epsilon) = \omega(T^{\alpha})$, which implies $h(T,\alpha+\epsilon) = \omega(W(T))$, this error-correction code is guaranteed to overpower the local attackability \emph{asymptotically}. Thus, all the communications are guaranteed to be successful \emph{asymptotically}.

		\begin{algorithm}[htb]
		\small
		\caption{$\alpha$-aware A2C2: Leader}
		\label{alg:a_aware_leader}
		\begin{algorithmic}[1]
			\Require $M$, $K$, $T$
			\State \textbf{Initialize:} $\tau\gets \lceil M^{\frac{2}{3}}K^{-\frac{1}{3}}\log(K)^{\frac{1}{3}}T^{\frac{1+2\alpha+2\epsilon}{3}}\rceil; \eta\gets\sqrt{\log\tbinom{K}{M}\tau/ MKT}$
			\For{$p=1,2,...$}
			\State $\forall A\in\mathcal{K}, \tilde{L}_{A}(p)\gets\sum_{v=1}^{p-1}\sum_{k\in A}\tilde{l}_k(v)$
			\State $\forall A\in\mathcal{K}, P_A(p)\gets\frac{e^{-\eta \tilde{L}_A(p)}}{\sum_{J\in\mathcal{K}}e^{-\eta \tilde{L}_{J}(p)}}$\Comment{Loss estimator}
			\State Choose $A(p)=\{A_1(p),...,A_M(p)\}$ with $P_A(p)$
			\State Randomly permute $A(p)$ into $\tilde{A}(p)$
			\Statex $\triangleright$ \textbf{Communication Phase:}
			\State $\forall m\in [M]$, $\text{msg}_m\gets \text{rEncoder}(\tilde{A}_m(p), h(T,\alpha+\epsilon))$
			\State $\forall m\in [M]$, Send$\left(m,\text{msg}_m\right)$\Comment{Send Assignment}
			\Statex $\triangleright$ \textbf{Exploration Phase:}
			\State Stay on arm $\tilde{A}_1(p)$ for $\tau$ time steps
			\State Record cumulative loss $\hat{l}_{\tilde{A}_1(p)}(p)$\Comment{Exploration}
			\State $\forall k\in [K]$, $\tilde{l}_k(p)\gets\frac{M}{\tau}\frac{\hat{l}_{\tilde{A}_1 (p)}(p)}{\sum_{A: k\in A\in \mathcal{K}}P_{A}(p)}\mathds{1}\{\tilde{A}_1 (p)=k\} $
			\EndFor
		\end{algorithmic}
	\end{algorithm}
	\vspace{-0.2in}
	\begin{algorithm}[htb]
		\small
		\caption{$\alpha$-aware A2C2: Follower }
		\label{alg:a_aware_follower}
		\begin{algorithmic}[1]
			\Require{$M$, $K$, $T$, index $m$}
			\State \textbf{Initialize:} $\tau\gets \lceil M^{\frac{2}{3}}K^{-\frac{1}{3}}\log(K)^{\frac{1}{3}}T^{\frac{1+2\alpha+2\epsilon}{3}}\rceil; \eta\gets\sqrt{\log\tbinom{K}{M}\tau/MKT}$
			\For{$p=1,2,...$}
			\Statex $\triangleright$ \textbf{Communication Phase:}
			\State $\text{msg}_m\gets \text{Receive}\left(h(T,\alpha+\epsilon)\right)$\Comment{Receive Assignment}
			\State $\tilde{A}_m(p)\gets$ rDecoder$\left(\text{msg}_m, h(T,\alpha+\epsilon)\right)$
			\Statex $\triangleright$ \textbf{Exploration Phase:}
			\State Stay on arm $\tilde{A}_m(p)$ for $\tau$ time steps \Comment{Exploration}
			\EndFor
		\end{algorithmic}
	\end{algorithm}

	\subsection{$\beta$-aware}\label{subsec:b_aware}
	In the $\alpha$-aware case, although the local attackability is bounded, the adversary can attack \emph{arbitrarily many} times. This is the fundamental reason why each of the communication phases needs protection. However, with an upper bound of global attackability, the adversary has an overall budget for attacking rather than a one-time budget. The $\alpha$-aware A2C2 algorithm can still be applied by replacing $\alpha$ with $\beta$, but it is an overkill to prevent the global attackability in {\em every} communication phase.

	A more efficient algorithm, $\beta$-aware A2C2, is proposed with a holistic consideration for the total budget of the adversary. Details of $\beta$-aware A2C2 can be found in Appendix \ref{app:b_aware}, and we only highlight the key design philosophies here. Because of the iteration between exploration and communication phases, if the adversary succeeds in attacking one communication phase, the immediately following exploration phase of length $\tau$ could be out of synchronization. However, after time $\tau$, players enter the communication phase again and a new iteration starts. In other words, attacks in one communication phase can only cause a one-time linear loss in the next (finite) $\tau$ time slots, while the adversary has a reduced total budget for future attacks. By the time that the adversary runs out of budget, no more communication errors can happen. Thus, as long as the loss caused by the global attackability does not dominate the regret, a certain amount of errors are tolerable. This is a key observation because it means that the power of error-correction coding does not need to be too strong. Technically, instead of coding with a long blocklength of $h(T,\alpha+\epsilon)=\Theta(T^{\alpha+\epsilon})$, a much shorter coded length of $k(T,\nu)=\Theta(T^{\nu})$, where $\nu=\max\{\frac{3\beta-1}{2}, 0\}$, is sufficient of achieving a sublinear regret that is better than $\alpha$-aware A2C2. In fact, a closer look reveals a very surprising result: for $\beta\leq \frac{1}{3}$, we have $\nu=0$, which means there is no need for coding at all in communication phases (see Section~\ref{sec:theory} for details).

	\section{Attackability Unknown to Players}
	\label{sec:unkno}
	In this section, the assumption of the knowledge of attackability is removed. No information on the parameter $\alpha$ or $\beta$ is revealed to the players. The $\alpha$-unaware and $\beta$-unaware settings are tackled separately in the subsequent subsections.

	\subsection{$\alpha$-unaware}\label{subsec:a_unaware}

	With no information of $\alpha$, the main difficulty lies in how to prepare for the worst case without incurring a linear loss. All the key features in $\alpha$-aware A2C2 are still applied in the adaptive algorithm called $\alpha$-unaware A2C2, but several new ideas are needed: an error-detection code to estimate $\alpha$, and a synchronization procedure with randomized length to synchronize the estimation update among players. The algorithms for the leader and followers are presented in Algorithms \ref{alg:a_unaware_leader} and \ref{alg:a_unaware_follower}, respectively.

	\textbf{Estimation of $\alpha$.} Without the knowledge of $\alpha$, no effective prevention is possible for communications in the worst-case scenario. We propose to adaptively estimate $\alpha$ in an escalation fashion. The interval $[0,1]$ (support of $\alpha$) is uniformly divided into sub-intervals with length $\epsilon$, where $\epsilon>0$ is an arbitrarily small constant. The estimated value $\alpha'$ starts with $\alpha'=0$, and increases with a step size of $\epsilon$ while a communication failure is observed until the upper limit is reached. As we see in the regret analysis, this seemingly naive estimation works very well.

	\begin{algorithm}[htb]
	\small
	\caption{$\alpha$-unaware A2C2: Leader}
	\label{alg:a_unaware_leader}
	\begin{algorithmic}[1]
		\Require $M$, $K$, $T$
		\State \textbf{Initialize:} $\alpha$ estimation: $\alpha'\gets 0$; error flag: $F\gets0$
		\For{$p=1,2,...$}
		\State $\tau\gets \lceil M^{\frac{2}{3}}K^{-\frac{1}{3}}\log(K)^{-\frac{1}{3}}T^{\frac{2+\alpha'}{3}}\rceil$
		\State $\eta\gets\sqrt{\log\tbinom{K}{M}\tau/MKT}; \xi\gets\frac{1-\alpha'}{2}$; $F\gets 0$
			\State $\forall A\in\mathcal{K}, \tilde{L}_{A}(p)\gets \sum_{v=1}^{p-1}\sum_{k\in A}\tilde{l}_k(v)$
			\State $\forall A\in\mathcal{K}, P_A(p)\gets \frac{e^{-\eta \tilde{L}_A(p)}}{\sum_{J\in\mathcal{K}}e^{-\eta \tilde{L}_{J}(p)}}$ \Comment{Loss estimator}
			\State Choose $A(p)=\{A_1(p),...,A_M(p)\}$ with $P_A(p)$
			\State Randomly permute $A(p)$ into $\tilde{A}(p)$
			\Statex $\triangleright$ \textbf{Communication Phase:}
			\State $\forall m\in [M]$, $\text{msg}_m\gets \text{eEncoder}(\tilde{A}_m(p), h(T,\alpha'))$
			\State $\forall m\in [M]$, Send$\left(m, \text{msg}_m\right)$ \Comment{Send Assignment}
			\For{$q=1,2,..., N(\xi)$}\Comment{Synchronization}
			\State $\text{msg}_F\gets \text{Receive}\left(h\left(T,\alpha'\right)\right)$
			\State $F \gets$ rDecoder$\left(\text{msg}_F, h\left(T,\alpha'\right)\right)$ \Comment{Uplink: $F=0/1$}
			\State $\text{msg}_F\gets \text{rEncoder}\left(F, h(T,\alpha')\right)$
			\State $\forall m\in [M]$, Send$\left(m,\text{msg}_F\right)$\Comment{Downlink: $F=0/1$}
			\EndFor
			\State $\alpha'\gets\alpha'+F\epsilon$\Comment{Update Estimation}
			\Statex $\triangleright$ \textbf{Exploration Phase:}
			\State Stay on arm $\tilde{A}_1(p)$ for $\tau$ time steps\;
			\If{$F=0$}
			\State Record cumulative loss $\hat{l}_{\tilde{A}_1(p)}(p)$
			\State
				$\forall k\in[K], \tilde{l}_k(p)\gets\frac{M}{\tau}\frac{\hat{l}_{\tilde{A}_1 (p)}(p)}{\sum_{A:k\in A\in \mathcal{K}}P_A(p)} \mathds{1}\{\tilde{A}_1 (p)=k\}$
			\Else \ $\forall k\in[K], \tilde{l}_k(p)\gets 0$
			\EndIf
		\EndFor
	\end{algorithmic}
	\end{algorithm}

	\begin{algorithm}[htb]
	\small
	\caption{$\alpha$-unaware A2C2: Follower}
	\label{alg:a_unaware_follower}
	\begin{algorithmic}[1]
		\Require $M$, $K$, $T$, index $m$
		\State \textbf{Initialize:} $\alpha$ estimation: $\alpha'\gets 0$; error flag: $F\gets0$; exploration set: $\mathcal{S}\gets \emptyset$
		\For{$p=1,2,...$}
		\State $\tau\gets \lceil M^{\frac{2}{3}}K^{-\frac{1}{3}}\log(K)^{-\frac{1}{3}}T^{\frac{2+\alpha'}{3}}\rceil$
		\State $\eta\gets\sqrt{\log\tbinom{K}{M}\tau/MKT}$; $\xi\gets\frac{1-\alpha'}{2}$; $F\gets 0$
			\Statex $\triangleright$ \textbf{Communication Phase:}
			\State $\text{msg}_m \gets \text{Receive}\left(h(T,\alpha')\right)$
			\State $\mathcal{S}\gets$ eDecoder$\left( \text{msg}_m, h(T,\alpha')\right)$\Comment{Receive Assignment}
			\State Randomly choose in $\mathcal{S}$ for $\tilde{A}_m(p)$
			\State $F \gets \mathds{1}\left\{|\mathcal{S}|>1\right\}$\Comment{Comm Error or not}
			\For{$q=1,2,..., N(\xi)$}
				\State $\text{msg}_F\gets \text{rEncoder}\left(\text{F}, h(T,\alpha')\right)$
				\State Send$(1, \text{msg}_F)$\Comment{Uplink: $F=0/1$}
				\State $\text{msg}_F\gets \text{Receive}\left(h(T,\alpha')\right)$
				\State $F \gets \text{rDecoder}\left(\text{msg}_F, h(T,\alpha')\right)$\Comment{Downlink: $F=0/1$}
			\EndFor
			\State $\alpha'\gets\alpha'+F\epsilon$\Comment{Update Estimation}
			\Statex $\triangleright$ \textbf{Exploration Phase:}
			\State Stay on arm $\tilde{A}_m(p)$ for $\tau$ time steps
		\EndFor
	\end{algorithmic}
	\end{algorithm}

	\textbf{Error-detection repetition code.} The aforementioned escalation mechanism to estimate $\alpha$ relies on knowing when communication failure happens, which is non-trivial. This leads to the second idea of utilizing a special kind of error-detection code for the Z-channel, called the constant weight code \cite{borden1982optimal}. Codewords in one constant weight code share the same Hamming weight, which enables error detection. As noted in \cite{borden1982optimal}, a constant weight code can detect any number of asymmetric errors, and the maximal number of constant-weight codewords of length $n$ can be attained by taking all codewords of weight $\lceil\frac{n}{2}\rceil$ or $\lfloor\frac{n}{2}\rfloor$. Thus, codeword length of $O(\log(K))$ is theoretically sufficient to enable a constant-weight code with $O(K)$ codewords\footnote{This observation can be verified by invoking the Stirling's formula: $n!\approx \sqrt{2\pi n}\left(\frac{n}{e}\right)^{n}$.}.

	To facilitate the discussion, a particular kind of constant weight code is adopted. As shown in Fig.~\ref{fig:ed_code}, while transmitting arm $k$'s index from the leader, it is represented by a bit sequence of length $K$ in which the $k$-th bit is $1$ while all other bits are $0$. Then, each bit of this sequence is repeated $h(T,\alpha')=\Theta(T^{\alpha'})$ times. Thus, all the codewords share the weight of $h(T,\alpha')$. The resulting coded bit sequence is then transmitted with forced collisions. Upon receiving, the entire bit string is divided into $K$ blocks and each block is processed separately. The decoder outcome $\mathcal{S}$ is the (possibly multiple) indices of the blocks that have all-ones.

	This is a very effective error \emph{detection} method because, based on the property of the Z-channel, the source index $k$ is always decoded correctly. Thus, if the decoder outputs more than one indexes, there must be a communication error (example `F' in Fig.~\ref{fig:ed_code}), meaning the current communication protocol with $\alpha'$ is not strong enough to overcome the attacks. Otherwise, there is only one index as the decoder output, which means the communication is successful (example `S' in Fig.~\ref{fig:ed_code}). This suggests that it is sufficient to maintain the current estimation $\alpha'$. Once $\alpha'= v\epsilon> \alpha$, we have $h(T,\alpha')=\Theta(T^{\alpha'})=\omega(W(T))$, which means there will be no more communication errors (asymptotically) and $\alpha'$ can be used for the remainder of time slots. Functions $\texttt{eEncoder()}$ an $\texttt{eDecoder()}$ are used in the algorithms for the corresponding error-detection encoder and decoder, respectively.

    \begin{figure}
	\centering
	\includegraphics[width=\linewidth]{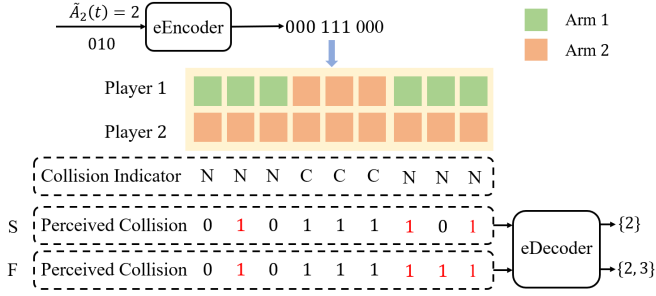}
	\caption{Example of one communication phase and error-detection coding with $M=2$, $K=3$ and $h(T, \alpha')=3$.}
	\label{fig:ed_code}
    \end{figure}

	\textbf{Synchronization with randomized length.} Error-detection repetition code allows each follower to decide whether the $\alpha$-estimation needs to be updated. However, the leader does not have access to this information and such estimation \emph{across} players may not be synchronized, which poses a significant challenge that calls for communication for synchronization. Note that unlike communications for arm assignment, not all errors in this procedure are equally bad. The worst case during synchronization is \emph{uneven attacks}: attacks that happen only on a subgroup of players, i.e., some players receive incorrect signals and update the $\alpha$ estimation, while the others do not. On the other hand, if \emph{all} players receive wrong signals simultaneously, they are still synchronous because they all will increase $\alpha'$ to the next value.

	To solve this problem, we return to the fundamental technique of (single-player) adversarial bandits and introduce \emph{randomness} to the synchronous procedure. After communication of arm assignment in each round, the followers report to the leader whether communication errors occurred in this round at the same time. The followers send bit $1$ representing error and bit $0$ representing no error with the same error-correction repetition code of length $h(T,\alpha')$. Borrowing the terminology from wireless communications, this process is referred to as `uplink', and as long as one follower has communication errors, the leader can correctly receive bit $1$ (signal for updating). The uplink is robust to the attacks since there is only one receiver (the leader); even if an attack is successful, the need for updating is still conveyed correctly. After the uplink, if bit $1$ (no matter from followers or the adversary) is received by the leader, she sends back bit $1$ to every follower; otherwise, bit $0$ is sent. This is called `downlink'\footnote{Note that `uplink' denotes that followers communicate to the leader, while `downlink' denotes the opposite direction, i.e., the leader communicates to the followers.}. If bit $1$ is sent, all the followers receive it correctly over the Z-channel. However, in the case of bit $0$, uneven attacks on partial followers may happen, which leads to a failed synchronization. To keep synchronization successful with a high probability, the `uplink-downlink' procedure keeps iterating for a \emph{random} number of rounds $N(\xi)$, which is generated by a shared randomness source among players\footnote{Specifically, we can assume that players have access to a shared random seed in advance, e.g., the starting time of the game, and then use this random seed to generated $N(\xi)$ for later synchronizations.} and uniformly distributed in $\left[0, \left \lceil T^{\xi} \right \rceil\right]$. If the follower detects communication errors during the assignment or receives bit $1$ in the preceding downlink, she keeps sending bit $1$ to the leader in the following uplink cycles until the procedure ends. For this protocol, the adversary has to \emph{exactly} attack the last round of downlink to destroy synchronization, since attacks at other rounds are broadcasting. As analyzed in Section \ref{sec:theory}, this procedure with a carefully chosen $\xi=\frac{1-\alpha'}{2}$ (which changes with $\alpha'$) is crucial in maintaining a sub-linear regret.

	In addition to updating the estimation, the synchronization also plays an important role in maintaining unbiased loss estimations. With communication errors, potential collisions may happen on the assigned arm of the leader, thus an unbiased estimation is impossible if the leader uses the observed rewards of this round as feedback. The algorithm allows the leader to only use collected cumulative rewards when the signals from followers indicate correct communications.

	With full details given in Algorithms~\ref{alg:a_unaware_leader} and \ref{alg:a_unaware_follower}, we offer a summary overview of the algorithmic structure of $\alpha$-unaware A2C2 as follows.
	\begin{itemize}[leftmargin=*]\itemsep=0pt
	    \item \textbf{Step I: meta-arm selection.} With the arm assignment policy specified in Section~\ref{sec:arm_assign}, a meta-arm is chosen by the leader;
	    \item \textbf{Step II: communication phase for arm assignment.} The leader encodes the indices of the chosen arms with the error-detection code. Then, she informs each follower of her assigned arm by implicitly communicating the coded arm index. Each follower decodes the received message, and detects whether a communication error has occurred.
	    \item \textbf{Step III: communication phase for synchronization.} A random length of ``uplink-downlink'' iteration is performed, which composes of the synchronization procedure. After synchronization, each player updates the estimation of the attackability parameter $\alpha'$.
	    \item \textbf{Step IV: exploration phase.} The players pull the assigned arms for a certain duration. The leader counts the cumulative losses which will be used by the loss estimators.
	\end{itemize}
	Similar structures also apply to the previously discussed $\alpha/\beta$-aware A2C2 algorithms by removing the communication phase for synchronization. For the $\beta$-unaware A2C2 algorithm, detailed changes are discussed in the next subsection.

	\subsection{$\beta$-unaware}\label{subsec:b_unaware}
	Similar ideas of the previous design can be applied to the $\beta$-unaware setting, and the resulting $\beta$-unaware A2C2 algorithm is presented in Appendix~\ref{app:b_unaware}. Some important differences to $\alpha$-unaware A2C2 are explained in this subsection.

	Unlike the $\alpha$-estimation starting from $\alpha'=0$, the estimation $\beta'$ starts with $\frac{1}{4}$ as analyzed in Section \ref{sec:theory}. In each communication phase, the arm assignment is similarly encoded with the error-detection code but the codeword length for each bit is adjusted to $k(T,\nu')=\Theta(T^{\nu'})$, where $\nu'=\frac{4\beta'-1}{3}$. The increased coding rate is due to the same reason described in Section \ref{subsec:b_aware}, i.e., a certain amount of communication errors are tolerable within the global attackability bound.

	As for the feedback from the followers, two different uplink operations are performed: one to maintain the unbiased loss estimations and the other to keep players synchronized with the estimated $\beta'$. First, after each arm assignment, followers immediately notify the leader of communication errors for her to maintain an unbiased loss estimation. This uplink does not indicate the need of updating $\beta'$ since $\beta'$ is an estimation of the \textit{overall} budget and there is no following downlink. Since communication errors in this uplink can only influence the subsequent exploration phase, it is performed with a repetition code of length $k(T,\nu')=\Theta(\nu')$.  Then, for the update of $\beta'$, each player keeps counting the overall number of attacks on her communication arm and reports to the leader when the estimated budget is exceeded. To reduce the communication burden, the update of $\beta'$ and the synchronization procedure is performed only at potential updating time slots rather than after each communication phase. Specifically, similar iterations of uplink and downlink for synchronization happen every $\lceil T^{\beta'}/k(T,\nu')\rceil$ phases, with length $k(T,\beta')=\Theta(T^{\beta'})$ in each round and a random number of rounds $N(\xi)\in \left [0, \left \lceil T^{\xi} \right\rceil \right ]$. The choice of length $k(T,\beta')$ is because that synchronization error may influence the entire remaining time slots. Similar to analysis in Section \ref{subsec:a_unaware}, the adversary must attack \emph{exactly} the last round of synchronization to succeed in breaking the coordination among players.

	\section{Performance Analysis}
	\label{sec:theory}
	This section is devoted to the theoretical analyses for all proposed A2C2  algorithms. Detailed proofs can be found in Appendices \ref{app:a_aware} to \ref{app:b_unaware}.

	\textbf{$\alpha/\beta$-aware.} The regret of $\alpha$-aware A2C2 and $\beta$-aware A2C2 algorithms are first presented in Theorems \ref{thm:a_aware} and \ref{thm:b_aware}, respectively.

	\begin{theorem}[$\alpha$-aware]\label{thm:a_aware}
		With $\tau = \lceil M^{\frac{2}{3}}K^{-\frac{1}{3}}\log(K)^{\frac{1}{3}}T^{\frac{1+2\alpha+2\epsilon}{3}}\rceil$, $\eta=\sqrt{\log\tbinom{K}{M}\tau/MKT}$, the expected regret of $\alpha$-aware A2C2 algorithm is bounded by
		\begin{align*}
		    \mathbb{E}\left [R_1(T) \right ] \leq O \left(M^{\frac{4}{3}}K^{\frac{1}{3}}\log(K)^{\frac{2}{3}}T^{\frac{2+\alpha+\epsilon}{3}}  \right),
		\end{align*}
		where $\epsilon>0$ is an arbitrarily small constant.
	\end{theorem}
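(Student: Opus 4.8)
The plan is to split the time horizon into the alternating communication and exploration sub-phases and to bound their contributions to $R_1(T)$ separately, after first showing that for all sufficiently large $T$ no collision ever corrupts an exploration phase. For this last point I would exploit the Z-channel structure together with the repetition decoder: a transmitted bit $1$ (a forced collision) always arrives as loss $1$ and is decoded as $1$, while a transmitted bit $0$ is decoded (incorrectly) as $1$ only if the adversary places loss $1$ on the receiver's communication arm for all $h(T,\alpha+\epsilon)$ consecutive slots allocated to that bit. Since those slots are contiguous and no other player interferes on that arm, such an event requires a run of $h(T,\alpha+\epsilon)$ consecutive loss-$1$'s on a single arm, i.e. $W(T)\ge h(T,\alpha+\epsilon)$; but $h(T,\alpha+\epsilon)=\Theta(T^{\alpha+\epsilon})=\omega(T^{\alpha})=\omega(W(T))$, so there is a finite $T_0$ (depending only on the hidden constants) with $h(T,\alpha+\epsilon)>W(T)$ for every $T\ge T_0$. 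For such $T$ every arm index is decoded correctly, all $M$ players sit on distinct arms throughout each exploration phase, and they observe their true losses with no collision; for $T<T_0$ one simply bounds $\mathbb{E}[R_1(T)]\le MT_0=O(1)$, which is absorbed into the final bound, so I assume $T\ge T_0$.

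Next I would write $R_1(T)\le R^{\mathrm{c}}+R^{\mathrm{e}}$, where $R^{\mathrm{c}}$ collects the realized losses $\sum_m s_{\pi_m(t)}(t)$ over the communication slots and $R^{\mathrm{e}}$ is the regret of \eqref{eqn:regret_def} restricted to the exploration slots; this split is valid because each $l_k(t)\ge 0$, so discarding the comparator's mass on the communication slots only inflates the bound. For $R^{\mathrm{c}}$: each slot contributes at most $M$, there are at most $\lceil T/\tau\rceil$ phases (each of length at least $\tau$), and each communication sub-phase has length $O(M\log(K)\,h(T,\alpha+\epsilon))=O(M\log(K)T^{\alpha+\epsilon})$ (one length-$h$ repetition codeword per bit of each of $O(M)$ arm indices, transmitted sequentially). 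Hence $R^{\mathrm{c}}=O(M^{2}\log(K)T^{1+\alpha+\epsilon}/\tau)$, and substituting $\tau=\Theta(M^{2/3}K^{-1/3}\log(K)^{1/3}T^{(1+2\alpha+2\epsilon)/3})$ gives $R^{\mathrm{c}}=O(M^{4/3}K^{1/3}\log(K)^{2/3}T^{(2+\alpha+\epsilon)/3})$.

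For $R^{\mathrm{e}}$, the crucial consequence of the collision-free guarantee above is that $R^{\mathrm{e}}$ is exactly the regret of the centralized multi-play adversarial bandit simulated by the leader, except that arms are updated only once every $\tau$ slots and the EXP3 loss estimator uses only the leader's single cumulative observation per phase. I would first verify (as set up in Section~\ref{sec:arm_assign}) that $\tilde{l}_k(p)$ is unbiased for the phase-$p$ average loss of arm $k$: the random permutation puts the leader on arm $k$ with probability $\frac{1}{M}\sum_{A:k\in A\in\mathcal{K}}P_A(p)$, exactly cancelling the denominator in the estimator, and the collision-free property makes the observed cumulative loss equal to the true one. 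I would then view each length-$\tau$ phase as a single super-round carrying a meta-arm loss in $[0,M\tau]$ over at most $T/\tau$ super-rounds, and invoke the centralized EXP3-over-meta-arms guarantee of Appendix~\ref{app:cent} (with $K$-DPP sampling and the per-arm estimator), which for $T'$ unit-scale rounds gives a bound $2M\sqrt{K\log(K)T'}$; rescaling the loss range to $M\tau$ and the horizon to $T/\tau$ yields $R^{\mathrm{e}}=O(M\tau\sqrt{K\log(K)\,T/\tau})=O(M\sqrt{K\log(K)\,\tau T})$, which is precisely the trade-off that the stated $\eta$ realizes. Substituting the value of $\tau$ gives $R^{\mathrm{e}}=O(M^{4/3}K^{1/3}\log(K)^{2/3}T^{(2+\alpha+\epsilon)/3})$.

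Adding the two bounds gives the theorem. I expect the main obstacle to be the exploration-regret step, i.e. making the centralized-to-decentralized reduction precise under infrequent switching: one must control not only the bias but also the variance of the leader-only loss estimator so that updating arms only once per $\tau$-block degrades the regret merely by a $\sqrt{\tau}$ factor rather than a full $\tau$, and one must check that $\eta=\sqrt{\log\binom{K}{M}\tau/MKT}$ is the correct balance; this is where the argument has to lean on the centralized analysis of Appendix~\ref{app:cent}. By contrast, the collision-free guarantee for the exploration phases, although it is the genuinely new no-sensing ingredient here, is comparatively clean once the repetition blocklength $h(T,\alpha+\epsilon)$ is matched to $W(T)$ through the Z-channel model.
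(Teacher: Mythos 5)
Your proposal is correct and follows essentially the same route as the paper's proof in Appendix~\ref{app:a_aware}: the same decomposition into communication and exploration regret, the same collision-free guarantee from $h(T,\alpha+\epsilon)=\omega(W(T))$, the same count of $\lceil T/\tau\rceil$ communication rounds of length $O(M\log(K)h(T,\alpha+\epsilon))$, and the same blocked centralized EXP3 bound $O(M\sqrt{K\log(K)T\tau})$ balanced by the stated choice of $\tau$. No gaps.
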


	\begin{theorem}[$\beta$-aware]\label{thm:b_aware}
		With $\tau = \lceil K^{\frac{1}{3}}\log(K)^{-\frac{1}{3}}T^{\max\{\beta,\frac{1}{3}\}}\rceil$, $\eta=\sqrt{\log\tbinom{K}{M}\tau/MKT}$, $\nu=\max\{\frac{3\beta-1}{2},0\}$, the expected regret of $\beta$-aware A2C2 algorithm is bounded by
		\begin{align*}
		    \mathbb{E}\left[R_2(T)\right] \leq  O\left(M^{2}K^{\frac{2}{3}}\log(K)^{\frac{1}{3}}T^{\max\left\{\frac{1+\beta}{2}, \frac{2}{3}\right\}}\right).
		\end{align*}
	\end{theorem}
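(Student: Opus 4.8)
\medskip
\noindent\textbf{Proof plan for Theorem \ref{thm:b_aware}.}
The plan is to partition $[T]$ into the alternating communication and exploration phases of $\beta$-aware A2C2 and charge each kind of slot separately against the regret $R_2(T)$ of \eqref{eqn:regret_def}. Let $\ell = O(MK\,k(T,\nu))$ be the length of one communication phase ($\beta$-aware sends each of the $M$ arm indices with an error-\emph{detecting} constant-weight code of length $O(K)$, every coded bit repeated $k(T,\nu)=\Theta(T^{\nu})$ times), and let $B = \lceil T/(\tau+\ell)\rceil \le T/\tau$ be the number of phases. Call an exploration phase \emph{clean} if the preceding communication phase was decoded correctly by every follower, and \emph{dirty} otherwise. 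Splitting the two sums in \eqref{eqn:regret_def} over clean exploration slots, dirty exploration slots, and communication slots, lower bounding the comparator via $\min$-of-sum $\ge$ sum-of-$\min$, and using the trivial bound $\sum_m s_{\pi_m(t)}(t)-\sum_m l_{k_m}(t)\le M$ on the last two groups gives
\[
\mathbb{E}[R_2(T)] \le \mathbb{E}[R_{\mathrm{expl}}] + \mathbb{E}[R_{\mathrm{comm}}] + \mathbb{E}[R_{\mathrm{err}}],
\]
where $R_{\mathrm{expl}}$ is the regret of the centralized meta-arm algorithm restricted to the clean exploration slots, $R_{\mathrm{comm}}$ collects the loss on communication slots, and $R_{\mathrm{err}}$ the loss on dirty exploration slots.

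\smallskip
\noindent The routine terms are $R_{\mathrm{comm}}$ and $R_{\mathrm{expl}}$. Since each of the $\le T/\tau$ communication phases lasts $\ell = O(MK\,T^{\nu})$ slots and each player loses at most $1$ per slot, $\mathbb{E}[R_{\mathrm{comm}}] = O(M^2 K\,T^{1+\nu}/\tau)$; plugging in $\tau = \Theta(K^{1/3}\log(K)^{-1/3}T^{\max\{\beta,1/3\}})$, $\nu=\max\{\frac{3\beta-1}{2},0\}$ and checking the cases $\beta\ge\frac13$ (where $1+\nu-\beta=\frac{1+\beta}{2}$) and $\beta<\frac13$ (where $1+\nu-\frac13=\frac23$) gives $\mathbb{E}[R_{\mathrm{comm}}] = O(M^2K^{2/3}\log(K)^{1/3}T^{\max\{(1+\beta)/2,2/3\}})$. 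For $R_{\mathrm{expl}}$, on clean phases the $M$ followers occupy distinct assigned arms and the leader's arm cannot collide, so her estimator $\tilde{l}_k(p)$ is conditionally unbiased for arm $k$'s average loss over that phase, exactly as in the centralized analysis of \cite{alatur2020multi} (Appendix \ref{app:cent}); feedback from dirty phases is discarded (zero estimator), so EXP3 over $\mathcal{K}$ effectively runs the $\tau$-batched centralized game on the clean phases, for which any fixed allocation is an admissible comparator. Scaling the $2M\sqrt{K\log(K)n}$ guarantee of Appendix \ref{app:cent} by the batch length $\tau$ (per-phase loss lies in $[0,M\tau]$) with $n\le T/\tau$ batches and the stated $\eta$ yields $\mathbb{E}[R_{\mathrm{expl}}] = O(M\sqrt{K\log(K)\tau T}) = O(MK^{2/3}\log(K)^{1/3}T^{\max\{(1+\beta)/2,2/3\}})$, dominated by $\mathbb{E}[R_{\mathrm{comm}}]$.

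\smallskip
\noindent The crux is $R_{\mathrm{err}}$, i.e.\ counting dirty phases. Over the Z-channel (Fig.~\ref{fig:z_channel}) a follower mis-decodes only if some zero-block of length $k(T,\nu)=\Theta(T^{\nu})$ in its message is received all-ones, which forces the adversary to place loss $1$ on \emph{all} $k(T,\nu)$ of those slots of that follower's communication arm. Communication arm $m$ carries at most $N_1^m(T)\le V(T)=O(T^{\beta})$ losses $1$ in total, so it can host at most $O(T^{\beta}/T^{\nu})=O(T^{\beta-\nu})$ such blocks; summing over the $M$ communication arms, the number of dirty phases is $O(MT^{\beta-\nu})$ --- finite, and zero once the adversary's budget is spent. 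Each dirty phase spoils one exploration phase of length $\tau$, costing at most $M\tau$ there; and because the constant-weight code is error-detecting on the Z-channel (the true index block is always all-ones, so a corruption is seen as $|\mathcal{S}|>1$), the affected follower recognizes the failure and the leader learns of it (a reported bit $1$ is never corrupted downward), which is exactly how the discarded-feedback convention above is realized. Hence $\mathbb{E}[R_{\mathrm{err}}] = O(M^2\tau\,T^{\beta-\nu})$, and the same two-case arithmetic gives $\tau T^{\beta-\nu} = O(K^{1/3}\log(K)^{-1/3}T^{\max\{(1+\beta)/2,2/3\}})$, so $\mathbb{E}[R_{\mathrm{err}}] = O(M^2K^{1/3}\log(K)^{-1/3}T^{\max\{(1+\beta)/2,2/3\}}) \le \mathbb{E}[R_{\mathrm{comm}}]$.

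\smallskip
\noindent Summing the three bounds and using $K\ge 2$ to order the $K$-powers gives $\mathbb{E}[R_2(T)] = O(M^2K^{2/3}\log(K)^{1/3}T^{\max\{(1+\beta)/2,2/3\}})$, as claimed; the choices of $\tau$ and $\nu$ are precisely those that equalize the exponents of $R_{\mathrm{comm}}$, $R_{\mathrm{expl}}$ and $R_{\mathrm{err}}$. I expect the dirty-phase step to be the real work: one must verify that the $O(MT^{\beta-\nu})$ corrupted phases affect the rest of the analysis \emph{only} through the additive $O(M\tau)$-per-phase loss already charged to $R_{\mathrm{err}}$ --- i.e.\ that discarding their (potentially wildly biased, since $\tilde{l}_k$ is inverse-probability weighted) feedback leaves the unbiasedness and variance bookkeeping behind $R_{\mathrm{expl}}$ intact, and that any decoding outcome a follower does \emph{not} flag is in fact collision-free, so it cannot silently desynchronize a later exploration phase. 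The batched-EXP3 bound and the exponent arithmetic are then routine.
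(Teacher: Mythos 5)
Your proposal follows essentially the same route as the paper: the same three-way decomposition into exploration-after-success, communication, and exploration-after-failure regret, the same per-term bounds $O(M\sqrt{K\log(K)T\tau})$, $O(M^2KT^{1+\nu}/\tau)$ and $O(M^2\tau T^{\beta-\nu})$ (the last via the same budget argument that each successful attack consumes $k(T,\nu)=\Theta(T^{\nu})$ of the $O(T^{\beta})$ loss-ones on a communication arm), and the same balancing of $\tau$ and $\nu$. The concern you flag about discarded feedback and the EXP3 bookkeeping is legitimate but is handled (indeed, glossed over) identically in the paper, so the proposal is correct and matches the paper's argument.
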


	It is worth noting that for $\beta\leq \frac{1}{3}$, we have $\nu=0$ which means there is no need for coding at all.  Another note is that for $\alpha=0$ or $\beta\leq \frac{1}{3}$, the known regret in collision-sensing setting $O(T^{\frac{2}{3}})$ \cite{alatur2020multi} is recovered. When $\alpha=1$ or $\beta=1$, which means $W(T)=\Omega(T)$ or $V(T)=\Omega(T)$, the adversary can asymptotically attack all time slots to prevent any communication, and thus our regret becomes $O(T)$.

	\textbf{$\alpha/\beta$-unaware.} Without knowledge of the attackability, the performance of $\alpha$-unaware A2C2 and $\beta$-unaware A2C2 algorithms are guaranteed in Theorems \ref{thm:a_unaware} and \ref{thm:b_unaware}, and analyzed subsequently.

	\begin{theorem}[$\alpha$-unaware]\label{thm:a_unaware}
		With $\tau= \lceil M^{\frac{2}{3}}K^{-\frac{1}{3}}\log(K)^{-\frac{1}{3}}T^{\frac{2+\alpha'}{3}}\rceil$,	$\eta=\sqrt{\log\tbinom{K}{M}\tau/MKT}$ and $\xi=\frac{1-\alpha'}{2}$ under the estimation $\alpha'$, the expected regret of the $\alpha$-unaware A2C2 algorithm is bounded by
		\begin{align*}
		    \mathbb{E}[R_3(T)] \leq  O\left(M^{\frac{4}{3}}K^{\frac{1}{3}}\log(K)^{\frac{1}{3}}T^{\frac{5+\alpha+\epsilon}{6}}\right),
		\end{align*}
		where $\epsilon>0$ is an arbitrarily small constant.
	\end{theorem}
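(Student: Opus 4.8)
The plan is to decompose the regret of $\alpha$-unaware A2C2 into four sources and bound each: (i) the centralized EXP3-with-meta-arms regret during exploration phases, (ii) the loss incurred during communication phases, (iii) the regret from the infrequent switching (the fact that the exploration phase lasts $\tau$ slots rather than one), and (iv) the ``transient'' linear loss incurred before the estimate $\alpha'$ escalates past the true $\alpha$ (including any synchronization failures along the way). The key structural fact, inherited from the $\alpha$-aware analysis (Theorem \ref{thm:a_aware}) and the error-detection code discussion, is that once $\alpha' = v\epsilon > \alpha$ we have $h(T,\alpha') = \omega(W(T))$, so all communications succeed asymptotically and the algorithm behaves like $\alpha$-aware A2C2 run with parameter $\alpha'$; since $\alpha'$ exceeds $\alpha$ by at most $\epsilon$, this contributes $O(M^{4/3}K^{1/3}\log(K)^{1/3} T^{(2+\alpha+\epsilon)/3})$, which is dominated by the claimed bound.

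First I would handle the exploration/centralized part. With phase length $\tau = \lceil M^{2/3} K^{-1/3}\log(K)^{-1/3} T^{(2+\alpha')/3}\rceil$ and $\eta = \sqrt{\log\binom{K}{M}\tau / MKT}$, the number of phases is $P \approx T/\tau$, and the standard EXP3 analysis on the meta-arm bandit (Appendix \ref{app:cent}, as in \cite{alatur2020multi}) gives a per-phase-aggregated regret of order $M\sqrt{K\log(K)\, \tau\, T}$ after accounting for the variance inflation of the loss estimators caused by averaging over a $\tau$-slot block (the loss estimator $\tilde{l}_k(p)$ divides the cumulative loss by $\tau$, so the effective range is $O(1)$ but the switching penalty scales the regret by a factor $\sqrt{\tau}$, cf.\ \cite{arora2012online}). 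Plugging in $\tau \sim M^{2/3}K^{-1/3}\log(K)^{-1/3}T^{(2+\alpha')/3}$ yields an exploration regret of order $M^{4/3}K^{1/3}\log(K)^{1/3} T^{(5+\alpha')/6}$, which after $\alpha' \le \alpha + \epsilon$ matches the target. Second, the communication-phase loss: each phase contains one arm-assignment transmission of length $\Theta(MK \cdot h(T,\alpha')) = \tilde O(MK T^{\alpha'})$ plus a synchronization block of $N(\xi) \le \lceil T^\xi\rceil$ rounds, each of length $O(M h(T,\alpha')) = \tilde O(MT^{\alpha'})$, with $\xi = (1-\alpha')/2$; so each communication phase costs $\tilde O(M^2 K T^{\alpha'} + M^2 T^{\alpha' + (1-\alpha')/2})$ per player, and multiplying by $P \approx T/\tau$ phases and $M$ players, one checks this is dominated by the exploration term for the chosen $\tau$ — this is precisely why $\tau$ is set at that order, balancing exploration cost against communication cost.

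The main obstacle — and the heart of the argument — is step (iv): bounding the loss before $\alpha'$ catches up and bounding the failure probability of the randomized synchronization. I would argue that $\alpha'$ can increase at most $\lceil 1/\epsilon\rceil = O(1)$ times, and each increase is triggered by a detected communication error; while $\alpha' \le \alpha$, a single exploration phase can be fully corrupted (linear loss $O(M\tau)$), but since $\tau$ with the current (smaller) $\alpha'$ is still only $\tilde O(M^{2/3}K^{-1/3}T^{(2+\alpha')/3}) = o(T^{(2+\alpha+\epsilon)/3} \cdot \text{poly})$, and this happens $O(1/\epsilon)$ times, the total transient loss is absorbed. The delicate part is the randomized synchronization: after a phase with uneven attacks, the leader and some followers may disagree on whether to update $\alpha'$, and the only way the adversary breaks re-synchronization is to attack the \emph{exact} last downlink round of the $N(\xi)$-round iteration; since $N(\xi)$ is uniform on $[0,\lceil T^\xi\rceil]$ from shared randomness and the adversary is oblivious, the probability of hitting it is $O(T^{-\xi}) = O(T^{-(1-\alpha')/2})$ per synchronization event. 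A failed synchronization can cost up to a linear $O(MT)$ in expectation over the remainder, but weighted by $O(T^{-(1-\alpha')/2})$ this contributes $O(M T^{1-(1-\alpha')/2}) = O(M T^{(1+\alpha')/2})$, and one must verify $\tfrac{1+\alpha'}{2} \le \tfrac{5+\alpha'}{6}$, i.e.\ $3(1+\alpha') \le 5+\alpha'$, i.e.\ $\alpha' \le 1$ — which always holds, with the choice $\xi = (1-\alpha')/2$ being exactly tight for this comparison. Summing the four contributions, taking expectations over the algorithm's randomness, and using $\alpha' \le \alpha + \epsilon$ throughout gives the stated bound $O(M^{4/3}K^{1/3}\log(K)^{1/3}T^{(5+\alpha+\epsilon)/6})$.
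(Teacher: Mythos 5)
Your decomposition and the main computations mirror the paper's own proof (exploration regret via the blocked EXP3 bound, communication regret from assignment plus synchronization, an $O(1/\epsilon)$ transient of corrupted phases, and a randomized-synchronization failure term), and your exploration and communication calculations land on the correct order $M^{4/3}K^{1/3}\log(K)^{1/3}T^{(5+\alpha')/6}$. However, there is a genuine gap in your treatment of the synchronization-failure term. You charge the failure event once: probability $O(T^{-\xi})$ times worst-case loss $O(MT)$, giving $O(MT^{(1+\alpha')/2})$, and then observe this is dominated since $\frac{1+\alpha'}{2}\le\frac{5+\alpha'}{6}$. But the synchronization procedure is run in \emph{every} phase, and the adversary gets a fresh chance to hit the last downlink round in each of the $\Theta(T/\tau)$ phases preceding convergence of $\alpha'$. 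The correct accounting is a union bound over all these phases, i.e.
\begin{align*}
\mathbb{E}\left[R_3^{sync}(T)\right]\;\leq\;\sum_{j}\left\lceil \frac{T_{e,j}}{\tau_j}\right\rceil\frac{1}{\lceil T^{\xi_j}\rceil}\,MT,
\end{align*}
which, with $T/\tau=\Theta\bigl(M^{-2/3}K^{1/3}\log(K)^{1/3}T^{(1-\alpha')/3}\bigr)$, evaluates to $\Theta\bigl(M^{1/3}K^{1/3}\log(K)^{1/3}T^{(5+\alpha')/6}\bigr)$ --- larger than your figure by the polynomial factor $T^{(1-\alpha')/3}$ and exactly at the order of the final bound. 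So the theorem still holds, but your version of the argument is misleading: it suggests the choice $\xi=\frac{1-\alpha'}{2}$ has slack (and that a smaller $\xi$ would also work), whereas under the correct per-phase union bound this $\xi$ is precisely what is needed to keep the synchronization term from dominating. A secondary, minor inaccuracy: in your overview you claim the post-convergence regime contributes $O(T^{(2+\alpha+\epsilon)/3})$ as in the $\alpha$-aware case and is ``dominated''; with the unaware algorithm's larger $\tau=\lceil M^{2/3}K^{-1/3}\log(K)^{-1/3}T^{(2+\alpha')/3}\rceil$ the exploration regret is $\Theta(T^{(5+\alpha')/6})$, which \emph{is} the dominant term --- your detailed second-paragraph computation gets this right, so you should reconcile the two.
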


	\begin{theorem}[$\beta$-unaware]\label{thm:b_unaware}
		With $\tau= \lceil K^{-\frac{1}{3}}\log(K)^{-\frac{1}{3}}T^{\frac{1+2\beta'}{3}}\rceil$, $\eta=\sqrt{\log\tbinom{K}{M}\tau/MKT}$, $\xi=\frac{1+2\beta'}{3}$ and $\nu'=\frac{4\beta'-1}{3}$ under estimation $\beta'$ starting from $\frac{1}{4}$, the expected regret of the $\beta$-unaware A2C2 algorithm is bounded by
		\begin{align*}
		    \mathbb{E}[R_4(T)] \leq  O\left(M^2K^{\frac{1}{3}}\log(K)^{\frac{1}{3}}T^{\max\left\{\frac{2+\beta+\epsilon}{3},\frac{3}{4}\right\}}\right),
		\end{align*}
		where $\epsilon>0$ is an arbitrarily small constant.
	\end{theorem}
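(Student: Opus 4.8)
The plan is to bound $\mathbb{E}[R_4(T)]$ by splitting the horizon into exploration slots and communication slots, the latter comprising the arm-assignment segments, the per-phase loss-estimation uplinks, and the periodic synchronization rounds. I would write $\mathbb{E}[R_4(T)]\le \mathbb{E}[R^{\mathrm{exp}}]+\mathbb{E}[R^{\mathrm{comm}}]+\mathbb{E}[R^{\mathrm{sync}}]+\mathbb{E}[R^{\mathrm{err}}]$, where $R^{\mathrm{exp}}$ is the regret on the exploration slots of phases in which all players stay synchronized, $R^{\mathrm{comm}}$ and $R^{\mathrm{sync}}$ bound (by $M$ times their total length) the regret accrued on communication and synchronization slots, and $R^{\mathrm{err}}$ collects the regret of the at-most-$\tau$-long exploration phases that are corrupted either by a detected communication error (handled in the next step) or by a broken synchronization (handled last).

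A preliminary step is the \emph{escalation analysis} of $\beta'$. Using the Z-channel property --- the leader's constant-weight codeword always decodes to a set $\mathcal{S}$ that contains the true index, and a spurious index enters $\mathcal{S}$ only when the adversary flips all $k(T,\nu')$ repetitions of the corresponding block --- I would establish: (i) $\beta'$ is nondecreasing with step $\epsilon$, and is bumped only when the running count of losses $1$ on a communication arm exceeds $T^{\beta'}$; (ii) once $\beta'\ge\beta$ this count can never exceed $T^{\beta'}$, since $V(T)=O(T^{\beta})=O(T^{\beta'})$ by Corollary~\ref{asp:overall_attack}, so $\beta'$ stabilizes at a terminal value $\bar\beta'$ with $(2+\bar\beta')/3\le\max\{3/4,\,(2+\beta+\epsilon)/3\}$ after at most $O(1/\epsilon)$ escalations; (iii) for the same budget reason each communication arm experiences at most $O(V(T)/k(T,\nu'))=\tilde{O}(T^{(1-\beta)/3})$ spurious-index events. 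Since the leader zeroes its loss estimator on every flagged phase, (iii) gives a contribution $O(M\cdot M\,T^{(1-\beta)/3}\cdot\tau)$ to $R^{\mathrm{err}}$, which is of the claimed order $\tilde{O}\!\big(M^{2}K^{1/3}T^{\max\{(2+\beta+\epsilon)/3,\,3/4\}}\big)$.

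I would then bound $R^{\mathrm{exp}}$ by invoking the centralized meta-arm EXP3 analysis of Appendix~\ref{app:cent}: conditioned on synchronization, $\tilde l_k(p)$ is an unbiased estimate of arm $k$'s average per-slot loss in phase $p$ (the random permutation makes the leader's arm uniform over the chosen meta-arm), so the block-EXP3 regret over $\mathcal{P}=\Theta(T/\tau)$ phases against the best fixed meta-arm is $O(M\sqrt{K\log(K)\,T\tau})$; substituting $\tau$ and $\bar\beta'$ reproduces $O\!\big(MK^{1/3}\log(K)^{1/3}T^{\max\{(2+\beta+\epsilon)/3,\,3/4\}}\big)$. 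For $R^{\mathrm{comm}}$, each phase spends $\tilde{O}(K\,T^{\nu'})$ slots on arm assignment and $\tilde{O}(M\,T^{\nu'})$ on the loss-estimation uplink; multiplying by $M$ and $\mathcal{P}$ gives $\tilde{O}\!\big(M^{2}K^{1/3}T^{(1+2\bar\beta')/3}\big)$, strictly lower order since $\bar\beta'\le 1$. For $R^{\mathrm{sync}}$, the routine synchronization invoked every $\lceil T^{\beta'}/k(T,\nu')\rceil$ phases contributes a similar lower-order term, while the full $N(\xi)$-round re-synchronization (length $k(T,\beta')$ per round) is entered only on the $O(1/\epsilon)$ escalations and thus contributes $O\!\big(\tfrac{M}{\epsilon}\,T^{\xi}T^{\beta'}\big)$, once more lower order for the scaled $\xi$.

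The delicate step --- and the one I expect to be the main obstacle --- is controlling the regret from a \emph{failed} synchronization, which can desynchronize all players for the entire remainder of the game. The argument is that the $N(\xi)$-round ``uplink--downlink'' routine is a broadcast on all but its last round, and the index of that last round is drawn from shared randomness hidden from the oblivious adversary and uniform on $[0,\lceil T^{\xi}\rceil]$; hence the adversary breaks synchronization in any given invocation with probability $O(T^{-\xi})$, and a union bound over the $O(1/\epsilon)$ invocations bounds the expected catastrophic regret by $O(T^{1-\xi}/\epsilon)$, which the choice of $\xi$ keeps below $T^{(2+\beta)/3}$. The points requiring care are: using a fresh $\xi=\xi(\beta')$ at each level so the hidden-last-round argument applies independently across the $O(1/\epsilon)$ levels; arguing that the oblivious adversary's prefixed loss sequence cannot align with the randomized schedule in expectation; and verifying that the occasional zeroing of the loss estimator does not bias the EXP3 bound beyond the $R^{\mathrm{err}}$ term already counted. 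Summing the four contributions, $R^{\mathrm{exp}}$ dominates the $T$-exponent while the $M^{2}$ prefactor is inherited from $R^{\mathrm{comm}}$ and $R^{\mathrm{err}}$, yielding $\mathbb{E}[R_4(T)]=O\!\big(M^{2}K^{1/3}\log(K)^{1/3}T^{\max\{(2+\beta+\epsilon)/3,\,3/4\}}\big)$.
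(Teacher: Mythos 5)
Your decomposition $R_4=R^{\mathrm{exp}}+R^{\mathrm{err}}+R^{\mathrm{comm}}+R^{\mathrm{sync}}$, the escalation/stopping analysis of $\beta'$ (stabilization once $\beta'\geq\beta$ after $O(1/\epsilon)$ bumps), the budget argument bounding corrupted phases by $O(MV(T)/k(T,\nu'))$ each costing $M\tau$, and the block-EXP3 bound for the clean exploration slots are all exactly the paper's route (Lemmas \ref{lem:b_unaware_expl}--\ref{lem:b_unaware_err} in Appendix \ref{app:b_unaware}).

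There is, however, a genuine flaw in your synchronization accounting. You assert that the full $N(\xi)$-round randomized uplink--downlink routine ``is entered only on the $O(1/\epsilon)$ escalations'' and union-bound the catastrophic failure event over $O(1/\epsilon)$ invocations. But the algorithm runs this routine at \emph{every} potential update point, i.e., once every $\lceil T^{\beta'}/k(T,\nu')\rceil$ phases, which is $\Theta\bigl(\frac{T_{e,j}}{\tau_j}\cdot\frac{k(T,\nu_j)}{T^{\beta_j}}\bigr)=\Theta\bigl(T^{(1-\beta_j)/3}\bigr)$ invocations per estimation level --- polynomially many in $T$, not $O(1/\epsilon)$. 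This is unavoidable: the players cannot know whether an escalation is needed without first running the synchronization, so restricting the routine to actual escalations is circular; moreover every invocation is vulnerable, since the adversary can corrupt a final downlink bit $0$ into bit $1$ for a strict subset of followers even when no update is pending. Consequently both the failure union bound and the synchronization communication cost must be summed over all these invocations. The paper's Lemma \ref{lem:b_unaware_sync} does so and obtains roughly $\sum_j\frac{T_{e,j}}{\tau_j}\frac{k(T,\nu_j)}{T^{\beta_j}}\frac{MT}{\lceil T^{\xi_j}\rceil}=O\bigl(MK^{\frac{1}{3}}\log(K)^{\frac{1}{3}}T^{\frac{2+\beta+\epsilon}{3}}\bigr)$, and Lemma \ref{lem:b_unaware_comm} shows the synchronization slots are in fact one of the \emph{dominant} communication terms of order $T^{\frac{2+\beta_v}{3}}$, not lower order as you claim. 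The final bound survives the correction --- the invocation count $T^{(1-\beta')/3}$ is absorbed by the choice of $\xi$, and the requirement $T^{1-\beta'}\leq T^{3/4}$ is precisely why $\beta'$ must start at $\frac{1}{4}$ --- but your argument as written undercounts the failure events by a polynomial factor in $T$ and so does not by itself establish the theorem.
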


	Similar to the $\beta$-aware case, we have $\nu'=0$ for $\beta\leq \frac{1}{4}$ in the $\beta$-unaware A2C2 algorithm, which indicates there is no need of coding for assigning arms and reporting communication errors. Compared with Theorem \ref{thm:a_aware}, the lack of knowledge of $\alpha$ worsens the regret by a factor of $O(T^{\frac{1-\alpha}{6}})$ in Theorem \ref{thm:a_unaware}. Similarly, the lack of knowledge of $\beta$ degrades the regret by $O(T^{\frac{1-\beta}{6}})$.  It is also worth noting that Theorems \ref{thm:b_aware} and \ref{thm:b_unaware} provide better dependencies on $T$ than Theorems \ref{thm:a_aware} and \ref{thm:a_unaware}, respectively, which reinforces the intuition that the local attackability parameter $\alpha$ in Corollary \ref{asp:single_attack} is more stringent than the global attackability parameter $\beta$ in Corollary \ref{asp:overall_attack}.

	Compared with the regret of $O(MK^{\frac{3}{2}}T^{1-\frac{1}{2M}})$ in \cite{bubeck2020non}, it can be observed that the regret results of A2C2 have an exponential dependence on the attackability rather than the number of players $M$, which could be an advantage while dealing with a large number of players. From another perspective, these two different dependencies reveal two orthogonal ``dimensions of hardness'' in the no-sensing adversarial MP-MAB problem: multiple players and attackability. As no information sharing among players is utilized in \cite{bubeck2020non}, the coordination is limited and the difficulty of the problem grows exponentially with the number of players. In our work, forced collisions are used for communications and coordination among players is established. As a result, the regret shifts the exponential dependence from number of players ($M$) to attackability ($\alpha$ or $\beta$), and the dependence on $M$ is only a multiplicative factor.

	\subsection{Proof sketch: $\alpha$-unaware}
	Regret of the $\alpha$-unaware A2C2 algorithm can be decomposed as $R_3(T)=R_3^{expl}(T)+R_3^{comm}(T)+R_3^{sync}(T)$, corresponding to the exploration regret, communication regret, and synchronization error regret, respectively. The remaining of this subsection is devoted to a brief discussion of each regret term in the decomposition and its corresponding upper bound.

	First, the exploration regret $R^{expl}_3(T)$ can be further divided into two parts depending on whether the preceding communication phase is successful or not. The first part has a regret caused by exploration without collision. The second part is caused by potential collisions due to the failed communications, which only occur with an underestimated attackability. Lemma \ref{lem:a_unaware_expl} presents an upper bound for the overall exploration regret.

	\begin{lemma}\label{lem:a_unaware_expl}
		Denoting $\zeta=\mathds{1}\{$successful preceding communication$\}$, the expected exploration regret of the $\alpha$-unaware A2C2 algorithm is bounded by:
    	\begin{align*}
    	    \mathbb{E}\left[R_3^{expl}(T)\right]&=\mathbb{E}\left[R_3^{expl}(T|\zeta=1)\right]+\mathbb{E}\left[R_3^{expl}(T|\zeta=0)\right] \\
    	    &\leq O\left(M^{\frac{4}{3}}K^{\frac{1}{3}}\log(K)^{\frac{1}{3}}T^{\frac{5+\alpha+\epsilon}{6}}\right).
    	\end{align*}
	\end{lemma}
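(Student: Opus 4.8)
The plan is to bound the two conditional exploration regrets separately, following the decomposition already written in the statement. For the term $\mathbb{E}[R_3^{expl}(T|\zeta=1)]$ (successful preceding communication), I would first argue that when the communication phase succeeds, all players are synchronized on the meta-arm $\tilde{A}(p)$ chosen by the leader, so no collisions occur during exploration and each player $m$ actually receives the true loss $l_{\tilde{A}_m(p)}(t)$. Hence the contribution of these ``good'' exploration phases reduces to the regret of the centralized batched-EXP3 procedure of Section~\ref{sec:arm_assign}, played over the meta-arm set $\mathcal{K}$ with block length $\tau$. I would invoke the centralized guarantee (the $2M\sqrt{K\log(K)T}$ bound, or more precisely its batched version from Appendix~\ref{app:cent} with the switching penalty). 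The key quantitative step is that with infrequent switching every $\tau$ slots, the EXP3 regret picks up an extra multiplicative factor related to $\tau$ (from the standard switching-cost / delayed-feedback analysis, cf.\ \cite{arora2012online}), giving something of the order $M\sqrt{K\log(K)T}\cdot\sqrt{\tau}$ plus a lower-order term; then substituting $\tau = \lceil M^{\frac{2}{3}}K^{-\frac{1}{3}}\log(K)^{-\frac{1}{3}}T^{\frac{2+\alpha'}{3}}\rceil$ and using $\alpha'\le \alpha+\epsilon$ once the estimation has stabilized yields the claimed $O(M^{\frac{4}{3}}K^{\frac{1}{3}}\log(K)^{\frac{1}{3}}T^{\frac{5+\alpha+\epsilon}{6}})$. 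One must also account for the fact that $\alpha'$ is random and grows over time; I would bound the regret on the (finitely many) phases before $\alpha'$ reaches its final value $\lceil\alpha/\epsilon\rceil\epsilon$ crudely by $M$ times the total length of those phases, and check that this is dominated.

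For the term $\mathbb{E}[R_3^{expl}(T|\zeta=0)]$ (failed preceding communication), the idea is that a failed communication phase can only happen while $\alpha'$ is still an underestimate, i.e.\ $h(T,\alpha')=\Theta(T^{\alpha'})$ does not yet dominate $W(T)=O(T^{\alpha})$. Each such failure triggers an increment of $\alpha'$ by $\epsilon$ (assuming synchronization of the increment succeeds, which is handled in the synchronization-regret term, not here), so there can be at most $\lceil\alpha/\epsilon\rceil = O(1)$ failed communication phases whose increment ``sticks.'' During each failed exploration phase the worst case is a full loss of $1$ per player per slot, i.e.\ at most $M\tau$ regret, and $\tau$ at estimation level $\alpha'$ is $O(M^{\frac{2}{3}}K^{-\frac{1}{3}}T^{\frac{2+\alpha'}{3}}) = O(M^{\frac{2}{3}}K^{-\frac{1}{3}}T^{\frac{2+\alpha+\epsilon}{3}})$. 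Multiplying by $M$ and by the $O(1)$ number of such phases gives $O(M^{\frac{5}{3}}K^{-\frac{1}{3}}T^{\frac{2+\alpha+\epsilon}{3}})$, which I then verify is of lower order than the bound from the $\zeta=1$ case since $\frac{2+\alpha+\epsilon}{3}<\frac{5+\alpha+\epsilon}{6}$ for all $\alpha\le 1$. Summing the two pieces gives the stated bound.

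The main obstacle I expect is making the $\zeta=1$ argument fully rigorous in the presence of the randomly evolving estimate $\alpha'$: the block length $\tau$ changes whenever $\alpha'$ increments, so the ``centralized batched-EXP3'' being simulated is not a single fixed-parameter instance but a concatenation of a bounded number of instances with growing $\tau$. I would handle this by conditioning on the (almost surely finite) stopping time at which $\alpha'$ last changes, bounding the pre-stabilization regret trivially by the elapsed horizon, and applying the EXP3 analysis only to the final, stable-$\tau$ segment — being careful that the loss estimators $\tilde l_k$ are set to $0$ on phases with $F=1$ (error), which keeps them unbiased on the good phases and is exactly why the leader's feedback remains valid. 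A secondary technical point is correctly propagating the $\sqrt{\tau}$ switching penalty (versus a naive $\tau$ factor) through the EXP3 regret bound, since the exponent $\frac{5+\alpha+\epsilon}{6}$ in the statement is consistent with the $\sqrt{\tau}$ scaling but not with a linear-in-$\tau$ one; I would cite the appropriate delayed-feedback/batched-bandit lemma rather than re-deriving it.
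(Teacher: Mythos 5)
Your decomposition, your treatment of the $\zeta=0$ term, and the key quantitative ingredient for the $\zeta=1$ term (the batched bound $\tau R_c(T/\tau)+\tau = O(M\sqrt{K\log(K)T\tau})$, i.e., a $\sqrt{\tau}$ rather than linear-in-$\tau$ penalty) all match the paper. However, there is a genuine gap in how you propose to handle the evolving estimate $\alpha'$ in the $\zeta=1$ term. You suggest conditioning on the time at which $\alpha'$ last changes, bounding the pre-stabilization regret ``trivially by the elapsed horizon,'' and running the EXP3 analysis only on the final stable-$\tau$ segment. This fails: although $\alpha'$ takes only $O(1/\epsilon)$ distinct values, the adversary controls \emph{when} its attacks occur and can delay the first successful attack until near the end of the horizon, so the pre-stabilization period can have length $\Theta(T)$. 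Bounding the regret there by $M$ times its length then yields an $\Omega(MT)$ term that is not dominated by anything and destroys the claimed sublinear bound. (What is $O(1)$ is the number of \emph{failed} communication phases, not the number of exploration phases executed while $\alpha'$ is still below its final value.)

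The paper avoids this by applying Theorem \ref{thm:block} together with Eqn.~\eqref{eqn:c_regret} to \emph{each} estimation level separately: letting $T_{s,j}$ denote the total exploration time with successful preceding communication under estimate $\alpha_j=j\epsilon$ and $\tau_j$ the corresponding block length, each such segment is itself a blocked bandit game contributing at most $2M\sqrt{K\log(K)T_{s,j}\tau_j}+M\tau_j$; summing over the at most $v+1=O(1/\epsilon)$ levels and using $T_{s,j}\le T$ together with the monotonicity $\tau_j\le\tau_w$ gives $O\bigl(M\sqrt{K\log(K)T\tau_w}\bigr)=O\bigl(M^{\frac{4}{3}}K^{\frac{1}{3}}\log(K)^{\frac{1}{3}}T^{\frac{5+\alpha+\epsilon}{6}}\bigr)$. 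Replacing your trivial pre-stabilization bound with this segment-wise argument (and keeping your $\zeta=0$ computation, which is essentially identical to the paper's: at most $w$ failed phases, each costing $M\tau_j$, with $\frac{2+\alpha}{3}\le\frac{5+\alpha}{6}$) closes the gap.
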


 	The communication regret of $\alpha$-unaware A2C2 consists of two parts: the first from arm assignments and the second from synchronizations, which can be bounded as follows.
	\begin{lemma}\label{lem:a_unaware_comm}
		The expected communication regret of $\alpha$-unaware A2C2 is bounded by
		\begin{align*}
		    \mathbb{E}\left[R_3^{comm}(T)\right] \leq O\left(M^{\frac{4}{3}}K^{\frac{1}{3}}\log(K)^{\frac{1}{3}}T^{\frac{5+\alpha+\epsilon}{6}}\right).
		\end{align*}
	\end{lemma}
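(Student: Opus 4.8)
The plan is to bound $\mathbb{E}[R_3^{comm}(T)]$ by $M$ times the expected number of time slots spent inside communication phases, using the elementary fact that during any single slot the per-slot loss summed over players is at most $M$ while the optimal allocation contributes a nonnegative amount. I would split a communication phase into its two blocks from Algorithms~\ref{alg:a_unaware_leader}--\ref{alg:a_unaware_follower} --- the \texttt{eEncoder}/\texttt{eDecoder} arm-assignment block and the ``uplink--downlink'' synchronization block --- write $R_3^{comm}(T)=R_3^{comm,assign}(T)+R_3^{comm,sync}(T)$, and bound each summand separately.

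First I would pin down how the estimate $\alpha'$ evolves. It is nondecreasing, starts at $0$, and increases by $\epsilon$ only when a communication failure is detected; by the error-detection repetition code property together with Corollary~\ref{asp:single_attack}, once $\alpha'>\alpha$ we have $h(T,\alpha')=\Theta(T^{\alpha'})=\omega(W(T))$, so for $T$ large enough no block of $h(T,\alpha')$ consecutive coded bits can be fully corrupted, hence no false detection occurs and the escalation halts. Consequently $\alpha'$ takes at most $\lceil(\alpha+\epsilon)/\epsilon\rceil+1=O(1/\epsilon)$ distinct values $0,\epsilon,2\epsilon,\dots$, and its final value satisfies $\alpha'_f\le\alpha+\epsilon$. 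I would then group phases by the value $v\epsilon$ of $\alpha'$ they use. Since exploration phases are disjoint sub-intervals of $[T]$ and a phase with estimate $v\epsilon$ has exploration length $\tau(v\epsilon)=\lceil M^{2/3}K^{-1/3}\log(K)^{-1/3}T^{(2+v\epsilon)/3}\rceil$, the number of such phases is at most $T/\tau(v\epsilon)$.

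Next I would bound the communication length of one phase with estimate $v\epsilon$. The assignment block sends, per follower, a codeword of length $O(K\,h(T,v\epsilon))$ (or $O(\log(K)\,h(T,v\epsilon))$ with the more efficient constant-weight code), hence $O(MK\,T^{v\epsilon})$ slots in total; the synchronization block runs $N(\xi)$ uplink--downlink rounds with $\xi=(1-v\epsilon)/2$, each round costing $O(M\,h(T,v\epsilon))=O(M\,T^{v\epsilon})$ slots, and since $N(\xi)\sim\mathrm{Unif}\big[0,\lceil T^{\xi}\rceil\big]$ we have $\mathbb{E}[N(\xi)]=\Theta(T^{\xi})$, giving an expected synchronization length of $O\big(M\,T^{v\epsilon}\,T^{(1-v\epsilon)/2}\big)=O\big(M\,T^{(1+v\epsilon)/2}\big)$ per phase. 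Multiplying by $T/\tau(v\epsilon)$ phases and by the factor $M$ (regret per slot), the estimate-$v\epsilon$ synchronization contribution is $O\big(M^{4/3}K^{1/3}\log(K)^{1/3}T^{1+(1+v\epsilon)/2-(2+v\epsilon)/3}\big)=O\big(M^{4/3}K^{1/3}\log(K)^{1/3}T^{(5+v\epsilon)/6}\big)$, while the assignment contribution carries the strictly smaller $T$-exponent $(1+2v\epsilon)/3$ (valid since $v\epsilon\le1$) and is therefore subdominant in $T$. Summing the $O(1/\epsilon)$ terms and using $v\epsilon\le\alpha'_f\le\alpha+\epsilon$ yields $\mathbb{E}[R_3^{comm}(T)]\le O\big(M^{4/3}K^{1/3}\log(K)^{1/3}T^{(5+\alpha+\epsilon)/6}\big)$, which is the claim.

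I expect the main obstacle to be making the halting-of-escalation argument airtight: since $h(T,\alpha')=\omega(W(T))$ is asymptotic, one has to (i) argue that the transient --- during which $\alpha'\le\alpha$, or $T$ is still too small for the code to overpower $W(T)$ --- contributes only finitely many phases per value of $\alpha'$ and hence only lower-order terms, and (ii) handle the possibility that after a (rare) synchronization failure the leader and a follower temporarily disagree on $\alpha'$ and thus on phase lengths; this event's cost is charged to $R_3^{sync}(T)$ and accounted for separately, but one must still verify it does not inflate the slot count controlled here. Everything else is routine accounting once the per-phase lengths and $\mathbb{E}[N(\xi)]=\Theta(T^{\xi})$ are in hand.
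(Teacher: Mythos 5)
Your proposal is correct and follows essentially the same route as the paper's proof: split the communication regret into arm-assignment and synchronization parts, group phases by the current estimate $\alpha'=v\epsilon$, count $O(MKh(T,v\epsilon))$ slots per assignment block and $O(M\,h(T,v\epsilon)\,T^{\xi})$ expected slots per synchronization block, multiply by the at most $T/\tau$ phases and the per-slot loss $M$, and observe that the synchronization term with exponent $\tfrac{5+v\epsilon}{6}$ dominates. The only differences are cosmetic (your explicit treatment of the halting of the escalation and of the $O(1/\epsilon)$ count of estimate values is folded into the paper's preamble establishing $T^{*}$ and the terminal estimate $\alpha_w$).
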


	Finally, the regret $R_3^{sync}(T)$ caused by the risk of losing synchronization during updates is bounded in the following lemma, where the worst case is assumed such that linear regret is caused by collisions. Note that this term is unique in the attackability-unaware setting as there is no synchronization performed in the attackability-aware setting.
	\begin{lemma}\label{lem:a_unaware_sync}
		The expected regret caused by potential synchronization errors of $\alpha$-unaware A2C2 is bounded by:
		\begin{align*}
		\mathbb{E}\left[R_3^{sync}(T)\right] \leq O\left(M^{\frac{1}{3}}K^{\frac{1}{3}}\log(K)^{\frac{1}{3}}T^{\frac{5+\alpha+\epsilon}{6}}\right).
		\end{align*}
	\end{lemma}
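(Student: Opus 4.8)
\emph{Proof outline.} The plan is to show that synchronization is lost in any given phase only with a probability that decays polynomially in $T$, to pessimistically charge a linear regret to each such event, and then to sum over all phases, grouped by the current value of the estimate $\alpha'$. By construction $R_3^{sync}(T)$ collects exactly the regret caused by the risk of losing player synchronization during the uplink--downlink update procedure; once players disagree on $\alpha'$ they also disagree on the phase length $\tau$ and on every code length $h(T,\alpha')$, so no further reliable communication is possible and we can only guarantee that the remaining slots incur collision losses. Bounding the instantaneous regret by $M$ per slot, a desynchronization during a phase that starts at time $t$ costs at most $M(T-t)\le MT$, and this is the quantity that will be weighted by the desynchronization probability.

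The core of the argument is that, conditioned on the history up to the start of the synchronization sub-phase of any phase $p$ (so that $\alpha'$, hence $\xi=\tfrac{1-\alpha'}{2}$ and the deterministic time positions of all uplink/downlink rounds, are fixed), the probability that the adversary destroys synchronization in phase $p$ is at most $1/(\lceil T^{\xi}\rceil+1)\le T^{-\xi}$. Two consequences of the Z-channel property (Fig.~\ref{fig:z_channel}) are used: a transmitted bit $1$ (collision) is always received correctly, so the uplink is immune to attacks -- whenever a follower reports $F=1$ the leader reads $F=1$ -- and a downlink $F=1$ reaches every follower. Hence if in \emph{any} round $q<N(\xi)$ some follower is pushed to $F=1$ (by an uneven downlink attack, or by an arm-assignment error), then in round $q+1$ the leader re-broadcasts $F=1$, and from that round on \emph{every} player -- leader included -- carries $F=1$; the resulting escalation of $\alpha'$ is common to all, hence synchronous, and is not a desynchronization. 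An attack on a round $q>N(\xi)$ has no effect at all. Consequently a desynchronization in phase $p$ forces $N(\xi)$ to take one particular value -- the index of the first round whose downlink the oblivious (hence fixed) loss sequence attacks, or $0$ if that sequence instead corrupts the arm assignment -- and these two options cannot be combined, since an arm-assignment error would itself propagate and thereby prevent a later desynchronization. Because $N(\xi)$ is drawn from the players' shared randomness, uniformly on $\{0,\dots,\lceil T^{\xi}\rceil\}$ and independently of the adversary's pre-committed losses, this coincidence has probability at most $1/(\lceil T^{\xi}\rceil+1)\le T^{-\xi}$. Finally, an uneven downlink attack must fill a whole length-$h(T,\alpha')=\Theta(T^{\alpha'})$ sub-block with $1$'s on a communication arm, which is impossible once $\alpha'>\alpha$ because then $h(T,\alpha')=\omega(W(T))$ by Corollary~\ref{asp:single_attack}; so desynchronizations can occur only while $\alpha'\le\alpha$ (up to the $\epsilon$-discretization of the escalation).

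It remains to assemble the sum over epochs. The estimate $\alpha'$ takes at most $1/\epsilon+1=O(1)$ distinct values. In the epoch with estimate $\alpha'$ each phase lasts at least $\tau=\Theta(M^{2/3}K^{-1/3}\log(K)^{-1/3}T^{(2+\alpha')/3})$, so that epoch contains at most $O(T/\tau)$ phases, each contributing at most $T^{-\xi}\cdot MT$ in expectation by the bound above, with $\xi=\tfrac{1-\alpha'}{2}$ there. A union bound therefore gives an epoch contribution of at most
\begin{align*}
\frac{T}{\tau}\cdot T^{-\xi}\cdot MT = O\left(M^{1/3}K^{1/3}\log(K)^{1/3}T^{\,1-(1-\alpha')/6}\right).
\end{align*}
The exponent $1-(1-\alpha')/6$ increases with $\alpha'$, and desynchronizations only arise for $\alpha'\le\alpha$, so the dominant epoch gives at most $T^{(5+\alpha+\epsilon)/6}$; summing the $O(1/\epsilon)=O(1)$ epochs preserves this order and yields $\mathbb{E}[R_3^{sync}(T)]\le O(M^{1/3}K^{1/3}\log(K)^{1/3}T^{(5+\alpha+\epsilon)/6})$.

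The step I expect to be the main obstacle is the per-phase bound of the second paragraph: one has to verify the re-broadcast (``healing'') property rigorously across every corner case -- an arm-assignment error together with $N(\xi)=0$, attacks on the uplink, attacks that flip all followers at once, and attacks deliberately rendered ineffective in early rounds so as to shift which round counts as the first effective one -- in order to conclude that an oblivious adversary can effectively bet on only a single value of $N(\xi)$, so that the independence of $N(\xi)$ from the loss sequence then delivers the $T^{-\xi}$ factor. The remaining pieces -- the $MT$ charge, counting phases through $\tau$, and optimizing over $\alpha'$ -- are routine.
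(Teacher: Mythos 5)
Your proposal follows essentially the same route as the paper's proof: bound each synchronization failure by probability $1/\lceil T^{\xi}\rceil$ (the adversary must hit exactly the last downlink round, whose index $N(\xi)$ is uniform and independent of the oblivious loss sequence), charge a worst-case linear loss of $MT$ per failure, and union-bound over the at most $T/\tau_j$ phases in each epoch with $\alpha'=\alpha_j\le\alpha_w$, which yields the exponent $1-(1-\alpha_j)/6=(5+\alpha_j)/6$ dominated by the final estimate. Your second paragraph actually justifies the per-phase failure probability (the Z-channel ``healing'' of uplink/downlink bit-$1$ broadcasts) in more detail than the paper, which simply asserts it; the rest matches the paper's argument.
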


	Combining Lemmas \ref{lem:a_unaware_expl} to \ref{lem:a_unaware_sync}, the overall regret in Theorem \ref{thm:a_unaware} is proved. Note that all three component regret bounds share the same order of $T$ and similar factors of $M$ and $K$, which is not a coincidence. As the analysis in Appendix \ref{app:a_unaware} shows, these three terms can be further decomposed based on the estimation $\alpha'$, and they are all dominated by the elements associated with the final (largest) estimation. Optimizations over $\tau$ and $\xi$ are carried out based on these elements, which carefully control the regret associated with the communication error and the synchronization error to avoid one component regret dominating others.

	\subsection{Proof sketch: $\beta$-unaware}
	The regret of $\beta$-unaware A2C2 can be decomposed as $R_4(T)=R_4^{expl}(T)+R_4^{err}(T)+R_4^{comm}(T)+R_4^{sync}(T)$, which refers to the exploration regret with successful preceding communication phases, exploration regret with failed preceding communication phases, communication regret, and synchronization error regret, respectively.
	The decomposed regret components $R_4^{expl}(T)$, $R_4^{comm}(T)$ and $R_4^{sync}(T)$ are similar to the corresponding components in the proof of $\alpha$-unaware A2C2, which are bounded in Lemmas \ref{lem:b_unaware_expl} to \ref{lem:b_unaware_sync}.

	\begin{lemma}\label{lem:b_unaware_expl}
		The expected regret caused by explorations after successful communications in $\beta$-unaware A2C2 is bounded by:
		\begin{align*}
		\mathbb{E}\left[R_4^{expl}(T)\right]\leq O\left(MK^{\frac{1}{3}}\log(K)^{\frac{1}{3}}T^{\max\left\{\frac{2+\beta+\epsilon}{3},\frac{3}{4}\right\}}\right).
	    \end{align*}
	\end{lemma}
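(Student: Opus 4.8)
The plan is to reduce $\mathbb{E}[R_4^{expl}(T)]$ to the regret of the centralized EXP3-over-meta-arms procedure of Section~\ref{sec:arm_assign}, executed in mini-batches of length $\tau$, and then to sum this bound over the finitely many values taken by the estimate $\beta'$. First I would note that in any phase $p$ whose preceding communication phase succeeds, every follower decodes a unique arm index from the error-detection code, so the played configuration is exactly the permuted meta-arm $\tilde A(p)$ and no collisions occur; hence the loss the players incur over the exploration slots of phase $p$ equals $\sum_{t}\sum_{k\in A(p)} l_k(t)$. Since per-arm losses are nonnegative, the global comparator in~\eqref{eqn:regret_def}, when restricted to the successful-communication exploration slots, is no larger than its value over all of $[T]$; thus $R_4^{expl}(T)\le \sum_{p:\,\zeta=1}\sum_{t\in p}\big(\sum_{k\in A(p)}l_k(t)-\sum_{k\in A^\star}l_k(t)\big)$ for the best fixed meta-arm $A^\star$, with the complementary slots charged to $R_4^{comm}(T)$ and $R_4^{err}(T)$.

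Next I would check that the leader's estimator $\tilde l_k(p)$ is unbiased for the per-slot average true loss $\tfrac1\tau\sum_{t\in p} l_k(t)$ of arm $k$ in phase $p$: the random permutation makes the leader's arm uniform over $A(p)$, and dividing her cumulative loss by the marginal $\sum_{A:k\in A}P_A(p)$ and scaling by $M/\tau$ removes the sampling bias and the batch length. The meta-arm EXP3 therefore sees unbiased estimates of per-phase losses in $[0,M]$, and phases in which a follower detects an error (impossible when $\zeta=1$) only feed zeros into the accumulator $\tilde L$ and leave the EXP3 weights unchanged, so they do not corrupt the successful phases. Applying the centralized regret bound of Appendix~\ref{app:cent} (equivalently \cite{alatur2020multi}) to the subsequence of at most $\lceil T/\tau\rceil$ successful phases, and multiplying the resulting per-phase regret by the batch length $\tau$ to convert it back to cumulative regret, gives $\mathbb{E}[R_4^{expl}(T)]\le O\big(M\sqrt{\tau T K\log K}\big)$ for a fixed value of $\beta'$, plus a lower-order EXP3-initialization term.

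Finally I would handle the varying $\beta'$. It is non-decreasing, starts at $\tfrac14$, increases by the constant $\epsilon$, and is capped at $1$, so it takes only $O(1/\epsilon)=O(1)$ distinct values; I would split $[T]$ into the corresponding epochs, within which $\tau$ and $\eta$ are constant, and pay only $O(1)$ epoch-transition overheads (which are lower order in $T$). Substituting $\tau=\lceil K^{-1/3}\log(K)^{-1/3}T^{(1+2\beta')/3}\rceil$ yields $\sqrt{\tau T K\log K}=K^{1/3}\log(K)^{1/3}T^{(2+\beta')/3}$. Asymptotically the final estimate obeys $\beta'\le\max\{\tfrac14,\beta+\epsilon\}$ — once $T^{\beta'}$ exceeds the total attack budget $V(T)=O(T^\beta)$ no further escalation is triggered, and $\beta'$ never falls below $\tfrac14$ — so the largest-$\beta'$ epoch dominates; summing over the $O(1)$ epochs gives $O\big(MK^{1/3}\log(K)^{1/3}T^{\max\{(2+\beta+\epsilon)/3,\,3/4\}}\big)$, the claimed bound.

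The step I expect to be the main obstacle is making the mini-batched, slowly-varying-learning-rate EXP3 analysis rigorous: one must show simultaneously that fixing the meta-arm for $\tau$ slots against the oblivious adversary costs exactly the factor $\tau$ and nothing worse, that the error-zeroed phases (which belong to $R_4^{err}$, not here) genuinely leave the potential/weight argument for the successful phases intact, and that the carried-over accumulator $\tilde L$ across the $O(1)$ learning-rate changes — together with the escalation cutoff $\beta'\le\beta+\epsilon$ — contributes only terms that are lower order in $T$ and within the $M$-dependence claimed, so that no component overtakes $O(MK^{1/3}\log(K)^{1/3}T^{\max\{(2+\beta+\epsilon)/3,3/4\}})$.
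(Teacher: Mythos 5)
Your proposal matches the paper's proof: the paper likewise treats the exploration slots under each fixed estimate $\beta_j$ as a separate blocked centralized bandit game, applies Theorem~\ref{thm:block} together with Eqn.~\eqref{eqn:c_regret} to get $2M\sqrt{K\log(K)T_{s,j}\tau_j}+M\tau_j$ per epoch, sums over the $O(1/\epsilon)$ epochs using monotonicity of $\tau_j$, and substitutes $\tau_v$ with $\beta_v\leq\max\{\frac{1}{4},\beta+\epsilon\}$ to obtain the claimed bound. The obstacle you flag about the carried-over accumulator across learning-rate changes is handled in the paper simply by viewing each epoch as an independently restarted game, at essentially the same level of rigor as your sketch.
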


	\begin{lemma}\label{lem:b_unaware_comm}
		The expected communication regret of $\beta$-unaware A2C2  is bounded by:
		\begin{align*}
		\mathbb{E}\left[R_4^{comm}(T)\right] \leq O\left(M^2K^{\frac{1}{3}}\log(K)^{\frac{1}{3}}T^{\max\left\{\frac{2+\beta+\epsilon}{3},\frac{3}{4}\right\}}\right).
		\end{align*}
	\end{lemma}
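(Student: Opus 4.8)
The plan is to bound $R_4^{comm}(T)$ by the total number of slots that $\beta$-unaware A2C2 spends inside communication sub-phases, scaled by the maximal per-slot loss. During any such slot all $M$ players pull an arm --- either engaged in the current peer-to-peer transmission or holding on their own communication arm --- so a communication slot contributes at most $M$ to $R(T)$. The key simplification is that, unlike for the exploration regret, communication \emph{errors} do not enter here: an attacked bit may corrupt a follower's arm index or the synchronization state, but it never changes the \emph{length} of a communication sub-phase, and its downstream effect is already charged to $R_4^{err}(T)$ and $R_4^{sync}(T)$. Hence $\mathbb{E}[R_4^{comm}(T)]\le M\cdot\mathbb{E}[L_{comm}]$, and I would split $L_{comm}=L_{assign}+L_{rep}+L_{sync}$ into the slots used for (i) transmitting the error-detection--coded arm assignments, (ii) the per-phase uplink in which followers report whether the assignment was received correctly (so the leader keeps unbiased loss estimators), and (iii) the uplink--downlink procedure that keeps the estimate $\beta'$ synchronized across players.

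Next I would stratify every term by the value of $\beta'$. Since $\beta'$ starts at $\tfrac14$, is non-decreasing, advances in steps of $\epsilon$, and Corollary~\ref{asp:overall_attack} gives $V(T)\le O(T^{\beta})$, once $\beta'$ exceeds $\beta$ we have $T^{\beta'}=\omega(V(T))$ and no further attack can ever exhaust the estimated budget, so $\beta'$ freezes; consequently $\beta'$ takes only $O(1/\epsilon)$ distinct values, each at most $\beta^{*}:=\max\{\tfrac14,\beta\}+\epsilon$ asymptotically and each ``active'' for at most $T$ slots. For a fixed $\beta'$ the number of communication phases is at most $T/\tau(\beta')$ with $\tau(\beta')=\Theta\!\big(K^{-1/3}\log(K)^{-1/3}T^{(1+2\beta')/3}\big)$; each phase spends $\tilde{O}\!\big(k(T,\nu'(\beta'))\big)$ slots per follower on the assignment, where I would invoke an efficient constant-weight code so that a codeword has length $\tilde{O}(k(T,\nu'))$ rather than $K\,k(T,\nu')$. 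Substituting $\nu'(\beta')=(4\beta'-1)/3$ and multiplying by the per-slot factor $M$ gives an $L_{assign}$-contribution of order $\tilde{O}\!\big(M^{2}K^{1/3}T^{(1+2\beta')/3}\big)$; since $(1+2\beta')/3\le\max\{(2+\beta)/3,\tfrac34\}$ for all $\beta'\le\beta^{*}$, this is dominated by the claimed bound. The report term $L_{rep}$ has the same form but without the codeword-length blow-up, so it is handled identically.

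The delicate part --- and the step I expect to be the main obstacle --- is $L_{sync}$. The uplink--downlink procedure is invoked only at ``potential-update'' epochs, spaced every $\lceil T^{\beta'}/k(T,\nu'(\beta'))\rceil$ phases, and each invocation runs $N(\xi)\le\lceil T^{\xi}\rceil$ rounds of length $\Theta(k(T,\beta'))=\Theta(T^{\beta'})$, with $\xi=\tfrac{1+2\beta'}{3}$. I would (a) count the epochs active under a given $\beta'$ by combining the phase length $\tau(\beta')$ with the epoch spacing, (b) bound one invocation by $M\lceil T^{\xi}\rceil\,\Theta(T^{\beta'})$ using the cap on $N(\xi)$, and (c) verify that the triple $(\xi,\ \text{spacing},\ \text{round length})$ is simultaneously consistent with sublinearity and with the target $T$-exponent. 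This is precisely where the $T$-exponent is pinned down, and it mirrors the tuning behind Lemma~\ref{lem:a_unaware_comm} and Lemma~\ref{lem:a_unaware_sync} in the $\alpha$-unaware analysis, where $\xi$ was chosen to balance the synchronization \emph{communication} cost against the synchronization \emph{error} cost; the $\tfrac34$ floor in the exponent emerges here from the constraint $\beta'\ge\tfrac14$. Finally, summing the three contributions over the $O(1/\epsilon)$ values of $\beta'$ only multiplies constants, the dominant term coming from the largest (final) estimate $\beta'\le\beta^{*}$, which yields $\mathbb{E}[R_4^{comm}(T)]\le O\!\big(M^{2}K^{1/3}\log(K)^{1/3}T^{\max\{(2+\beta+\epsilon)/3,\,3/4\}}\big)$.
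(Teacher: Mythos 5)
Your decomposition is the paper's: communication slots are charged at $M$ per slot, split into arm assignment, the per-phase error report, and synchronization, stratified over the $O(1/\epsilon)$ values of the non-decreasing estimate $\beta'$, with the downstream effects of errors correctly pushed into $R_4^{err}$ and $R_4^{sync}$. Your assignment and report accounting is also fine (the paper keeps the length-$K\cdot k(T,\nu')$ one-hot codeword and obtains an assignment term of order $M^2K^{4/3}\log(K)^{1/3}T^{(1+2\beta')/3}$, which is already dominated in $T$, so your switch to a $\log K$-length constant-weight code is an unnecessary, algorithm-altering optimization). The genuine gap is the synchronization term, which you yourself flag as the main obstacle and then leave as a verification ``to be done'' --- and with the parameters you commit to, that verification fails. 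You take $\xi=\frac{1+2\beta'}{3}$ and sync-round length $\Theta(k(T,\beta'))=\Theta(T^{\beta'})$. The number of synchronization epochs under a fixed $\beta'$ is $(T/\tau)\cdot k(T,\nu')/T^{\beta'}=\Theta\big(K^{1/3}\log(K)^{1/3}T^{(1-\beta')/3}\big)$, and each epoch costs up to $M^2\lceil T^{\xi}\rceil\cdot\Theta(T^{\beta'})$ in regret, giving a total $T$-exponent of $\frac{1-\beta'}{3}+\frac{1+2\beta'}{3}+\beta'=\frac{2+4\beta'}{3}\geq 1$ for every $\beta'\geq\frac14$. The synchronization cost alone is at least linear, so the lemma cannot be recovered along the route you set up. Moreover, this cannot be fixed by re-tuning $\xi$ alone: keeping round length $T^{\beta'}$ and balancing against the synchronization-error term of Lemma~\ref{lem:b_unaware_sync} forces $\xi=\frac{1-\beta'}{2}$ and an exponent of $\frac56+\frac{\beta'}{6}\geq\frac78$, still above the claimed $\max\{\frac{2+\beta+\epsilon}{3},\frac34\}$.

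The paper's proof closes this step with different parameter values: each uplink--downlink round is charged at length $k(T,\nu')=\Theta(T^{(4\beta'-1)/3})$ rather than $\Theta(T^{\beta'})$, and $\xi=\frac{2-2\beta'}{3}$ is used (this is the value in the follower's pseudocode and the one actually consistent with the computations in the proofs of Lemmas~\ref{lem:b_unaware_comm} and~\ref{lem:b_unaware_sync}, even though the theorem statement and the leader's pseudocode say $\frac{1+2\beta'}{3}$). With those choices the exponent becomes $\frac{1-\beta'}{3}+\frac{2-2\beta'}{3}+\frac{4\beta'-1}{3}=\frac{2+\beta'}{3}$, which gives the stated bound after using $\beta'\leq\beta+\epsilon$ and the floor $\beta'\geq\frac14$. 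So the missing ingredient is not a counting idea but the correct joint tuning of the synchronization round length and $\xi$ that balances the synchronization \emph{communication} cost against the synchronization \emph{error} cost; had you carried out your step (c) you would have discovered that the parameters quoted from the main text are inconsistent with the claimed bound and must be re-derived.
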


	\begin{lemma}\label{lem:b_unaware_sync}
		The expected regret caused by potential synchronization errors of $\beta$-unaware A2C2  is bounded by:
		\begin{align*}
		\mathbb{E}\left[R_4^{sync}(T)\right] \leq O\left(MK^{\frac{1}{3}}\log(K)^{\frac{1}{3}}T^{\max\left\{\frac{2+\beta+\epsilon}{3},\frac{3}{4}\right\}}\right).
		\end{align*}
	\end{lemma}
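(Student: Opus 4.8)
The plan is to mirror the structure used for $\mathbb{E}[R_3^{sync}(T)]$ in Lemma \ref{lem:a_unaware_sync}, but adapt it to the two features unique to $\beta$-unaware A2C2: the synchronization happens only at the sparse updating epochs (every $\lceil T^{\beta'}/k(T,\nu')\rceil$ phases), and each synchronization round has length $k(T,\beta')=\Theta(T^{\beta'})$ with a random number of rounds $N(\xi)$ uniform on $[0,\lceil T^{\xi}\rceil]$. First I would count how many synchronization epochs there can be over the horizon $T$. Since $\beta'$ escalates from $1/4$ in steps of $\epsilon$ up to at most $1$, there are at most $O(1/\epsilon)=O(1)$ distinct values of $\beta'$; for each fixed value of $\beta'$, the number of exploration phases before the next escalation is at most $T/\tau = O(T^{1-\frac{1+2\beta'}{3}})$, hence the number of synchronization epochs at that level is at most $(T/\tau)/\lceil T^{\beta'}/k(T,\nu')\rceil = O(T^{\frac{2-2\beta'}{3}}\cdot k(T,\nu')/T^{\beta'})$. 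Each synchronization epoch costs at most $N(\xi)$ rounds times $k(T,\beta')$ bits of forced collisions times $M$ players, i.e. at most $O(M T^{\xi}\cdot T^{\beta'})$ worst-case loss just for executing the synchronization itself; summing over all epochs and all $O(1)$ levels of $\beta'$ gives one contribution to $R_4^{sync}(T)$.

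The second and dominant contribution is the worst-case \emph{linear} loss incurred when a synchronization fails: as argued in the algorithm description, the adversary must attack \emph{exactly} the last of the $N(\xi)+1$ downlink rounds, and since $N(\xi)$ is drawn from a shared randomness source uniform on $[0,\lceil T^{\xi}\rceil]$, any oblivious adversary hits the last round with probability at most $1/(\lceil T^{\xi}\rceil+1)=O(T^{-\xi})$. Because a single synchronization failure can desynchronize the players for the entire remainder of the game — up to $T$ time steps over $M$ players — the expected loss from one potentially-failed synchronization epoch is at most $O(MT\cdot T^{-\xi})=O(MT^{1-\xi})$. Multiplying by the number of synchronization epochs and substituting $\xi=\frac{1+2\beta'}{3}$, $\nu'=\frac{4\beta'-1}{3}$, $k(T,\beta')=\Theta(T^{\beta'})$, $k(T,\nu')=\Theta(T^{\nu'})$ into both contributions, I would show every term is $O(MK^{\frac{1}{3}}\log(K)^{\frac{1}{3}}T^{\max\{\frac{2+\beta+\epsilon}{3},\frac{3}{4}\}})$ after taking the maximum over the finitely many values of $\beta'\le\beta+\epsilon$ (using that once $\beta'>\beta$ no further escalation or failure occurs asymptotically, so the dominant term is governed by the final estimate $\beta'\le\beta+\epsilon$).

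The main obstacle I anticipate is the bookkeeping to verify that the choice $\xi=\frac{1+2\beta'}{3}$ simultaneously (i) makes the executed-synchronization cost $O(MT^{\xi+\beta'})\times(\#\text{epochs})$ subdominant, and (ii) makes the failure cost $O(MT^{1-\xi})\times(\#\text{epochs})$ subdominant, so that neither of these dominates $R_4^{expl}$ or $R_4^{comm}$; this requires carefully tracking the interplay between the epoch spacing $\lceil T^{\beta'}/k(T,\nu')\rceil$, the per-epoch length $k(T,\beta')$, and the exploration-phase length $\tau=\Theta(T^{\frac{1+2\beta'}{3}})$, and confirming that the exponent balances at $\frac{2+\beta'}{3}$ (capped by $\frac{3}{4}$, which arises when $\beta'\le\frac14$ and $\nu'=0$). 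A secondary subtlety is making the "asymptotically no more failures once $\beta'>\beta$" statement precise — I would handle this exactly as in the $\alpha$-unaware analysis (Appendix \ref{app:a_unaware}), decomposing the regret by the value of $\beta'$ and absorbing the finitely-many pre-convergence terms into the bound for the terminal estimate. The remaining steps ($M$, $K$, $\log K$ factors) follow routinely from the definitions of $\tau$ and $\eta$, exactly as in the $\beta$-aware proof of Theorem \ref{thm:b_aware}.
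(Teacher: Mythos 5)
Your proposal matches the paper's proof for the term this lemma actually covers: the paper bounds $\mathbb{E}[R_4^{sync}(T)]$ exactly as in your ``second contribution,'' namely (number of synchronization epochs at estimate $\beta_j$, i.e.\ $\lceil T_{e,j}/\tau_j\rceil\cdot k(T,\nu_j)/T^{\beta_j}$) times (failure probability $1/\lceil T^{\xi_j}\rceil$ from the adversary having to hit exactly the last randomized round) times (worst-case linear loss $MT$), summed over the finitely many estimates and dominated by the terminal one. Your ``first contribution'' --- the forced-collision cost of executing the uplink--downlink rounds themselves --- is not part of $R_4^{sync}(T)$ in the paper's decomposition; it is charged to $R_4^{comm}(T)$ and handled in Lemma \ref{lem:b_unaware_comm}, so including it here would double count, but otherwise the approach is the same.
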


    Since $\beta$-unaware A2C2 is designed to be capable of handling a certain amount of communication errors, the exploration regret associated with failed communications (i.e., $R_4^{err}(T)$) is separately analyzed, as opposed to being lump summed into one exploration loss as in the local attackability analysis.
	\begin{lemma}\label{lem:b_unaware_err}
		The expected regret caused by explorations after failed communications in $\beta$-unaware A2C2  is bounded by:
    	\begin{align*}
		\mathbb{E}\left[R_4^{err}(T)\right] \leq O\left( M^2K^{\frac{1}{3}}\log(K)^{-\frac{1}{3}}T^{\max\left\{\frac{2+\beta+\epsilon}{3},\frac{3}{4}\right\}}\right).
		\end{align*}
	\end{lemma}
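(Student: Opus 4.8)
The plan is to bound the total number of exploration time slots that follow a failed communication phase, and multiply by the per-slot worst-case loss $M$ together with the per-phase exploration length. The key insight, inherited from the $\beta$-aware design philosophy in Section~\ref{subsec:b_aware}, is that the \emph{total} number of communication failures is controlled by the global attackability: since the error-detection code has codeword length $k(T,\nu')=\Theta(T^{\nu'})$ with $\nu'=\frac{4\beta'-1}{3}$, each successful attack on an arm-assignment communication consumes at least $k(T,\nu')$ units of the adversary's budget on the relevant communication arm. With $V(T)\le O(T^{\beta})$ and $\beta'$ escalating to eventually exceed $\beta$, the number of failed communication phases throughout the game is at most $O(V(T)/k(T,\nu'))$ once $\beta'$ has stabilized, plus a bounded number during the escalation. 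Each such failure can desynchronize the players for at most one subsequent exploration phase of length $\tau=\lceil K^{-\frac{1}{3}}\log(K)^{-\frac{1}{3}}T^{\frac{1+2\beta'}{3}}\rceil$, incurring worst-case loss $M\tau$ in that phase.

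First I would decompose $R_4^{err}(T)$ over the discrete values $\beta'\in\{\tfrac14,\tfrac14+\epsilon,\dots\}$ taken during the escalation, writing $R_4^{err}(T)=\sum_{\beta'} R_4^{err}(T\mid\beta')$. For each fixed $\beta'$, the number of failed arm-assignment communications is at most $O\bigl(T^{\beta}/k(T,\nu')\bigr)=O\bigl(T^{\beta-\nu'}\bigr)$ because each failure requires the received codeword (of weight $k(T,\nu')$ per block, i.e., using $K\cdot k(T,\nu')$ coded bits) to be corrupted, which costs at least $k(T,\nu')$ adversarial loss-$1$'s on some communication arm; summing the corrupted slots across arms is still $O(T^\beta)$ by Definition~\ref{def:overall_attack}. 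Each failure contaminates one exploration phase, so the contribution is $O\bigl(M\tau\cdot T^{\beta-\nu'}\bigr)$. Substituting $\tau=O\bigl(MK^{-\frac13}\log(K)^{-\frac13}T^{\frac{1+2\beta'}{3}}\bigr)$ and $\nu'=\frac{4\beta'-1}{3}$ gives a per-$\beta'$ bound of order $M^2K^{-\frac13}\log(K)^{-\frac13}T^{\frac{1+2\beta'}{3}+\beta-\frac{4\beta'-1}{3}}=M^2K^{-\frac13}\log(K)^{-\frac13}T^{\frac{2-2\beta'}{3}+\beta}$, which is largest when $\beta'$ is smallest; since $\beta'\ge\frac14$ this is at most $M^2K^{-\frac13}\log(K)^{-\frac13}T^{\frac{1}{2}+\beta}$. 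I would then check that for the relevant range of $\beta$ this is dominated by $T^{\max\{\frac{2+\beta+\epsilon}{3},\frac34\}}$; indeed, the algorithm's choice of $\beta'$ (and the stopping rule $\beta'\le 1$) is precisely calibrated so that once $\beta'>\beta$ no further failures occur, and the aggregate over the $O(1/\epsilon)$ values of $\beta'$ only multiplies by a constant.

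The main obstacle I anticipate is making the ``each failure costs $k(T,\nu')$ budget'' accounting rigorous in the presence of \emph{uneven} attacks and the interplay with the synchronization rounds: a failed arm-assignment communication for one follower may not be a failure for another, and the error-detection decoder outputs the set $\mathcal{S}$ whose size exceeds one exactly when enough bit-$0$ blocks are flipped to all-ones. I would handle this by charging, for each phase declared failed, the adversarial budget spent on whichever communication arm actually had a fully-corrupted zero-block (at least one must exist for $|\mathcal S|>1$), and noting that distinct failed phases use disjoint time windows hence disjoint budget, so the total is at most $V(T)$ summed over arms, i.e. still $O(KT^\beta)$ — the extra factor $K$ being absorbed since we only need the arm achieving the max in Definition~\ref{def:overall_attack}, or alternatively tracked through the $K^{1/3}$ factor in the final bound. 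A secondary subtlety is the boundary case $\beta\le\frac14$ where $\nu'=0$ and no coding is used, in which case the failure count is $O(T^\beta/1)=O(T^\beta)$ but $\tau=O(T^{\frac{1+2\beta'}{3}})$ with $\beta'=\frac14$ fixed, giving $O(M^2K^{-\frac13}T^{\frac12+\beta})\le O(M^2K^{-\frac13}T^{\frac34})$ since $\beta\le\frac14$, matching the stated bound.
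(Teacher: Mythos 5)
There is a genuine gap in your accounting of how many failed communication phases can occur under each estimate $\beta'$. You bound the number of failures at a fixed stage $\beta_j$ by $O\bigl(T^{\beta}/k(T,\nu_j)\bigr)$, i.e., by the adversary's \emph{full} global budget divided by the codeword length. Plugging in $\tau_j=\Theta\bigl(K^{-\frac13}\log(K)^{-\frac13}T^{\frac{1+2\beta_j}{3}}\bigr)$ and $\nu_j=\frac{4\beta_j-1}{3}$, your per-stage exponent is $\frac{2-2\beta_j}{3}+\beta$, which you correctly note is maximized at the smallest estimate $\beta_j=\frac14$, giving $T^{\frac12+\beta}$. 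But $\frac12+\beta\le\max\{\frac{2+\beta+\epsilon}{3},\frac34\}$ holds only for $\beta\le\frac14$; for any $\beta>\frac14$ (e.g., $\beta=\frac12$ gives exponent $1$ versus the claimed $\approx 0.83$) your bound strictly exceeds the lemma's statement, so the final ``check that this is dominated'' step you defer to would fail. The phrase ``plus a bounded number during the escalation'' papers over exactly the stages where the problem lies.

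The missing ingredient is that the number of failures at stage $\beta_j$ is controlled not by the adversary's true budget $T^{\beta}$ but by the algorithm's own counter: each player accumulates $k(T,\nu_j)$ in the counter $C$ per detected failure and triggers the escalation of $\beta'$ once $C\ge\lceil T^{\beta_j}\rceil$, with the check performed every $\lceil T^{\beta_j}/k(T,\nu_j)\rceil$ rounds. Hence at most $D_j=2\lceil T^{\beta_j}/k(T,\nu_j)\rceil$ failed rounds per arm can occur before the estimate is bumped, \emph{regardless of how much budget the adversary still has}. This replaces your exponent $\beta$ by $\beta_j$, giving a per-stage contribution of order $M^2K^{-\frac13}\log(K)^{-\frac13}T^{\beta_j-\nu_j+\frac{1+2\beta_j}{3}}=M^2K^{-\frac13}\log(K)^{-\frac13}T^{\frac{2+\beta_j}{3}}$, which is now \emph{increasing} in $\beta_j$ and is therefore dominated by the final estimate $\beta_v\le\beta+\epsilon$ (and by $\frac34$ when $\beta_v=\frac14$), recovering the stated bound after summing over the $v+1=O(1/\epsilon)$ stages. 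Your remaining ingredients --- the decomposition over stages, the ``each failure consumes at least $k(T,\nu')$ budget'' charging argument, the $M\tau$ loss per contaminated exploration phase, and the $\beta\le\frac14$ boundary case --- all match the paper's proof and are fine.
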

	In addition to similar optimizations over $\tau$ and $\xi$ as in the analysis of $\alpha$-unaware-A2C2, the choice of $\nu'$ is also optimized so that the exploration regret caused by the global attackability, i.e. $\mathbb{E}[R_4^{err}(T)]$, does not dominate the total regret.

	\section{Discussions}
	\label{sec:dis}
	We discuss some algorithmic details and a few directions for potential future research.

	\mypara{Sublinear regrets.} As shown in Theorems \ref{thm:a_aware} to \ref{thm:b_unaware}, the regrets of the A2C2 algorithms can be sublinear as long as the adversary cannot (asymptotically) attack all time slots, i.e., $\alpha< 1$ or $\beta< 1$, which is similar to the no-sensing stochastic setting that sublinear regrets can be achieved only when $\mu_{\min}>0$. Further, as stated in Section \ref{sec:advattack}, it is possible that $\beta=1$ while $\alpha<1$, and in such cases, $\alpha$-(un)aware A2C2 can still achieve a sublinear regret.

	\mypara{The choice of $\epsilon$.} With an unknown attackability, i.e. $\alpha$ or $\beta$-unaware A2C2, $\epsilon$ can be an arbitrarily small constant in the asymptotic case. In the finite case, this choice influences both the convergence of estimation and the regret behavior. A larger $\epsilon$ can let the algorithm escalate faster but with a potential overestimation and a larger regret. With a smaller $\epsilon$, although the escalation is slower, better performance can be achieved with a more precise estimation.

	\mypara{Implicit communication of loss information.}	As stated in Section~\ref{sec:arm_assign}, the choices of meta-arms in A2C2 algorithms are only based on observations of the leader. While this simplifies communications, an open problem is how to transmit the collected loss information between players through collisions. The methods of arm index sharing proposed in this work is a starting point, but more careful designs may further reduce the regret, e.g., quantizing the losses.

	\mypara{Combining A2C2 and \cite{bubeck2020non}.}	Without a mature method of statistics sharing, another idea to increase the information usage is to have multiple leaders instead of one, which alludes to a hybrid algorithm of this work and \cite{bubeck2020non}. Specifically, each leader with a certain number of followers can form as a meta-player. Among the meta-players, the method from \cite{bubeck2020non} can be adopted since it does not require information sharing. Within each meta-player, A2C2 algorithms can be performed to let the leader coordinate the followers. It would be interesting to see a rigorous analysis of this hybrid algorithm, especially its dependence on  $M$.

	\mypara{Dynamic setting.} While this work assumes the players are fixed in the entire game, it is an interesting (but also very challenging) open problem to consider the dynamic setting, where players can enter and leave the game freely. Some attempts have been made in the stochastic MP-MAB \cite{rosenski2016multi,boursier2019sic} and the collision-sensing adversarial MP-MAB \cite{bande2019adversarial}, while the no-sensing adversarial dynamic environment remains largely open. For the attackability-aware setting (i.e., $\alpha/\beta$-aware A2C2), algorithmic extensions are relatively easy since the new players can be synchronized with the already existing players using the knowledge of attackability, but the theoretical analysis for this case may be difficult. If the new players do not have information of the current estimation of attackability (i.e., $\alpha/\beta$-unaware A2C2), this extension becomes difficult. With some constraint on the frequency of players entering or leaving the game, it is conceivable to run $\alpha/\beta$-unaware A2C2 independently on each of these intervals, which is similar to \cite{rosenski2016multi}. Nevertheless, its regret analysis would be highly nontrivial.

	\section{Experiments}\label{sec:exp}
    More numerical illustrations comparing A2C2 and the algorithm in \cite{bubeck2020non} are provided in Appendix~\ref{apdx:sim}. The A2C2 algorithms are also empirically evaluated against a naive adversarial algorithm as well as the EC-SIC design in \cite{Shi2020aistats}, which is a stochastic no-sensing algorithm. In the naive adversarial algorithm, each player individually runs EXP3 \cite{auer2002nonstochastic} using her own observed loss information, which results in $M$ parallel EXP3 algorithms. The EC-SIC algorithm is the state-of-the-art stochastic no-sensing algorithm, which also utilizes error-correction coding but is confined to stochastic loss distributions. The testing environments have $M=4$ and $K=10$ (same as the numerical illustrations in Fig.~\ref{fig:attack}) and the step-size parameter $\epsilon$ is set as $0.01$. The following figures are plotted by running the algorithms under different time horizons and the regret result of each horizon is the average over $50$ runs.

	The first loss sequence for evaluation is generated similar to \cite{bande2019adversarial}. Specifically, for arm $k$, a random variable $c_k$ is first sampled from the uniform distribution on $[0.2,0.9]$, then $l_k(t)$ is drawn from the uniform distribution on $[c_k,0.9]$. This results in a non-stationary environment. After the generation, to explicitly characterize the adversarial attackability, contiguous loss-$1$ sequences of length $50$ are randomly spread in the whole loss sequence to replace the originally generated losses. Fig.~\ref{fig:nonstationary1} illustrates the performance of $\alpha$-unaware and $\beta$-aware A2C2s against the other two benchmarks. It can be observed that $\alpha$-unaware and $\beta$-aware A2C2s outperform the baselines, and $\beta$-aware A2C2 has the best performance, which corroborates the regret analysis.

	\begin{figure}[htb]
		\centering
		\includegraphics[width=0.9\linewidth]{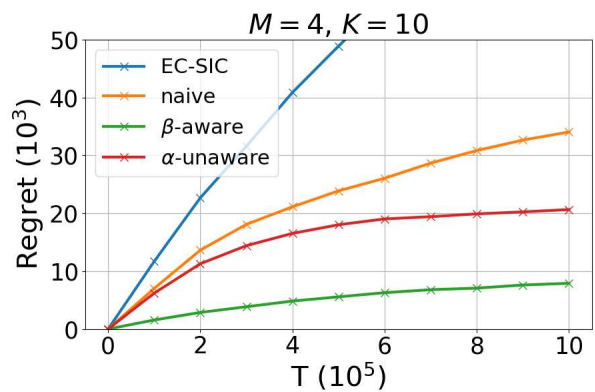}
		\caption{Regret comparison where the loss generations are non-stationary in the entire horizon.}
		\label{fig:nonstationary1}
	\end{figure}
	\begin{figure}[htb]
		\centering
		\includegraphics[width=0.9\linewidth]{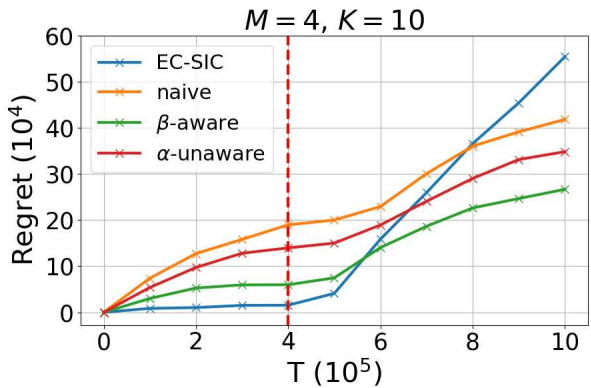}
		\caption{Regret comparison where the loss generations change at $T'=4\times 10^5$ (labeled with the dotted red line).}
		\label{fig:nonstationary2}
	\end{figure}

	The second loss sequence is generated by shifting the mean values of the stochastic loss distribution at a changing point, which is similar to \cite{alatur2020multi}. Specifically, before $T'=4\times 10^5$, the loss expectations are $[a_k]_{k=1}^K = [0.2,0.2,0.2,0.2,0.4,0.4,0.4,0.4,0.4,0.4]$, where arms $1-4$ are better than the others, and the loss for each arm $k$ is sampled from the uniform distribution on $[a_k-0.15,a_k+0.15]$. Then, after $T'$, some shifts happen to the loss expectations as $a_1=a_4=0.8$ and $a_5=a_6=0.2$, thus arms $2,3,5,6$ now generate lower losses. Then, contiguous loss-$1$ sequences of length $50$ are similarly spread into the whole loss sequences. The results are reported in Fig.~\ref{fig:nonstationary2}. It can be observed that while EC-SIC has the best performance at the beginning, its performance quickly degrades after the changing point $T'$ since it has already converged to the originally good arms, some of which (arm $1$ and $4$) however become sub-optimal afterwards. Although the performance of the two A2C2 algorithms also fluctuate after the changing point, they quickly adjust to the new environment and re-gain a sublinear regret behavior.

	\section{Conclusions}
	\label{sec:conc}
	This work made progress in the no-sensing adversarial MP-MAB problem by incorporating implicit communication. We have introduced the concept of \emph{attackability} to categorize all possible adversaries from either a local view or a global view, and designed \emph{Adversary-Adaptive Collision-Communication (A2C2)}, a family of algorithms that can handle known or unknown attackabilities. The algorithmic contributions mainly came from several new tools in information theory and communication theory, such as Z-channel model, error-correction coding, (novel) error-detection repetition code, and uplink-downlink communication with randomized length. Theoretical analyses showed that the proposed algorithms have attackability-dependent regrets, which eliminate the exponential dependence on the number of players in the state-of-the-art no-sensing adversarial MP-MAB research, and revealed a new dimension of hardness, i.e., attackability, that compliments the hardness associated with the number of players.

	\appendix
	\subsection{Special cases: $M=1$ and collision-sensing}\label{app:singleplayer}
	The regrets from communications, synchronization errors, and explorations after failed communications do not exist for $M=1$ (single player); our result recovers $O(T^{\frac{1}{2}})$ in this special case. The reason that the asymptotic regret in Theorems \ref{thm:a_aware} to \ref{thm:b_unaware} cannot recover this special case is that we focus on the scaling behavior, and thus the multiplicative factor $M-1$ is simplified to $M$ in the ensuing analysis.

	We further note that if the collisions can be perceived by players, i.e., collision-sensing, there is no need for error-correction/detecting coding, attackability estimation and synchronization. In this case, the proposed algorithms recover the collision-sensing algorithm in \cite{alatur2020multi}.

	\subsection{The centralized algorithm}\label{app:cent}
	For the completeness of this work, we first provide a regret analysis of the algorithm described in Section \ref{sec:arm_assign}, which is based on \cite{alatur2020multi}.
	\begin{theorem}
		The exploration regret of the centralized algorithm with $\eta=\sqrt{\log\tbinom{K}{M}/TKM}$ described in Section \ref{sec:arm_assign} has a regret upper bound of:
		\begin{equation}\label{eqn:c_regret}
		\mathbb{E}[R_c(T)] \leq 2M\sqrt{K\log(K)T}.
		\end{equation}
	\end{theorem}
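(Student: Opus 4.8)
The plan is to view the centralized algorithm of Section~\ref{sec:arm_assign} as the exponential-weights (EXP3) algorithm run over the meta-arm set $\mathcal{K}$, with $|\mathcal{K}|=\binom{K}{M}$ experts and per-round loss estimators $\tilde\ell_A(t):=\sum_{k\in A}\tilde l_k(t)$, so that $\tilde L_A(t)=\sum_{v=1}^{t-1}\tilde\ell_A(v)$ and $P_A(t)\propto e^{-\eta\tilde L_A(t)}$. First I would record two facts about the estimators. \emph{Non-negativity:} in the centralized model the agent never collides, so $\hat l_{A_1(t)}(t)=l_{A_1(t)}(t)\in[0,1]$, hence $\tilde l_k(t)\ge 0$ and $\tilde\ell_A(t)\ge 0$. \emph{Conditional unbiasedness:} writing $q_k(t):=\sum_{A:k\in A\in\mathcal{K}}P_A(t)=\Pr[k\in A(t)\mid\mathcal{F}_{t-1}]$ and using that the leader's arm $A_1(t)$ occupies a uniformly random one of the $M$ slots of the sampled meta-arm (the random-permutation step), we get $\Pr[A_1(t)=k\mid\mathcal{F}_{t-1}]=q_k(t)/M$, so $\mathbb{E}[\tilde l_k(t)\mid\mathcal{F}_{t-1}]=M\cdot\tfrac{l_k(t)}{q_k(t)}\cdot\tfrac{q_k(t)}{M}=l_k(t)$ and $\mathbb{E}[\tilde\ell_A(t)\mid\mathcal{F}_{t-1}]=\ell_A(t):=\sum_{k\in A}l_k(t)$. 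Since there are no collisions, the realized system loss at round $t$ is $\ell_{A(t)}(t)$ and the benchmark in \eqref{eqn:regret_def} is $\min_{A^\star\in\mathcal{K}}\sum_t\ell_{A^\star}(t)$, so $R_c(T)=\sum_t\ell_{A(t)}(t)-\min_{A^\star\in\mathcal{K}}\sum_t\ell_{A^\star}(t)$.

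Next I would invoke the standard potential argument for exponential weights. With $W(t):=\sum_{A\in\mathcal{K}}e^{-\eta\tilde L_A(t)}$, the inequality $e^{-x}\le 1-x+\tfrac12 x^2$ (valid for all $x\ge 0$, applicable since $\tilde\ell_A(t)\ge 0$) together with $1+y\le e^y$ gives, after telescoping $\log\big(W(t+1)/W(t)\big)$ over $t=1,\dots,T$, that for every fixed $A^\star\in\mathcal{K}$
\begin{equation*}
\sum_{t=1}^{T}\sum_{A\in\mathcal{K}}P_A(t)\tilde\ell_A(t)-\sum_{t=1}^{T}\tilde\ell_{A^\star}(t)\le \frac{\log\binom{K}{M}}{\eta}+\frac{\eta}{2}\sum_{t=1}^{T}\sum_{A\in\mathcal{K}}P_A(t)\tilde\ell_A(t)^2 .
\end{equation*}
Taking expectations, using unbiasedness and $\mathbb{E}[\ell_{A(t)}(t)\mid\mathcal{F}_{t-1}]=\sum_A P_A(t)\ell_A(t)$, and choosing $A^\star$ to be the best fixed meta-arm, the left-hand side becomes $\mathbb{E}[R_c(T)]$.

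The crux, and the step that keeps $\binom{K}{M}$ out of the final bound, is controlling the second-order term. Because $\tilde l_k(t)$ is supported on the single arm $k=A_1(t)$, we have $\tilde\ell_A(t)=\tilde l_{A_1(t)}(t)\,\mathds{1}\{A_1(t)\in A\}$ with no cross terms, so
\begin{equation*}
\sum_{A\in\mathcal{K}}P_A(t)\tilde\ell_A(t)^2=\tilde l_{A_1(t)}(t)^2\,q_{A_1(t)}(t)\le\Big(\frac{M}{q_{A_1(t)}(t)}\Big)^{2} q_{A_1(t)}(t)=\frac{M^2}{q_{A_1(t)}(t)},
\end{equation*}
using $\hat l\le 1$. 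Averaging over $A_1(t)$ with $\Pr[A_1(t)=k\mid\mathcal{F}_{t-1}]=q_k(t)/M$ yields $\mathbb{E}\big[\sum_{A}P_A(t)\tilde\ell_A(t)^2\mid\mathcal{F}_{t-1}\big]\le\sum_{k\in[K]}\tfrac{q_k(t)}{M}\cdot\tfrac{M^2}{q_k(t)}=MK$, so the whole double sum is at most $MKT$. Substituting, $\mathbb{E}[R_c(T)]\le \frac{\log\binom{K}{M}}{\eta}+\frac{\eta MKT}{2}$; with $\eta=\sqrt{\log\binom{K}{M}/(TKM)}$ this is $\frac32\sqrt{MKT\log\binom{K}{M}}$, and $\log\binom{K}{M}\le M\log K$ bounds it by $\frac32 M\sqrt{KT\log K}\le 2M\sqrt{KT\log K}$, i.e.\ \eqref{eqn:c_regret}.

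I expect the main obstacle to be precisely this second-order bound: one must recognize that the meta-arm (combinatorial) estimator does not incur the usual $\binom{K}{M}$ variance blow-up, because exactly one elementary arm---the leader's---is active each round, which both eliminates all cross terms $\tilde l_k\tilde l_j$ ($k\ne j$) and lets the marginal probabilities $q_k(t)$ cancel the importance-weighting factors $1/q_k(t)$. A secondary point needing care is the identity $\Pr[A_1(t)=k\mid\mathcal{F}_{t-1}]=q_k(t)/M$, i.e.\ that the leader occupies a uniformly random slot of the sampled meta-arm; this is exactly the property making the single-leader estimator unbiased, and with it the rest is the textbook EXP3 computation.
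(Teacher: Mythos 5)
Your proposal is correct and follows essentially the same route as the paper's proof: unbiasedness of the single-leader importance-weighted estimator via $\Pr[A_1(t)=k\mid \mathcal{F}_{t-1}]=q_k(t)/M$, the standard exponential-weights bound over the $\binom{K}{M}$ meta-arms, and the observation that only the leader's arm carries a nonzero estimator, which kills the cross terms and cancels the importance weights to give the $MKT$ second-order bound. The only cosmetic differences are that you carry the sharper $\eta/2$ constant from the potential argument and evaluate the second moment by first conditioning on $A_1(t)$ and then averaging, whereas the paper takes the conditional expectation of $(\tilde{l}_A(t))^2$ for each meta-arm before summing; both orders yield the same $MKT$ and the same final bound.
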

	\begin{proof}
		First, we show that $\tilde{l}_A(t) = \sum_{k\in A}\tilde{l}_k(t)$ is an unbiased estimation of the true loss from meta-arm $A$ at time $t$, i.e., $l_A(t)=\sum_{k\in A}l_k(t)$. Denoting $P(t)=\{P_A(t)\}_{A\in\mathcal{K}}$, we first show that for any arm $k\in[K]$, $\tilde{l}_k(t)$ is an unbiased estimation of $l_k(t)$ as
		\begin{align*}
		&\mathbb{E}\left[\tilde{l}_k(t)|P(t)\right]\\
		&\overset{(i)}{=} M \frac{l_k(t)}{\sum_{A:k\in A\in\mathcal{K}}P_A(t)} \mathbb{P}(k=A_1(t))\\
		& = M \frac{l_k(t)}{\sum_{A:k\in A\in\mathcal{K}}P_A(t)} \mathbb{P}\left(k\in A(t))\mathbb{P}(k = A_1(t)|k\in A(t)\right)\\
		&\overset{(ii)}{ = } M \frac{l_k(t)}{\sum_{A:k\in A\in\mathcal{K}}P_A(t)} \sum_{A:k\in A\in\mathcal{K}} P_A(t)  \frac{1}{M} \\
		&= l_k(t),
		\end{align*}
		where equations (i) and (ii) are from the definitions of $\tilde{l}_k(t)$ and $P_A(t)$, respectively. From the law of total expectation, we can derive $\mathbb{E}[\tilde{l}_k(t)]=\mathbb{E}[\mathbb{E}[\tilde{l}_k(t)|P(t)]] = l_k(t)$. Finally, with $\tilde{l}_A(t) = \sum_{k\in A}\tilde{l}_k(t)$, by the linearity of expectation,  $\tilde{l}_A(t)$ is an unbiased estimation of $l_A(t)$.

		With the standard EXP3 regret guarantees, the centralized regret \cite{auer2002nonstochastic,alatur2020multi} is bounded as:
		\begin{equation*}
			\mathbb{E}\left[R_c(T)\right] \leq \eta\sum_{t=1}^T\sum_{A\in \mathcal{K}}\mathbb{E}\left[P_A(t) \mathbb{E}\left[(\tilde{l}_A(t))^2|P(t)\right]\right]+\frac{\log\tbinom{K}{M}}{\eta}.
		\end{equation*}
		The term $\mathbb{E}[(\tilde{l}_A(t))^2|P(t)]$ can be simplified as:
		\begin{align*}
		\mathbb{E}&\left[(\tilde{l}_A(t))^2|P(t)\right] \\
		=& \mathbb{E}\left[\left(\sum\nolimits_{k\in A}\tilde{l}_k(t)\right)^2|P(t)\right] \\
		=& \sum_{j,k \in A}\mathbb{E}\left[\tilde{l}_k(t) \tilde{l}_j(t)|P(t)\right] \\
		\overset{(i)}{=} &\sum_{k \in A }\mathbb{E}\left[(\tilde{l}_k(t))^2|P(t)\right]  \\
		=& \sum_{k \in A}\Big(\frac{M l_k(t)}{\sum_{B:k\in B\in\mathcal{K}}P_B(t)}\Big)^2\mathbb{P}(k=A_1(t))\\
		 = &\sum_{k \in A}\Big(\frac{M l_k(t)}{\sum_{B:k\in B\in\mathcal{K}}P_B(t)}\Big)^2\mathbb{P}(k\in A(t))\\
		&\cdot \mathbb{P}(k = A_1(t)|k\in A(t)) \\
		= &M\sum_{k \in A}\frac{(l_k(t))^2}{\sum_{B:k\in B\in\mathcal{K}}P_B(t)},
		\end{align*}
		where equation (i) is because $\tilde{l}_k(t)\neq 0$ holds for at most one arm. With this result, we get:
		\begin{align*}
			&\mathbb{E}\left[R_c(T)\right] \\
			&\leq  \eta\sum_{t=1}^T\sum_{A\in \mathcal{K}}\mathbb{E}\left[P_A(t)M\sum_{k \in A}\frac{(l_k(t))^2}{\sum_{B:k\in B\in\mathcal{K}}P_B(t)}\right]+\frac{\log\tbinom{K}{M}}{\eta}\\
			&= M\eta\sum_{t=1}^T\mathbb{E}\left[\sum_{A\in \mathcal{K}}P_A(t)\sum_{k \in A}\frac{(l_k(t))^2}{\sum_{B:k\in B\in\mathcal{K}}P(t)}\right]+\frac{\log\tbinom{K}{M}}{\eta}\\
			& = M\eta\sum_{t=1}^T\mathbb{E}\left[\sum_{k\in [K]}\frac{(l_k(t))^2\sum_{A:k\in A\in\mathcal{K}}P_A(t)}{\sum_{B:k\in B\in\mathcal{K}}P_B(t)}\right]+\frac{\log\tbinom{K}{M}}{\eta}\\
			& = M\eta\sum_{t=1}^T\mathbb{E}\left[\sum\nolimits_{k\in [K]}(l_k(t))^2\right]+\frac{\log\tbinom{K}{M}}{\eta}\\
			&\overset{(i)}{\leq} MKT\eta+\frac{\log\tbinom{K}{M}}{\eta}\\
			&\overset{(ii)}{=} 2\sqrt{MKT\log\tbinom{K}{M}} \\
			&\leq 2M\sqrt{KT\log(K)},
		\end{align*}
		where inequality (i) is the result of $l_k(t)\leq 1$, and equation (ii) is from $\eta=\sqrt{\log\tbinom{K}{M}/MKT}$.
	\end{proof}

	The following result establishes a regret bound for the blocked version of the centralized algorithm, which is important for the ensuing analysis in the decentralized setting.

	\begin{theorem}[\cite{alatur2020multi,arora2012online}]
		\label{thm:block}
		Let $\Pi$ be a bandit algorithm with an expected regret upper bound of $R(T)$. Then the blocked version of $\Pi$ with a block size $\tau$ has a regret upper bound of $\tau R(T/\tau)+\tau$.
	\end{theorem}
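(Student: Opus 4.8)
This is a mini-batching reduction: the blocked version of $\Pi$ is nothing but $\Pi$ itself run on a derived instance whose horizon is $T/\tau$ and whose losses are aggregated over consecutive blocks of length $\tau$. Concretely, I would partition $[T]$ into $B=\lceil T/\tau\rceil$ consecutive blocks $\mathcal{B}_1,\dots,\mathcal{B}_B$ of length $\tau$ (the last block truncated so that $\bigcup_b\mathcal{B}_b=[T]$), and for each arm $k$ and block $b$ define the aggregated per-block loss $\bar l_k(b):=\frac{1}{\tau}\sum_{t\in\mathcal{B}_b}l_k(t)\in[0,1]$. By construction, the blocked version of $\Pi$ is exactly $\Pi$ playing the $B$-round bandit game with loss collection $\{\bar l_k(b)\}$: at block $b$ it commits to the single arm $\pi(b)$ that $\Pi$ outputs, holds it for all $\tau$ rounds of $\mathcal{B}_b$, and then returns $\bar l_{\pi(b)}(b)$ (the average loss it observed, which is legitimate bandit feedback for the aggregated instance since $\Pi$ only ever queries the loss of the arm it played) before asking $\Pi$ for the next arm.

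First I would apply the regret guarantee of $\Pi$ to this $B$-round instance, whose losses lie in $[0,1]$, to obtain
\begin{equation*}
\mathbb{E}\Big[\sum_{b=1}^{B}\bar l_{\pi(b)}(b)\Big]-\min_{k\in[K]}\sum_{b=1}^{B}\bar l_{k}(b)\;\le\;R(B).
\end{equation*}
It is cleanest to treat the divisible case $\tau\mid T$ (so $B=T/\tau$) first, and handle the general case by padding the loss sequence with at most $\tau-1$ extra rounds in which every arm has loss $1$; such padding increases both the algorithm's loss and the comparator's loss by the same quantity, so it perturbs the regret by at most $\tau$ and accounts for the additive $+\tau$ term. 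In the divisible case, multiplying the displayed inequality through by $\tau$ and using $\tau\,\bar l_{\pi(b)}(b)=\sum_{t\in\mathcal{B}_b}l_{\pi(b)}(t)$ (the true cumulative loss incurred by the blocked algorithm during block $b$) together with $\tau\sum_{b}\bar l_k(b)=\sum_{t=1}^T l_k(t)$ gives
\begin{equation*}
\mathbb{E}\Big[\sum_{t=1}^{T}l_{\pi(t)}(t)\Big]-\min_{k\in[K]}\sum_{t=1}^{T}l_k(t)\;\le\;\tau R(T/\tau),
\end{equation*}
where $\pi(t)$ is the arm the blocked algorithm plays at round $t$; adding the $+\tau$ correction from the non-divisible case yields the claimed bound $\tau R(T/\tau)+\tau$.

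The only genuine points to verify — everything else is bookkeeping — are the two equalities underlying the reduction's correctness: (i) that the best fixed arm in hindsight for the aggregated $B$-round instance has cumulative loss exactly $\tfrac1\tau$ times that of the best fixed arm for the original $T$-round instance, which holds because block aggregation is a linear, arm-independent operation over a fixed partition, so $\min_k\sum_b\bar l_k(b)=\tfrac1\tau\min_k\sum_{t=1}^T l_k(t)$; and (ii) the non-divisibility bookkeeping, where one must confirm that truncating/padding the final partial block costs only an \emph{additive} $\tau$ rather than a multiplicative factor. I expect (ii), i.e.\ the careful handling of the leftover $T-\lfloor T/\tau\rfloor\tau$ rounds and the fact that $R(\cdot)$ is evaluated at $T/\tau$ rather than $\lceil T/\tau\rceil$, to be the only mildly delicate part; the loss-scaling argument and the feedback-compatibility claim are immediate. (If $\Pi$'s guarantee is stated against an adaptive adversary, the same reduction goes through, since an oblivious loss sequence aggregates to an oblivious one and an adaptive one aggregates to an adaptive one; for the use in this paper the oblivious case suffices.)
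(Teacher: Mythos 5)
The paper does not actually prove this statement; it is imported verbatim from \cite{alatur2020multi,arora2012online}, with only the remark that the trailing $+\tau$ ``represents the additional regret when $T$ is not divisible by $\tau$.'' Your mini-batching reduction is the standard and correct argument behind that citation: aggregating losses over blocks yields a legitimate $[0,1]$-valued bandit instance on $B$ rounds, the comparator's loss scales linearly under block averaging, and multiplying the $B$-round guarantee by $\tau$ recovers the $T$-round regret. The one place where your specific route is slightly weaker than the intended statement is the non-divisible case: padding to $\lceil T/\tau\rceil$ full blocks gives $\tau R(\lceil T/\tau\rceil)$, which only becomes $\tau R(T/\tau)+\tau$ under an additional monotonicity/subadditivity assumption on $R(\cdot)$; the cleaner route matching the paper's remark is to apply the guarantee to the $\lfloor T/\tau\rfloor$ complete blocks (so $R$ is evaluated at a point at most $T/\tau$) and bound the leftover partial block trivially by $\tau$ (or $M\tau$ in the multi-player instantiation, as the paper notes) using boundedness of the losses. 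You flag exactly this as the delicate step, so I would count your proposal as correct modulo that routine substitution.
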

	In the multi-player case, the last term $\tau$, which represents the additional regret when $T$ is not divisible by $\tau$, is converted into $M\tau$.

	\subsection{$\alpha$-aware}\label{app:a_aware}
	The overall regret of $\alpha$-aware A2C2  can be decomposed as: $R_1(T)=R_1^{expl}(T)+R_1^{comm}(T)$, with $R_1^{expl}(T)$ and $R_1^{comm}(T)$ referring to the exploration and communication regret, respectively.

	The local attackability $W(T)$ satisfies $W(T)\leq O(T^{\alpha})$. Intuitively, the length of repetition code in $\alpha$-aware A2C2 is $h(T, \alpha+\epsilon)=\Theta(T^{\alpha+\epsilon})$. It is implied that $h(T,\alpha+\epsilon)=\omega(W(T))$, which means asymptotically no successful attack can possibly happen and communication phases are guaranteed to be successful.  With no communication phase in the centralized algorithm, the upper bound in Eqn.~\eqref{eqn:c_regret} can also serve as an upper bound for the exploration regret of $\alpha$-aware A2C2 after successful communications with Theorem \ref{thm:block}. Thus, the exploration regret $R_1^{expl}(T)$ can be bounded in the following lemma.
	\begin{lemma}\label{lem:a_aware_expl}
		The exploration regret of $\alpha$-aware A2C2  satisfies:
		\begin{align*}
		    \mathbb{E}\left[R_1^{expl}(T)\right]\leq O\left(M\sqrt{K\log(K)T\tau}\right).
		\end{align*}
	\end{lemma}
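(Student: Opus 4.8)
The plan is to combine two facts already established in the excerpt: (1) the asymptotic success of the error-correction code, and (2) the blocked-centralized regret bound from Theorem~\ref{thm:block}. First I would note that since $h(T,\alpha+\epsilon)=\Theta(T^{\alpha+\epsilon})=\omega(T^\alpha)=\omega(W(T))$, for $T$ large enough every communication phase of $\alpha$-aware A2C2 is successful: the adversary cannot produce a contiguous all-one sequence long enough to corrupt an entire length-$h(T,\alpha+\epsilon)$ repetition codeword on any follower's communication arm. Hence, asymptotically, all followers decode their assigned arms correctly and remain synchronized throughout all exploration phases; there are no collisions during exploration, so each follower receives the true loss $l_{\tilde A_m(p)}(t)$ on her assigned arm.

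Next I would argue that, conditioned on all communications being successful, the exploration phases of the decentralized algorithm exactly simulate the blocked version (with block size $\tau$) of the centralized meta-arm EXP3 algorithm analyzed in Appendix~\ref{app:cent}. Indeed, in each phase $p$ the leader chooses a meta-arm $A(p)$ with probability $P_A(p)$, the $M$ players pull the $M$ arms of the (permuted) meta-arm for $\tau$ consecutive steps, and the leader updates the unbiased estimators $\tilde l_k(p)$ and $\tilde L_A(p)$ exactly as the blocked centralized rule prescribes. Therefore the exploration regret is at most the regret of this blocked algorithm. Applying Theorem~\ref{thm:block} with $R(T)=2M\sqrt{K\log(K)T}$ from Eqn.~\eqref{eqn:c_regret} gives an exploration-regret bound of $\tau\cdot 2M\sqrt{K\log(K)(T/\tau)}+M\tau = 2M\sqrt{K\log(K)T\tau}+M\tau$, and since $\tau\le\sqrt{K\log(K)T\tau}$ in the relevant parameter regime (or can simply be absorbed), this is $O(M\sqrt{K\log(K)T\tau})$, which is the claimed bound.

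The main obstacle — and the only nontrivial point — is justifying rigorously that the ``asymptotically all communications succeed'' claim can be turned into a genuine regret bound rather than merely an asymptotic statement: strictly speaking, for small $T$ the code may fail, and failed communications cause collisions and up to a linear loss over the following $\tau$ steps. The clean way to handle this is to observe that $h(T,\alpha+\epsilon)=\omega(W(T))$ means there is a threshold $T_0$ (depending only on the hidden constants in $W(T)\le O(T^\alpha)$ and in $h$) beyond which no communication error is possible at all; for $T\le T_0$ the regret is trivially $O(T_0)=O(1)$ in $T$ and thus absorbed into the $O(\cdot)$. I would also double-check the bookkeeping for the random permutation step (which only relabels which follower gets which arm and does not affect the loss estimators or the EXP3 dynamics) and confirm that the leader's use of her own single observed loss $\hat l_{\tilde A_1(p)}(p)$ as feedback matches the centralized estimator's assumption that $\tilde l_k$ is nonzero for at most one arm — both of which are already handled in Appendix~\ref{app:cent}. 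With those points in place the lemma follows immediately.
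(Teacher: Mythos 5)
Your proposal is correct and follows essentially the same route as the paper: establish a finite threshold $T^*$ beyond which $h(T,\alpha+\epsilon)>W(T)$ guarantees error-free communication (hence collision-free exploration), then apply Theorem~\ref{thm:block} to the centralized bound of Eqn.~\eqref{eqn:c_regret} and absorb the additive $M\tau$ term. The paper's proof is just a slightly more explicit version of your threshold argument, writing out the constants $m_1$, $n$ and taking $T^*=\max\{T_0,T_1,(n/m_1)^{1/\epsilon}\}$.
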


	\begin{proof}
		Since $h(T,\alpha+\epsilon)=\Theta(T^{\alpha+\epsilon})$, there exist $m_1>0, m_2>0$, and $T_0$ such that, $\forall T> T_0, m_1T^{\alpha+\epsilon}\leq h(T,\alpha+\epsilon)\leq m_2T^{\alpha+\epsilon}$. Furthermore, since $W(T)=O(T^{\alpha})$, $\exists n>0$ and $T_1$ such that $\forall T> T_1,  |W(T)|\leq nT^{\alpha}$. Thus, $\exists T^*=\max\{T_0,T_1, (n/m_1)^{\frac{1}{\epsilon}}\}$, $h(T,\alpha+\epsilon)\geq m_1T^{\alpha+\epsilon}> nT^{\alpha}\geq W(T)$, $\forall T> T^*$, which means communications are guaranteed to be successful for $T>T^*$. Thus, all explorations are collision-free. Denote $T_e=T-T_c$, where $T_e$ is the overall exploration time and $T_c$ is the overall communication time. With Theorem \ref{thm:block} and Eqn.~\eqref{eqn:c_regret}, we have that $\forall T>T^*$,
		\begin{align*}
		\mathbb{E}\left[R_1^{expl}(T)\right]&\leq \tau R_c(T_e/\tau)+M\tau\\
		&=2M\sqrt{K\log(K)T_e\tau}+M\tau\\
		&\leq 3M\sqrt{K\log(K)T\tau},
		\end{align*}
		and thus $\mathbb{E}[R_1^{expl}(T)]\leq O(M\sqrt{K\log(K)T\tau})$.
	\end{proof}
	There are at most $\left\lceil\frac{T}{\tau}\right\rceil$ rounds communication, while each round contains $(M-1)\lceil \log_2(K)\rceil h(T,\alpha+\epsilon)$ time slots. Thus, the communication regret can be bounded as follows.
	\begin{lemma}\label{lem:a_aware_comm}
		The communication regret of $\alpha$-aware A2C2 satisfies:
		\begin{align*}
		\mathbb{E}\left[R_1^{comm}(T)\right]&\leq M(M-1) \left\lceil \log_2(K)\right\rceil \left\lceil {T}/{\tau}\right\rceil h(T,\alpha+\epsilon)\\
		&\leq O\left(M^2\log(K)T^{1+\alpha+\epsilon}/\tau\right).
		\end{align*}
	\end{lemma}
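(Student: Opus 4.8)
The plan is to bound $R_1^{comm}(T)$ by a direct counting argument. By definition the communication regret is the total loss received by all players during time slots that lie in communication phases, minus the benchmark allocation's loss over those same slots; the latter is nonnegative, and the former is at most $M$ per slot because every loss lies in $[0,1]$, so for every $t$ in a communication phase one has $\sum_{m\in[M]} s_{\pi_m(t)}(t)\le M$. Hence it suffices to count the total number of slots spent in communication phases and multiply by $M$.

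First I would count the communication phases. Each iteration of the main loop in Algorithm~\ref{alg:a_aware_leader} consists of one communication phase followed by an exploration phase of exactly $\tau$ slots, so every completed iteration consumes at least $\tau$ slots and therefore at most $\lceil T/\tau\rceil$ iterations, hence at most $\lceil T/\tau\rceil$ communication phases, fit in the horizon. Next I would count the length of one communication phase: the leader sends to each of the $M-1$ followers the index of its assigned arm $\tilde A_m(p)\in[K]$, which is carried by $\lceil\log_2 K\rceil$ information bits, and $\texttt{rEncoder}$ expands every information bit into a repetition block of length $h(T,\alpha+\epsilon)$, each coded bit costing one forced-collision slot; thus a communication phase occupies $(M-1)\lceil\log_2 K\rceil\, h(T,\alpha+\epsilon)$ slots. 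Multiplying the three factors --- per-slot regret $M$, number of phases $\lceil T/\tau\rceil$, and per-phase length $(M-1)\lceil\log_2 K\rceil\, h(T,\alpha+\epsilon)$ --- yields the first displayed inequality.

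For the $O(\cdot)$ form I would substitute $h(T,\alpha+\epsilon)=\Theta(T^{\alpha+\epsilon})$, $\lceil\log_2 K\rceil=O(\log K)$, and $\lceil T/\tau\rceil\le T/\tau+1\le 2T/\tau$, collapsing the bound to $O\!\left(M^2\log(K)\,T^{1+\alpha+\epsilon}/\tau\right)$. There is essentially no obstacle --- the argument is pure bookkeeping --- and the only two points needing a moment's care are (i) the $M$-per-slot bound, where one must note that no hidden extra loss can appear in a communication slot because all losses are in $[0,1]$ and the benchmark term is nonnegative, and (ii) that the last step $\lceil T/\tau\rceil\le 2T/\tau$ requires $\tau\le T$, so that the $h(T,\alpha+\epsilon)$ from the ``$+1$'' does not survive as a spurious additive $T^{\alpha+\epsilon}$ term; this holds for all large $T$ since the prescribed $\tau$ scales as $T^{(1+2\alpha+2\epsilon)/3}$ with exponent below $1$ whenever $\alpha+\epsilon<1$, and when $\alpha+\epsilon\ge 1$ the resulting bound is $\Omega(T)$ so the regret is no longer sublinear anyway.
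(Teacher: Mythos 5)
Your argument is correct and is essentially the same as the paper's: the paper justifies this lemma with exactly the counting you give --- at most $\lceil T/\tau\rceil$ communication rounds, each occupying $(M-1)\lceil\log_2(K)\rceil\,h(T,\alpha+\epsilon)$ slots, with per-slot regret at most $M$. Your additional care about the trivial per-slot bound and the ceiling on $T/\tau$ only fills in bookkeeping the paper leaves implicit.
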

	With Lemmas \ref{lem:a_aware_expl} and  \ref{lem:a_aware_comm}, Theorem \ref{thm:a_aware} can be proven via solving the optimization problem of
	\begin{equation*}\small
		\min_{\tau\in\mathbb{N}}\max\Big\{O\left(M\sqrt{K\log(K)T\tau}\right), O\Big(M^2\log(K)\frac{T^{1+\alpha+\epsilon}}{\tau}\Big)\Big\}
	\end{equation*}
	which leads to $\tau = \left \lceil M^{\frac{2}{3}}K^{-\frac{1}{3}}\log(K)^{\frac{1}{3}}T^{\frac{1+2\alpha+2\epsilon}{3}} \right \rceil$.

	\subsection{$\beta$-aware}\label{app:b_aware}
	The same error detection code in Section \ref{subsec:a_unaware} is also used in the $\beta$-aware A2C2 algorithm. However, it is not used for the update of estimations, but rather only to maintain an unbiased loss estimation. The $\beta$-aware A2C2 algorithm is presented in Algorithms \ref{alg:b_aware_leader} (leader) and \ref{alg:b_aware_follower} (follower).

	\begin{algorithm}[htb]
		\small
		\caption{$\beta$-aware A2C2: Leader}
		\label{alg:b_aware_leader}
		\begin{algorithmic}[1]
		\Require $M$, $K$, $T$
		\State \textbf{Initialize:} $\tau\gets \lceil K^{\frac{1}{3}}\log(K)^{-\frac{1}{3}}T^{\max\{\beta,\frac{1}{3}\}}\rceil; \eta\gets\sqrt{\log\tbinom{K}{M}\tau/MKT}; \nu\gets\max\left\{\frac{3\beta-1}{2},0\right\}$; $F\gets 0$
		\For{$p=1,2,...$}
			\State $\forall A\in\mathcal{K}, \tilde{L}_{A}(p)\gets \sum_{v=1}^{p-1}\sum_{k\in A}\tilde{l}_k(v)$
			\State $\forall A\in\mathcal{K}, P_A(p)\gets\frac{e^{-\eta \tilde{L}_A(p)}}{\sum_{J\in\mathcal{K}}e^{-\eta \tilde{L}_{J}(p)}}$ \Comment{Loss estimator}
			\State Choose $A(p)=\{A_1(p),...,A_M(p)\}$ with $P_A(p)$
			\State Randomly permute $A(p)$ into $\tilde{A}(p)$
			\Statex $\triangleright$ \textbf{Communication Phase:}
			\State $\forall m\in [M]$, $\text{msg}_m\gets \text{eEncoder}(\tilde{A}_m(p), k(T,\nu))$
			\State $\forall m\in [M]$, Send$\left(m, \text{msg}_m\right)$ \Comment{Send Assignment}
			\State $\text{msg}_F\gets \text{Receive}\left(k(T,\nu)\right)$
			\State $F \gets$ rDecoder$\left(\text{msg}_F, k(T,\nu)\right)$
			\Statex $\triangleright$ \textbf{Exploration Phase:}
			\State Stay on arm $\tilde{A}_1(p)$ for $\tau$ time steps\Comment{Exploration}
			\If{$F=0$}
			    \State Record cumulative loss $\hat{l}_{\tilde{A}_1(p)}(p)$
				\State $\forall k\in[K]$, set $\tilde{l}_k(p)\gets\frac{M}{\tau}\frac{\hat{l}_{\tilde{A}_1 (p)}(p) \mathds{1}\{\tilde{A}_1 (p)=k\}}{\sum_{A:k\in A\in \mathcal{K}}P_A(p)}$
			\Else \ $\forall k\in[K], \tilde{l}_k(p)\gets 0$
			\EndIf
		\EndFor
		\end{algorithmic}
	\end{algorithm}
	\begin{algorithm}[htb]
		\small
		\caption{$\beta$-aware A2C2: Follower}
		\label{alg:b_aware_follower}
		\begin{algorithmic}[1]
			\Require $M$, $K$, $T$, index $m$
			\State \textbf{Initialize:} $\tau\gets \lceil K^{\frac{1}{3}}\log(K)^{-\frac{1}{3}}T^{\max\{\beta,\frac{1}{3}\}}\rceil; \eta\gets\sqrt{\log\tbinom{K}{M}\tau/MKT}; \nu\gets\max\left\{\frac{3\beta-1}{2},0\right\}$; $F\gets 0$; $\mathcal{S}\gets\emptyset$
			\For{$p=1,2,...$}
				\Statex $\triangleright$ \textbf{Communication Phase:}
				\State $\text{msg}_m\gets \text{Receive}\left(k(T,\nu)\right)$
				\State $\mathcal{S}\gets$ eDecoder$\left(\text{msg}_m, k(T,\nu)\right)$\Comment{Receive Assignment}
				\State Randomly choose in $\mathcal{S}$ for $\tilde{A}_m(p)$
				\State  $F \gets \mathds{1}\left\{|\mathcal{S}|>1\right\}$
				\State $\text{msg}_F\gets \text{rEncoder}\left(k(T,\nu)\right)$
				\State Send$\left(1, \text{msg}_F\right)$ \Comment{Feedbak Communication Error}
				\Statex $\triangleright$ \textbf{Exploration Phase:}
				\State Stay on arm $\tilde{A}_m(p)$ for $\tau$ time steps \Comment{Exploration}
			\EndFor
		\end{algorithmic}
	\end{algorithm}

	The overall regret of $\beta$-aware A2C2 can be decomposed as $R_2(T)=R_2^{expl}(T)+R_2^{err}(T)+R_2^{comm}(T)$, where $R_2^{expl}(T)$ and $R_2^{err}(T)$ refer to the exploration regret after successful and failed communications, respectively, and $R_2^{comm}(T)$ characterizes the communication regret. With Theorem \ref{thm:block} and Eqn.~\eqref{eqn:c_regret}, $R_2^{expl}(T)$ can be upper bounded, as stated in the following lemma.
	\begin{lemma}\label{lem:b_aware_expl}
		The exploration regret of $\beta$-aware A2C2 after successful communication satisfies:
		$$
		\mathbb{E}\left[R_2^{expl}(T)\right] \leq  O\left(M\sqrt{K\log(K)T\tau}\right).
		$$
	\end{lemma}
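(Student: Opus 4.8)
The plan is to replay, in the $\beta$-aware setting, the argument behind Lemma~\ref{lem:a_aware_expl}: a successful communication phase leaves all $M$ players synchronized on distinct arms, so the following $\tau$ exploration slots are exactly one block of the centralized algorithm of Appendix~\ref{app:cent}. First I would fix any sample path and isolate the phases whose preceding communication succeeded. For such a phase $p$ the arm assignments $\tilde A_1(p),\dots,\tilde A_M(p)$ are distinct and consistently held, hence the $\tau$ exploration steps are collision-free; in particular the leader records the true losses on $\tilde A_1(p)$, so by the unbiasedness computation of Appendix~\ref{app:cent} the quantity $\tilde l_k(p)=\frac{M}{\tau}\hat l_{\tilde A_1(p)}(p)\mathds{1}\{\tilde A_1(p)=k\}/\sum_{A:k\in A\in\mathcal{K}}P_A(p)$ is an unbiased estimate of the phase-averaged loss of arm $k$.

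Second, I would show that phases following a \emph{failed} communication are transparent to the meta-arm EXP3 dynamics. If any follower detects a decoding error it reports $F=1$; the asymmetry of the Z-channel (a transmitted $1$ is never corrupted) guarantees the leader also decodes $F=1$, and Algorithm~\ref{alg:b_aware_leader} then sets $\tilde l_k(p)=0$ for all $k$, so $\tilde L_A(p)$ and $P_A(p)$ are unchanged going into the next phase. Conversely, whenever the leader uses the real losses it has received $F=0$, which forces $|\mathcal{S}_m|=1$ for every follower and hence a collision-free, correctly-assigned phase. Therefore the subsequence of successful exploration phases is, verbatim, a run of the blocked centralized algorithm with block length $\tau$ acting on $T_e\le T$ slots, and Theorem~\ref{thm:block} together with Eqn.~\eqref{eqn:c_regret} (replacing the residual term $\tau$ by $M\tau$) gives $\tau R_c(T_e/\tau)+M\tau\le 2M\sqrt{K\log(K)T_e\tau}+M\tau\le O(M\sqrt{K\log(K)T\tau})$.

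The step I expect to be the main obstacle is the bookkeeping that ties this blocked-EXP3 bound to $R_2^{expl}(T)$ as it appears in the decomposition $R_2(T)=R_2^{expl}(T)+R_2^{err}(T)+R_2^{comm}(T)$. The EXP3 guarantee controls $\sum_{t\in S}\sum_m l_{\tilde A_m}(t)-\min_{A\in\mathcal{K}}\sum_{t\in S}\sum_{k\in A}l_k(t)$, where $S$ is the set of successful-exploration slots, whereas $R_2^{expl}(T)$ measures the same cumulative loss against the single globally optimal allocation restricted to $S$. Using $\sum_{t\in S}\min_{A}\sum_{k\in A}l_k(t)\le\min_{A}\sum_{t\in S}\sum_{k\in A}l_k(t)$, the restricted global benchmark is no larger than the best fixed meta-arm on $S$, so the EXP3 bound dominates $R_2^{expl}(T)$; the slots excluded from $S$ are charged to $R_2^{err}(T)$ and $R_2^{comm}(T)$ and handled in the subsequent lemmas. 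Finally I would remark that, unlike the $\alpha$-aware proof, no threshold time $T^\ast$ past which all communications succeed is needed here — the argument only uses that a failed round perturbs neither the estimator nor the weights, which holds for every $T$, with the price of failed rounds deferred to $R_2^{err}(T)$.
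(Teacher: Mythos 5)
Your proposal is correct and follows essentially the same route as the paper: restrict attention to the exploration slots following successful communications (total length $T_s\leq T$), observe they are collision-free, and apply Theorem~\ref{thm:block} with Eqn.~\eqref{eqn:c_regret} to get $2M\sqrt{K\log(K)T_s\tau}+M\tau\leq O(M\sqrt{K\log(K)T\tau})$. Your additional bookkeeping (unbiasedness of the leader's estimator, the zeroed-out estimators making failed phases transparent to the EXP3 weights, and the benchmark comparison) fills in details the paper leaves implicit but does not change the argument.
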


	\begin{proof}
		Denote $T_s=T-T_c-T_f$, where $T_f$ is the overall exploration time after failed communications and $T_c$ is the overall communication time. Since $\forall t\in T_s$, all explorations are collision-free, and the same combination of Theorem \ref{thm:block} and Eqn.~\eqref{eqn:c_regret} leads to the following upper bound:
		\begin{align*}
		\mathbb{E}\left[R_2^{expl}(T)\right]&\leq \tau R_c(T_s/\tau)+M\tau\\
		&=2M\sqrt{K\log(K)T_s\tau}+M\tau\\
		&\leq 3M\sqrt{K\log(K)T\tau},
		\end{align*}
		which means $\mathbb{E}[R_2^{expl}(T)]\leq O(M\sqrt{K\log(K)T\tau})$.
	\end{proof}

	The overall number of loss ones on one arm, i.e., the global attackability, is $V(T)\leq O(T^{\beta})$. To succeed in attacking one communication phase, at least $k(T,\nu)=\Theta(T^{\nu})$ of loss ones are required. Thus, the overall number of successful attacks is no more than $\frac{MV(T)}{k(T,\nu)}\leq O(MT^{\beta-\nu})$ times	. The second term $R_2^{err}(T)$ in the overall regret can be bounded as follows.
	\begin{lemma}\label{lem:b_aware_err}
		The exploration regret caused by failed communications of $\beta$-aware A2C2 satisfies:
		\begin{align*}
		    \mathbb{E}\left[R_2^{err}(T)\right] \leq O\left(M^2T^{\beta-\nu}\tau\right).
		\end{align*}
	\end{lemma}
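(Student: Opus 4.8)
The plan is to bound $R_2^{err}(T)$ by counting how many exploration phases can possibly be preceded by a corrupted arm assignment, and then charging each such phase the worst-case regret $M\tau$.

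First I would argue that corrupting a single follower's assignment in one communication phase is expensive for the adversary. Recall the error-detection code of Section~\ref{subsec:a_unaware}: an arm index is represented by a length-$K$ bit string (a single $1$ at the index position), each bit is repeated $k(T,\nu)=\Theta(T^{\nu})$ times, and the decoder $\mathcal{S}$ returns the indices of all all-ones blocks. By the Z-channel property, bit $1$ is never corrupted, so the true block is always received as all-ones; hence follower $m$'s assignment is corrupted (i.e.\ $|\mathcal{S}|>1$ at follower $m$) only if some \emph{zero} block is flipped entirely to ones, which requires at least $k(T,\nu)$ loss-$1$'s to be placed on follower $m$'s communication arm during that phase.

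Next I would invoke the global attackability bound. By Corollary~\ref{asp:overall_attack}, for every arm $k$ we have $N_1^k(T)\le V(T)\le O(T^{\beta})$. Since the communication slots on arm $m$ across distinct phases are disjoint in time, the number of phases in which follower $m$'s assignment is corrupted is at most $N_1^m(T)/k(T,\nu)\le O(T^{\beta-\nu})$; summing over the $M-1$ followers, the total number of exploration phases preceded by at least one corrupted assignment is at most $(M-1)V(T)/k(T,\nu)\le O(MT^{\beta-\nu})$. Each such ``bad'' exploration phase lasts $\tau$ slots, and in the worst case every one of the $M$ players collides at each of those slots, so it contributes at most $M\tau$ to the regret. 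Multiplying the two bounds gives $\mathbb{E}[R_2^{err}(T)]\le O(MT^{\beta-\nu})\cdot M\tau=O(M^2T^{\beta-\nu}\tau)$, which is the claim.

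The main obstacle is the first step: one must be careful that the error-detection code genuinely cannot be defeated with fewer than $k(T,\nu)$ attacks on a given communication arm. This rests entirely on the asymmetry of the Z-channel — bit $1$ is never corrupted, so the true block always survives and a spurious all-ones block can only be created by filling an entire zero block — together with the observation that $V(T)=\max_k N_1^k(T)$ is a \emph{per-arm} budget, so the $M-1$ follower arms can be charged separately rather than sharing a single budget. The remaining steps are routine counting arguments that parallel the text preceding the lemma.
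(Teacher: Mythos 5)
Your proposal is correct and follows essentially the same route as the paper: bound the number of successful attacks by the per-arm budget $V(T)$ divided by the $k(T,\nu)$ loss-ones needed to corrupt one assignment, sum over the $M$ communication arms, and charge each bad exploration phase a worst-case loss of $M\tau$. Your added justification of why $k(T,\nu)$ attacks are necessary (the Z-channel asymmetry preserving the true all-ones block) is a more explicit version of the paper's one-line claim, not a different argument.
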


	\begin{proof}
	Since $k(T,\nu)=\Theta(T^{\nu})$, there exist $m_1>0, m_2>0$ and $T_0$ such that, $\forall T> T_0, m_1T^{\nu}\leq k(T,\nu)\leq m_2T^{\nu}$. Since $V(T)=O\left(T^{\beta}\right)$, $\exists n>0$ and  $T_1$ such that, $\forall T> T_1,  |V(T)|\leq nT^{\beta}$. Thus, $\exists T^*=\max\left\{T_0, T_1\right\}$,
	$\frac{V(T)}{k(T,\nu)}\leq \frac{n}{m_1}T^{\beta-\nu},$
	$\forall T>T^*$. With a loss of at most $M\tau$ caused by each successful attack, we have that $\forall T>T^*$,
	$$
	\mathbb{E}\left[R_2^{err}(T)\right]=M\frac{V(T)}{k(T,\nu)}\cdot M\tau\leq M^2\frac{n}{m_1}T^{\beta-\nu}\tau.
	$$
	Thus, $\mathbb{E}\left[R_2^{err}(T)\right]\leq O\left(M^2T^{\beta-\nu}\tau\right)$.
	\end{proof}

	With at most $\left\lceil {T}/{\tau}\right\rceil$ rounds of communications consisting of $M-1$ times of arm assignment and one time error report, the communication regret can be bounded as follows.
	\begin{lemma}\label{lem:b_aware_comm}
		The communication regret of $\beta$-aware A2C2 satisfies:
		\begin{align*}
		&\mathbb{E}\left[R_2^{comm}(T)\right]\\
		&\leq M(M-1)K \left\lceil T/\tau\right\rceil k(T,\nu)+M\left\lceil  T/\tau \right\rceil k(T,\nu)  \\
		&\leq O\left(M^2KT^{1+\nu}/\tau\right).
		\end{align*}
	\end{lemma}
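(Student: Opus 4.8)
The plan is to replace the expectation by a deterministic worst-case count: bound the total number of time slots that lie inside communication phases, and multiply this count by the maximal contribution of a single communication slot to the regret. Since the losses are chosen by an oblivious adversary and each communication slot is trivially bounded, there is no genuine probabilistic content, and the inequality in fact holds surely.

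First I would bound the number of communication phases. Exploration and communication phases strictly alternate and every exploration phase has length exactly $\tau$, so the number of complete iterations — hence the number of communication phases — is at most $\lceil T/\tau\rceil$. Next I would count the slots inside one communication phase. The leader transmits an arm assignment to each of the $M-1$ followers one at a time (she can occupy only a single arm per slot, so the assignments cannot be parallelized), and each assignment encodes the arm index, via \texttt{eEncoder}, as a length-$K$ one-hot string in which every bit is repeated $k(T,\nu)$ times; that is $K\,k(T,\nu)$ slots per follower, hence $(M-1)K\,k(T,\nu)$ slots in total. This is followed by one error-report broadcast in which the followers jointly signal, with a length-$k(T,\nu)$ repetition code, whether a decoding error occurred, adding $k(T,\nu)$ more slots. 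Thus one communication phase occupies $\bigl((M-1)K+1\bigr)k(T,\nu)$ slots.

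Then I would invoke the per-slot regret bound. In any communication slot all $M$ players are pinned to their communication arms — the sender on the receiver's arm, the receiver on her own arm, and the remaining players on their own arms so as not to disrupt the exchange — so $\sum_{m\in[M]} s_{\pi_m(t)}(t)\le M$, while the benchmark allocation contributes a nonnegative loss at every slot; therefore each communication slot contributes at most $M$ to $R(T)$. Combining the three counts yields
\[
\mathbb{E}\bigl[R_2^{comm}(T)\bigr]\ \le\ M\lceil T/\tau\rceil\bigl((M-1)K+1\bigr)k(T,\nu),
\]
which, after distributing the product, is exactly the first displayed bound in the lemma. Finally, using $k(T,\nu)=\Theta(T^{\nu})$ together with $\lceil T/\tau\rceil=O(T/\tau)$ collapses this to $O\bigl(M^2 K T^{1+\nu}/\tau\bigr)$.

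There is no real obstacle; this is a bookkeeping estimate. The two points that need care are (i) reading off the error-detection codeword length correctly as $K\,k(T,\nu)$ per arm index — the $O(\log K)$-length constant-weight code mentioned in Section~\ref{subsec:a_unaware} is an admissible alternative, but the algorithm deliberately adopts the one-hot-plus-repetition construction for analytical transparency, which is why this bound carries a factor $K$ rather than $\log K$ (contrast Lemma~\ref{lem:a_aware_comm}); and (ii) justifying that the error report is a single length-$k(T,\nu)$ broadcast rather than $M-1$ sequential transmissions, which is valid because the leader only needs the logical OR of the followers' error flags, and over the Z-channel a bit $1$ (collision) sent by any follower is always received correctly.
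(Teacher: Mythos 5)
Your proof is correct and follows essentially the same route as the paper: the paper likewise obtains the bound by counting at most $\lceil T/\tau\rceil$ communication rounds, each consisting of $M-1$ arm assignments of $K\,k(T,\nu)$ slots plus one error report of $k(T,\nu)$ slots, and charging at most $M$ regret per slot. Your additional remarks on the one-hot error-detection codeword length and the simultaneous (OR-combined) error report correctly fill in the bookkeeping the paper leaves implicit.
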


	With Lemmas \ref{lem:b_aware_expl} to \ref{lem:b_aware_comm}, Theorem \ref{thm:b_aware} can be proven via solving the following optimization problem:
	\begin{align*}
		\min_{\tau\in\mathbb{N},\nu\geq 0}\max\Big\{&O\left(M\sqrt{K\log(K)T\tau}\right),\\
		&O\left(M^2T^{\beta-\nu}\tau\right), O\left(M^2KT^{1+\nu}/\tau\right)\Big\}
	\end{align*}
	which leads to $\tau = \lceil K^{\frac{1}{3}}\log(K)^{-\frac{1}{3}}T^{\max\left\{\beta,\frac{1}{3}\right\}}\rceil$ and $\nu=\max\{\frac{3\beta-1}{2},0\}$.

	\subsection{$\alpha$-unaware}
	\label{app:a_unaware}
	Since multiple estimations are required before successful communications can be guaranteed, $R_3^{expl}(T)$ consists of not only the exploration regret $R_3^{expl}(T|\zeta=1)$ after successful communication phases, but also explorations regret $R_3^{expl}(T|\zeta=0)$ after failed  communications, which is different from $R_1^{expl}(T)$ in $\alpha$-aware A2C2. The estimation $\alpha'= j\epsilon$ is denoted as $\alpha_j$ for an integer $j$ for simplicity. The overall exploration time steps when $\alpha_j$ is used as the $\alpha$-estimation is denoted as $T_{e,j}$, and the corresponding $\tau$ and $\xi$ with $\alpha_j$ are denoted as $\tau_j$ and $\xi_j$, respectively.

	With the update, $\exists v>0$, $\delta\in[0,\epsilon)$, $\alpha\leq \alpha+\delta\leq  \alpha_v=v\epsilon \leq \alpha+\epsilon$. Since $h(T,\alpha_v)=\Theta(T^{v\epsilon})$, $\exists \kappa_1>0, \kappa_2>0$ and $T_0$ such that, $\forall T> T_0$, $\kappa_1T^{v\epsilon}\leq h(T,\alpha_v)\leq \kappa_2T^{v\epsilon}$. Since $W(T)=O(T^{\alpha})$, $\exists n>0$ and $T_1$ such that, $\forall T> T_1,  |W(T)|\leq nT^{\alpha}$. It implies that $W(T)\leq nT^{\alpha}<\kappa_1T^{\alpha+\delta}\leq h(T,\alpha_v)$,
	$\forall T>T^*=\max\{T_0, T_1, (n/\kappa_1)^{\frac{1}{\delta}}\}$, which means $\forall T>T^*$, the repetition code can overpower the local attacks, and thus successful communications are guaranteed with $\alpha_v$. For a given adversary sequence, $\forall T>T^*$, the update is assumed to stop at $\alpha_w=w\epsilon$ with $w\leq v$.

	\mypara{Proof of Lemma \ref{lem:a_unaware_expl}.}
	Denote $T_{s,j}$ and $T_s = \sum_{j=0}^w T_{s,j}\leq \sum_{j=0}^w T_{e,j}\leq T$ as the length of exploration with successful preceding communications under estimation $\alpha_j$ and in total, respectively. The exploration under each estimation $\alpha_j$ with successful preceding communications can be viewed alone as a bandit game with horizon $T_{s,j}$ and block $\tau_j$. Thus, by applying Theorem \ref{thm:block} to Eqn.~\eqref{eqn:c_regret},  $\forall T>T^*$, the first term in the exploration regret can be bounded as
		\begin{align*}
		&\mathbb{E}\left [R_3^{expl}(T|\zeta=1) \right ]\\
		&= \sum_{j=0}^{w}\mathbb{E}\left[R_3^{expl}(T|\zeta=1, \alpha'=\alpha_j) \right ]\\
		&\leq \sum_{j=0}^w  (2M\sqrt{K\log(K)T_{s,j} \tau_j}+M\tau_j )\\
		&\overset{(i)}{\leq} 3(w+1)M\sqrt{K\log(K)T_{s} \tau_w}\\
		&\leq 3\sqrt{2}(v+1)M^{\frac{4}{3}}K^{\frac{1}{3}}\log(K)^{\frac{1}{3}}T^{\frac{5+\alpha_v}{6}},
		\end{align*}
		where inequality (i) is from that $T_{s,j}\leq T_s$ and $\tau_j$ is monotonically increasing with $j$. Thus, $\mathbb{E}[R_3^{expl}(T)|\zeta=1]\leq O( M^{\frac{4}{3}}K^{\frac{1}{3}}\log(K)^{\frac{1}{3}}T^{\frac{5+\alpha+\epsilon}{6}})$.

		The second term is caused by the potential collisions due to failed communications with an underestimated attackability, and $\forall T>T^*$, for the adversary given above, it takes $w$ trials to eliminate communication errors, and the trial with estimation $\alpha_j$ leads to a regret of $M\tau_j$. Thus, the second term can be bounded as:
		\begin{align*}
		&\mathbb{E}\left [R_3^{expl}(T|\zeta=0) \right]\\ &=\sum_{j=0}^{w-1}\mathbb{E}\left [R_3^{expl}(T|\zeta=0, \alpha'=\alpha_j)  \right ]\\
		&\leq \sum_{j=0}^{w-1}M\tau_j\\
		&=\sum_{j=0}^{w-1}M\left\lceil M^{\frac{2}{3}}K^{-\frac{1}{3}}\log(K)^{-\frac{1}{3}}T^{\frac{2+\alpha_j}{3}}\right\rceil\\
		&\leq  2M^{\frac{5}{3}}K^{-\frac{1}{3}}\log(K)^{-\frac{1}{3}}T^{\frac{2+\alpha_v}{3}}.
		\end{align*}
		Since $\alpha_v\leq 1$, the first term dominates the second term as $\frac{2+\alpha_v}{3}\leq \frac{5+\alpha_v}{6}$. Thus, the overall exploration regret can be bounded as $\mathbb{E}[R_3^{expl}(T)]\leq O(M^{\frac{4}{3}}K^{\frac{1}{3}}\log(K)^{\frac{1}{3}}T^{\frac{5+\alpha+\epsilon}{6}})$.

	\mypara{Proof of Lemma \ref{lem:a_unaware_comm}.}
		The communication regret consists of that for arm assignment and synchronization under different estimations.  The arm assignment in each round consists of $(M-1)Kh(T,\alpha_j)$ time slots while the synchronization has an average length of $Mh(T,\alpha_j)\frac{\left\lceil T^{\xi_j}\right\rceil}{2}$. For the given adversary in the proof above, $\forall T>T^*$, the overall communication regret can be bounded as:
		\begin{align*}
		&\mathbb{E}\left[R_3^{comm}(T)\right]\\
		&\leq\sum_{j=0}^{w}\Big\lceil \frac{T_{e,j}}{\tau_j}\Big\rceil \Big(M(M-1)Kh(T,\alpha_j)+M^2h(T,\alpha_j)\frac{\lceil T^{\xi_j}\rceil}{2}\Big)
		\end{align*}
		Considering $h(T,\alpha_j)=\Theta(T^{\alpha_j})$, $\forall j=0,1,...,v$, $\exists f_{j,1}>0, f_{j,2}>0$ and $T_{j,0}$ such that, $\forall T>T_{j,0}$, $f_{j,1}T^{\alpha_j}\leq h(T,\alpha_j)\leq f_{j,2}T^{\alpha_j}$. Thus, $\forall j=0,1,...,v$, $\exists f_{1}\leq \min_j\{f_{j,1}\}, \exists f_{2}\geq \max_j\{f_{j,2}\}$ and $T_{x}>\max_j\{T_{j,0}\}$ such that $\forall T>T_x$, $f_{1}T^{\alpha_j}\leq h(T,\alpha_j)\leq f_{2}T^{\alpha_j}$. Based on this, $\forall T>\max\{T^*,  T_x\}$, the communication regret can be bounded as:
		\begin{align*}
	      \mathbb{E}&[R_3^{comm}(T)] \\
	      \leq &  2\sum_{j=0}^{w}\frac{T_{e,j}}{\tau_j} \left(M^2Kf_2T^{\alpha_j}+M^2f_2T^{\alpha_j}T^{\xi_j}\right)\\
		= &2\sum_{j=0}^{w}T_{e,j} M^{\frac{4}{3}}K^{\frac{4}{3}}\log(K)^{\frac{1}{3}}f_2T^{\frac{\alpha_j-1}{3}}\\
		&+2\sum_{j=0}^{w}T_{e,j}M^{\frac{4}{3}}K^{\frac{1}{3}}\log(K)^{\frac{1}{3}}f_2T^{\frac{\alpha_j-1}{6}}\\
		\leq & 2M^{\frac{4}{3}}K^{\frac{4}{3}}\log(K)^{\frac{1}{3}}f_2T^{\frac{2+\alpha_v}{3}}\\
		&+2M^{\frac{4}{3}}K^{\frac{1}{3}}\log(K)^{\frac{1}{3}}f_2T^{\frac{5+\alpha_v}{6}}.
		\end{align*}
		With $\alpha_v< 1$, we have $\mathbb{E}[R_3^{comm}(T)] \leq O  (M^{\frac{4}{3}}K^{\frac{1}{3}}\log(K)^{\frac{1}{3}}T^{\frac{5+\alpha+\epsilon}{6}})$.

%
	\mypara{Proof of Lemma \ref{lem:a_unaware_sync}.}
	    Before completing the estimation of $\alpha'$, each communication for synchronization has a failure probability of $\frac{1}{\lceil T ^{\xi}\rceil}$, which in the worst case has a linear regret $MT$. With a union bound for all communication phases with estimation less than $\alpha_w$, we have:
		\begin{align*}
		\mathbb{E}[R_3^{sync}(T)]&\leq\sum_{j=0}^{w-1}\left\lceil\frac{T_{e,j}}{\tau_j}\right\rceil \frac{1}{\lceil T^{\xi_j}\rceil}MT\\
		&\leq 2 \sum_{j=0}^{w-1}\frac{T_{e,j}}{\tau_j} \frac{1}{T^{\xi_j}}MT\\
		&\leq 2M^{\frac{1}{3}}K^{\frac{1}{3}}\log(K)^{\frac{1}{3}}T^{\frac{5+\alpha+\epsilon}{6}}.
		\end{align*}
	Thus, Lemma \ref{lem:a_unaware_sync} can be obtained as
 	$\mathbb{E}[R_3^{sync}(T)]\leq O(M^{\frac{1}{3}}K^{\frac{1}{3}}\log(K)^{\frac{1}{3}}T^{\frac{5+\alpha+\epsilon}{6}}).$

	\subsection{$\beta$-unaware}
	\label{app:b_unaware}
    The $\beta$-unaware algorithm for the leader and followers are presented in Algorithms \ref{alg:b_unaware_leader} and \ref{alg:b_unaware_follower}, respectively.  The following proofs focus on $\beta\geq \frac{1}{4}$, and the case of $\beta\leq \frac{1}{4}$ can be obtained as a special case where the estimation $\beta'$ is kept as $\frac{1}{4}$. Under estimation $\beta'=\frac{1}{4}+j\epsilon$, denoted as $\beta_j$ for simplicity, similar notations of $T_{e,j}$, $\tau_j$, $\xi_j$ and $\nu_j$ are applied referring to the overall time that the current estimation holds and the corresponding parameters.

    With the update, $\exists v>0, \delta\in(0,\epsilon], \beta+\epsilon \geq\beta'=\beta_v> \beta+\delta\geq \beta$. Since $k(T,\beta_v)=\Theta(T^{\frac{1}{4}+v\epsilon})$, $\exists \kappa_1>0$, $\kappa_2>0$, and $T_0$ such that, $\forall T> T_0$, $\kappa_1T^{\beta+\delta}\leq \kappa_1T^{\frac{1}{4}+v\epsilon}\leq k(T,\beta_v)\leq \kappa_2T^{\frac{1}{4}+v\epsilon}\leq \kappa_2T^{\beta+\epsilon}$. With $V(T)=O(T^{\beta})$, $\exists n>0$, $T_1$ so that $\forall T> T_1$,  $|V(T)|\leq nT^{\beta}$. We then have $V(T)\leq nT^{\beta}<\kappa_1 T^{\beta+\delta}\leq k(T,\beta_v)$,
	$\forall T>T^*=\max\{T_0, T_1, (n/\kappa_1)^{\frac{1}{\delta}}\}$, which means updating stops with $\beta_v$. For a given adversary, $\forall T>T^*$, the estimation is assumed to be completed as $\beta'=\beta_w=\frac{1}{4}+w\epsilon$, where $w\leq v$.

    \begin{algorithm}[htb]
    	\small
    	\caption{$\beta$-unaware A2C2: Leader}
    	\label{alg:b_unaware_leader}
    	\begin{algorithmic}[1]
    		\Require{$M$, $K$, $T$}
    		\State \textbf{Initialize:} $\beta'\gets \frac{1}{4}$; communication flag: $F_1, F_2\gets 0$; round counter: $R\gets 0$; collision counter $C\gets 0$
    		\For{$p=1,2,...$}
    		\State $\tau\gets \lceil K^{-\frac{1}{3}}\log(K)^{-\frac{1}{3}}T^{\frac{1+2\beta'}{3}}\rceil$ \State $\eta\gets\sqrt{\log\tbinom{K}{M}\tau/MKT}$;  $\xi\gets\frac{1+2\beta'}{3}$; $\nu'\gets\frac{4\beta'-1}{3}$
    		\State $F_1\gets 0$; $R\gets R + 1$;
    		\State $\forall A\in\mathcal{K}, \tilde{L}_{A}(p)\gets \sum_{v=1}^{p-1}\sum_{k\in A}\tilde{l}_k(v)$
    		\State $\forall A\in\mathcal{K}$, $P_A(p)\gets\frac{e^{-\eta \tilde{L}_A(p)}}{\sum_{J\in\mathcal{K}}e^{-\eta \tilde{L}_{J}(p)}}$
    		\State Choose $A(p)=\{A_1(p),...,A_M(p)\}$ with $P_A(p)$
    		\State Randomly permute $A(p)$ into $\tilde{A}(p)$
    		\Statex $\triangleright$ \textbf{Communication Phase:}
    		\State $\forall m\in [M]$, $\text{msg}_m\gets \text{eEncoder}(\tilde{A}_m(p), k(T,\nu'))$
    		\State $\forall m\in [M]$, Send$(m, \text{msg}_m)$\Comment{Send Assignment}
    		\State $\text{msg}_{F_1}\gets \text{Receive}(k(T,\nu'))$
    		\State $F_1 \gets$ rDecoder$(\text{msg}_{F_1},k(T,\nu'))$; \Comment{Uplink}
    		\State $C\gets C+F_1k(T,\nu')$\Comment{Count Attack}
    		\If{$R\geq \left\lceil\frac{ T^{\beta'} }{k(T,\nu')}\right\rceil$}
    		\State $F_2\gets \mathds{1}\{C\geq \lceil T^{\beta'}\rceil\}$\Comment{Update Point}
    		\For{$q=1,2,..., N(\xi)$}
    		\State $\text{msg}_{F_2}\gets \text{rEncoder}\left(F_2, k(T,\beta')\right)$
    		\State $\forall m\in [M]$, Send$(m, \text{msg}_{F_2})$ \Comment{Downlink}
    		\State $\text{msg}_{F_2}\gets \text{Receive}(k(T,\beta'))$
    		\State $F_2 \gets$ rDecoder$(\text{msg}_{F_2}, k(T,\beta'))$ \Comment{Uplink}
    		\EndFor
    		\State $R\gets 0$; $\beta'\gets\beta'+F_2\epsilon$\Comment{Update}
    		\EndIf
    		\Statex $\triangleright$ \textbf{Exploration Phase:}
    		\State Stay on arm $\tilde{A}_1(p)$ for $\tau$ time steps\;
    		\If{$F_1=0$}
    		\State Record cumulative loss $\hat{l}_{\tilde{A}_1(p)}(p)$
    		\State $\forall k\in[K]$, set $\tilde{l}_k(p)\gets\frac{M}{\tau}\frac{\hat{l}_{\tilde{A}_1 (p)}(p)\mathds{1}\{\tilde{A}_1 (p)=j\}}{\sum_{A:k\in A\in \mathcal{K}}P_A(p)}$
    		\Else \ $\forall k\in[K]$, $\tilde{l}_k(p)\gets 0$
    		\EndIf
    		\EndFor
    	\end{algorithmic}
    \end{algorithm}

    \begin{algorithm}[htb]
    	\small
    	\caption{$\beta$-unaware A2C2: Follower}
    	\label{alg:b_unaware_follower}
    	\begin{algorithmic}[1]
    		\Require $M$, $K$, $T$, index $m$
    		\State \textbf{Initialize:} $\beta'\gets \frac{1}{4}$; communication flag: $F_1, F_2\gets 0$; round counter: $R\gets 0$; collision counter $C\gets 0$
    		\For{$p=1,2,...$}
    		\State $\tau\gets \lceil K^{-\frac{1}{3}}\log(K)^{-\frac{1}{3}}T^{\frac{1+2\beta'}{3}}\rceil$
    		\State $\eta\gets\sqrt{\log\tbinom{K}{M}\tau/MKT}$; $\xi\gets\frac{2-2\beta'}{3}$; $\nu'\gets\frac{4\beta'-1}{3}$
    		\State $F_1,F_2\gets 0$; $R\gets R+1$
    		\Statex $\triangleright$ \textbf{Communication Phase:}
    		\State $\text{msg}_m\gets \text{Receive}(k(T,\nu'))$
    		\State $\mathcal{S}\gets$ eDecoder$(\text{msg}_m, k(T,\nu'))$\Comment{Receive Assignment}
    		\State Randomly choose in $\mathcal{S}$ for $\tilde{A}_m(p)$
    		\State $F_1 \gets \mathds{1}\{|\mathcal{S}|>1\}$ \Comment{Communication Error}
    		\State $C \gets C+(|S|-1)k(T,\nu')$\Comment{Count Attack}
    		\State $\text{msg}_{F_1}\gets \text{rEncoder}(F_1, k(T,\nu'))$
    		\State Send$(1, \text{msg}_{F_1})$\Comment{Uplink}
    		\If{$R\geq \left\lceil\frac{ T^{\beta'} }{k(T,\nu')}\right\rceil$}
    		\State $F_2\gets \mathds{1}\{C\geq \lceil T^{\beta'}\rceil\}$
    		\For{$q=1,2,..., N(\xi)$}\Comment{Update Point}
    		\State $\text{msg}_{F_2}\gets \text{Receive}( k(T,\beta'))$\Comment{Downlink}
    		\State $F_2\gets\max\{F_2,\text{rDecoder}( \text{msg}_{F_2},k(T,\beta'))\}$
    		\State $\text{msg}_{F_2}\gets \text{rEncoder}(F_2, k(T,\beta'))$
    		\State Send$(1, \text{msg}_{F_2})$\Comment{Uplink}
    		\EndFor
    		\State $R\gets 0$; $\beta'\gets\beta'+F_2\epsilon$;  $F_2\gets 0$\Comment{Update}
    		\EndIf
    		\Statex $\triangleright$ \textbf{Exploration Phase:}
    		\State Stay on arm $\tilde{A}_m(p)$ for $\tau$ time steps
    		\EndFor
    	\end{algorithmic}
    \end{algorithm}

    \mypara{Proof of Lemma \ref{lem:b_unaware_expl}.}
	 Denote $T_s = \sum_{j=0}^w T_{s,j}\leq T$ as the overall length of the exploration phases after successful communications. The term $R_4^{expl}(T)$ now consists of only explorations after successful communications, which is again bounded by Theorem \ref{thm:block} and Eqn.~\eqref{eqn:c_regret} as:
		\begin{align*}
		\mathbb{E}\left [R_4^{expl}(T) \right ] &= \sum_{j=0}^w \mathbb{E}\left [R_4^{expl}(T|\beta'=\beta_j) \right] \\
		&\leq \sum_{j=0}^w \big (2M\sqrt{K\log(K)T_{s,j}\tau_j}+M\tau_j \big)\\
		&\leq 3(v+1)M\sqrt{K\log(K)T_{s}\tau_v}\\
		&\leq 3(v+1)MK^{\frac{1}{3}}\log(K)^{\frac{1}{3}}T^{\frac{2+\beta_v}{3}}.
		\end{align*}
		Thus, we have $\mathbb{E}[R_4^{expl}(T)]\leq O( MK^{\frac{1}{3}}\log(K)^{\frac{1}{3}}T^{\frac{2+\beta+\epsilon}{3}} )$.

	\mypara{Proof of Lemma \ref{lem:b_unaware_err}.}
		With the estimated $\beta_j$, the adversary at most attacks $D_j=2 \left \lceil\frac{ T^{\beta_j} }{k(T,\nu_j)} \right \rceil$ rounds on each arm before updating, and each successful attack leads to a loss of at most $M\tau_j$. With $k(T,\nu_j)=\Theta(T^{\nu_j})=\Theta(T^{\frac{4\beta_j-1}{3}})$, $\forall j=0,1,...,w$, $\exists h_{j,1}>0,\exists h_{j,2}>0$, $\exists T_{m,y}$, $\forall T>T_{j,0}$, $h_{j,1}T^{\frac{4\beta_j-1}{3}}\leq k(T,\nu_j)\leq h_{j,2}T^{\frac{4\beta_j-1}{3}}$. Thus, $\forall j=0,1,...,v$, $\exists h_{1}\leq \min_j\{h_{j,1}\}$, $\exists h_{2}\geq \max_j\{h_{j,2}\}$, $\exists T_{y}\geq\max_j\{T_{j,0}\}$, $\forall T>T_y$, $h_{1}T^{\frac{4\beta_j-1}{3}}\leq k(T,\nu_j)\leq h_{2}T^{\frac{4\beta_j-1}{3}}$. It then follows that
		\begin{align*}
		\mathbb{E}\left[R_4^{err}(T) \right] &\leq \sum_{j=0}^w MD_j M\tau_j \\
		&\leq \sum_{j=0}^w \frac{4T^{\beta_j}}{k(T,\nu_j)} M^2\tau_j\\
		&\leq \sum_{j=0}^w\frac{4}{h_1} M^2K^{\frac{1}{3}}\log(K)^{-\frac{1}{3}}T^{\frac{2+\beta_j}{3}} \\
		&\leq (v+1)\frac{4}{h_1}  M^2K^{\frac{1}{3}}\log(K)^{-\frac{1}{3}}T^{\frac{3+\beta_v}{3}},
		\end{align*}
		$\forall T>\max\{T^*,T_y\}$, which means $\mathbb{E}[R_4^{err}(T)] \leq O (M^{2}K^{\frac{1}{3}}\log(K)^{-\frac{1}{3}}T^{\frac{2+\beta+\epsilon}{3}})$.

	\mypara{Proof of Lemma \ref{lem:b_unaware_comm}.}
		Communications consist of three parts: those for arm assignment, communication error report, and synchronization. With the estimated $\beta_j$, the arm assignment and communication error report happen at most $\lceil T_{e,j}/\tau_j\rceil$ rounds while each round lasts $(M-1)Kk(T,\nu_j)+k(T,\nu_j)$ time slots. On the other hand, under the same $\beta_j$, the synchronization happens at most $\left \lceil\frac{T_{e,j}}{\tau_j} \right \rceil\frac{k(T,\nu_j)}{ T^{\beta_j} }$ rounds while each round lasts $Mk(T,\nu_j)\frac{ \left \lceil T^{\xi_j} \right \rceil}{2}$ time slots on average. Thus, it can be get that
		\begin{align*}
		\mathbb{E}&[R_4^{comm}(T)]\\
		=&\sum_{j=0}^{w}\left\lceil \frac{T_{e,j}}{\tau_j}\right\rceil (M-1)\Big[MK k(T,\nu_j)+k(T,\nu_j)\Big]\\
		&+\sum_{j=0}^{w}\left\lceil\frac{T_{e,j}}{\tau_j}\right\rceil\frac{M^2k^2(T,\nu_j)}{ T^{\beta_j} }\frac{ \left \lceil T^{\xi_j} \right \rceil}{2}\\
	    \leq &2\sum_{j=0}^wh_2M^2K^{\frac{4}{3}}\log(K)^{\frac{1}{3}}T_{e,j}T^{\frac{2\beta_j-2}{3}}\\
		&+2\sum_{j=0}^w(h_2)^2M^2K^{\frac{1}{3}}\log(K)^{\frac{1}{3}}T_{e,j}T^{\frac{\beta_j-1}{3}}\\
		 \leq & 2h_2M^2K^{\frac{4}{3}}\log(K)^{\frac{1}{3}}T^{\frac{1+2\beta_v}{3}}\\
		&+2(h_2)^2M^2K^{\frac{1}{3}}\log(K)^{\frac{1}{3}}T^{\frac{2+\beta_v}{3}}.
		\end{align*}
		$\forall T> T^*$. Since $\beta_v\leq 1$, $\frac{2+\beta_v}{3}\geq \frac{1+2\beta_v}{3}$, $\mathbb{E}[R_4^{comm}(T)]\leq O (M^2K^{\frac{1}{3}}\log(K)^{\frac{1}{3}}T^{\frac{2+\beta+\epsilon}{3}}).$

	\mypara{Proof of Lemma \ref{lem:b_unaware_sync}.}
		With the estimated $\beta_j$, the synchronization happens every $\left \lceil\frac{ T^{\beta_j} }{k(T,\nu_j)} \right \rceil$ iterations of exploration and communication, and each synchronization has a failure probability $\frac{1}{\left \lceil T^{\xi_j}\right \rceil}$, which leads to a worst-case linear regret of $MT$ in the following time slots:
		\begin{align*}
		\mathbb{E}[R_4^{sync}(T)]&\leq \sum_{j=0}^w \left\lceil\frac{T_{e,j}}{\tau_j}\right\rceil \frac{k(T,\nu_j)}{ T^{\beta_j} }\frac{1}{ \left \lceil T^{\xi_j} \right \rceil} MT \\
		&\leq 2h_2\sum_{j=0}^w \frac{T_{e,j}}{\tau_j}MT^{1+\nu_j-\beta_j-\xi_j}\\
		&\leq 2h_2\sum_{j=0}^w MK^{\frac{1}{3}}\log(K)^{\frac{1}{3}}T_{e,j}T^{\frac{\beta_j-1}{3}}  \\
		&\leq 2h_2MK^{\frac{1}{3}}\log(K)^{\frac{1}{3}}T^{\frac{2+\beta_w}{3}}\\
		&\leq 2h_2MK^{\frac{1}{3}}\log(K)^{\frac{1}{3}}T^{\frac{2+\beta+\epsilon}{3}}.
		\end{align*}
		Thus, we have $\mathbb{E}[R_4^{sync}(T)]\leq O(MK^{\frac{1}{3}}\log(K)^{\frac{1}{3}}T^{\frac{2+\beta+\epsilon}{3}})$.

\subsection{Additional numerical illustrations}
\label{apdx:sim}

In addition to Figs.~\ref{fig:multiplayer} and \ref{fig:attack} in Section~\ref{sec:intro}, a more detailed numerical illustration is given here to compare A2C2 with the algorithm in \cite{bubeck2020non}.
	In Fig.~\ref{fig:heatmap_a}, the regret differences between $\alpha$-unaware and \cite{bubeck2020non} with different number of participating players (i.e., $M$) and attackabilties (i.e., $\alpha$)  are illustrated. We see that the advantage of $\alpha$-unaware A2C2 is more pronounced in the region of large $M$ and small $\alpha$. A similar observation can be made from Fig.~\ref{fig:heatmap_b}, which shows the regret differences between $\beta$-unaware A2C2 and \cite{bubeck2020non}.

	\begin{figure}[htb]
    		\centering
		\includegraphics[width=0.9\linewidth]{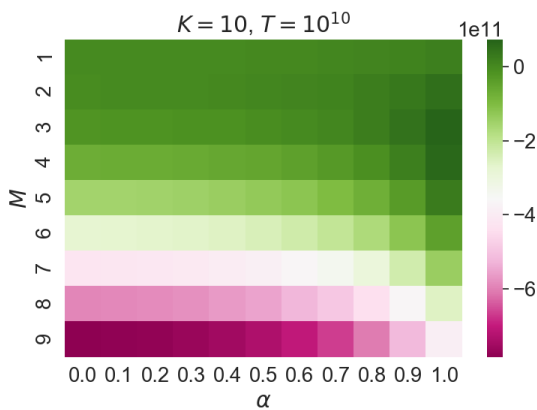}
		\caption{Regret differences between $\alpha$-unaware A2C2 and \cite{bubeck2020non}.}
		\label{fig:heatmap_a}
	\end{figure}
	\begin{figure}[htb]
		\centering
		\includegraphics[width=0.9\linewidth]{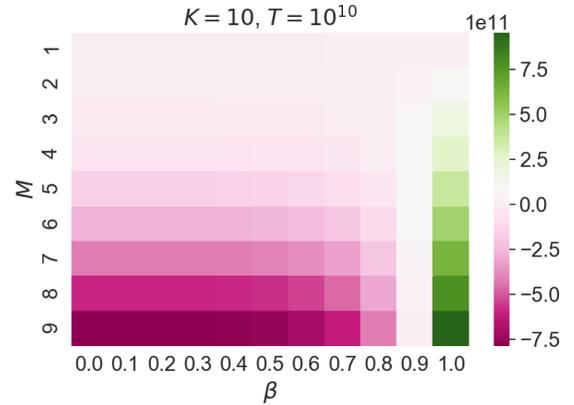}
		\caption{Regret differences between $\beta$-unaware A2C2 and \cite{bubeck2020non}.}
		\label{fig:heatmap_b}
	\end{figure}

\bibliographystyle{IEEEtran}
\bibliography{ref_final}

\begin{thebibliography}{10}
\providecommand{\url}[1]{#1}
\csname url@samestyle\endcsname
\providecommand{\newblock}{\relax}
\providecommand{\bibinfo}[2]{#2}
\providecommand{\BIBentrySTDinterwordspacing}{\spaceskip=0pt\relax}
\providecommand{\BIBentryALTinterwordstretchfactor}{4}
\providecommand{\BIBentryALTinterwordspacing}{\spaceskip=\fontdimen2\font plus
\BIBentryALTinterwordstretchfactor\fontdimen3\font minus
  \fontdimen4\font\relax}
\providecommand{\BIBforeignlanguage}[2]{{%
\expandafter\ifx\csname l@#1\endcsname\relax
\typeout{** WARNING: IEEEtran.bst: No hyphenation pattern has been}%
\typeout{** loaded for the language `#1'. Using the pattern for}%
\typeout{** the default language instead.}%
\else
\language=\csname l@#1\endcsname
\fi
#2}}
\providecommand{\BIBdecl}{\relax}
\BIBdecl

\bibitem{liu2010distributed}
K.~Liu and Q.~Zhao, ``Distributed learning in multi-armed bandit with multiple
  players,'' \emph{IEEE Trans. Signal Processing}, vol.~58, no.~11, pp.
  5667--5681, 2010.

\bibitem{besson2018multi}
L.~Besson and E.~Kaufmann, ``Multi-player bandits revisited,'' in
  \emph{Algorithmic Learning Theory}, 2018, pp. 56--92.

\bibitem{boursier2019sic}
E.~Boursier and V.~Perchet, ``{SIC-MMAB}: synchronisation involves
  communication in multiplayer multi-armed bandits,'' in \emph{Advances in
  Neural Information Processing Systems}, 2019, pp. 12\,071--12\,080.

\bibitem{lugosi2018multiplayer}
G.~Lugosi and A.~Mehrabian, ``Multiplayer bandits without observing collision
  information,'' \emph{arXiv preprint arXiv:1808.08416}, 2018.

\bibitem{bubeck2020non}
S.~Bubeck, Y.~Li, Y.~Peres, and M.~Sellke, ``Non-stochastic multi-player
  multi-armed bandits: Optimal rate with collision information, sublinear
  without,'' in \emph{Conference on Learning Theory}, 2020, pp. 961--987.

\bibitem{anandkumar2011distributed}
A.~Anandkumar, N.~Michael, A.~K. Tang, and A.~Swami, ``Distributed algorithms
  for learning and cognitive medium access with logarithmic regret,''
  \emph{IEEE J. Select. Areas Commun.}, vol.~29, no.~4, pp. 731--745, 2011.

\bibitem{avner2014concurrent}
O.~Avner and S.~Mannor, ``Concurrent bandits and cognitive radio networks,'' in
  \emph{Joint European Conference on Machine Learning and Knowledge Discovery
  in Databases}.\hskip 1em plus 0.5em minus 0.4em\relax Springer, 2014, pp.
  66--81.

\bibitem{Gan2020tsp}
C.~Gan, R.~Zhou, J.~Yang, and C.~Shen, ``Cost-aware cascading bandits,''
  \emph{IEEE Trans. Signal Processing}, vol.~68, pp. 3692--3706, June 2020.

\bibitem{Shi2020aistats}
C.~Shi, W.~Xiong, C.~Shen, and J.~Yang, ``Decentralized multi-player
  multi-armed bandits with no collision information,'' in \emph{International
  Conference on Artificial Intelligence and Statistics}.\hskip 1em plus 0.5em
  minus 0.4em\relax PMLR, 2020, pp. 1519--1528.

\bibitem{Xu2020twc}
X.~{Xu}, M.~{Tao}, and C.~{Shen}, ``Collaborative multi-agent multi-armed
  bandit learning for small-cell caching,'' \emph{IEEE Trans. Wireless
  Commun.}, vol.~19, no.~4, pp. 2570--2585, April 2020.

\bibitem{Shen2019tsp}
C.~Shen, ``Universal best arm identification,'' \emph{IEEE Trans. Signal
  Processing}, vol.~67, no.~17, pp. 4464--4478, September 2019.

\bibitem{alatur2020multi}
P.~Alatur, K.~Y. Levy, and A.~Krause, ``Multi-player bandits: The adversarial
  case,'' \emph{Journal of Machine Learning Research}, vol.~21, 2020.

\bibitem{rosenski2016multi}
J.~Rosenski, O.~Shamir, and L.~Szlak, ``Multi-player bandits--a musical chairs
  approach,'' in \emph{International Conference on Machine Learning}, 2016, pp.
  155--163.

\bibitem{bande2019adversarial}
M.~Bande and V.~V. Veeravalli, ``Adversarial multi-user bandits for
  uncoordinated spectrum access,'' in \emph{IEEE International Conference on
  Acoustics, Speech and Signal Processing (ICASSP)}, 2019, pp. 4514--4518.

\bibitem{anantharam1987asymptotically}
V.~Anantharam, P.~Varaiya, and J.~Walrand, ``Asymptotically efficient
  allocation rules for the multiarmed bandit problem with multiple plays-part
  i: Iid rewards,'' \emph{IEEE Trans. Autom. Control}, vol.~32, no.~11, pp.
  968--976, 1987.

\bibitem{komiyama2015optimal}
J.~Komiyama, J.~Honda, and H.~Nakagawa, ``Optimal regret analysis of {Thompson}
  sampling in stochastic multi-armed bandit problem with multiple plays,'' in
  \emph{Proceedings of the 32nd International Conference on Machine Learning},
  2015, pp. 1152--1161.

\bibitem{proutiere2019optimal}
P.-A. Wang, A.~Proutiere, K.~Ariu, Y.~Jedra, and A.~Russo, ``Optimal algorithms
  for multiplayer multi-armed bandits,'' in \emph{International Conference on
  Artificial Intelligence and Statistics}.\hskip 1em plus 0.5em minus
  0.4em\relax PMLR, 2020, pp. 4120--4129.

\bibitem{tibrewal2019multiplayer}
H.~Tibrewal, S.~Patchala, M.~K. Hanawal, and S.~J. Darak, ``Multiplayer
  multi-armed bandits for optimal assignment in heterogeneous networks,''
  \emph{arXiv preprint arXiv:1901.03868}, 2019.

\bibitem{magesh2019multi}
A.~Magesh and V.~V. Veeravalli, ``Multi-user {MABs} with user dependent rewards
  for uncoordinated spectrum access,'' in \emph{the 53rd Asilomar Conference on
  Signals, Systems, and Computers}, 2019, pp. 969--972.

\bibitem{boursier2019practical}
E.~Boursier, E.~Kaufmann, A.~Mehrabian, and V.~Perchet, ``A practical algorithm
  for multiplayer bandits when arm means vary among players,'' in
  \emph{International Conference on Artificial Intelligence and
  Statistics}.\hskip 1em plus 0.5em minus 0.4em\relax PMLR, 2020, pp.
  1211--1221.

\bibitem{bubeck2020coordination}
S.~Bubeck and T.~Budzinski, ``Coordination without communication: optimal
  regret in two players multi-armed bandits,'' in \emph{Conference on Learning
  Theory}.\hskip 1em plus 0.5em minus 0.4em\relax PMLR, 2020, pp. 916--939.

\bibitem{audibert2013regret}
J.-Y. Audibert, S.~Bubeck, and G.~Lugosi, ``Regret in online combinatorial
  optimization,'' \emph{Mathematics of Operations Research}, vol.~39, no.~1,
  pp. 31--45, 2013.

\bibitem{garivier2011kl}
A.~Garivier and O.~Capp{\'e}, ``The {KL-UCB} algorithm for bounded stochastic
  bandits and beyond,'' in \emph{Proceedings of the 24th Annual Conference on
  Learning Theory}, 2011, pp. 359--376.

\bibitem{landgren2016distributed}
P.~Landgren, V.~Srivastava, and N.~E. Leonard, ``On distributed cooperative
  decision-making in multiarmed bandits,'' in \emph{2016 European Control
  Conference (ECC)}, 2016, pp. 243--248.

\bibitem{shahrampour2017multi}
S.~Shahrampour, A.~Rakhlin, and A.~Jadbabaie, ``Multi-armed bandits in
  multi-agent networks,'' in \emph{IEEE International Conference on Acoustics,
  Speech and Signal Processing (ICASSP)}, 2017, pp. 2786--2790.

\bibitem{wang2019distributed}
Y.~Wang, J.~Hu, X.~Chen, and L.~Wang, ``Distributed bandit learning:
  Near-optimal regret with efficient communication,'' in \emph{2020
  International Conference on Learning Representations}, 2020.

\bibitem{awerbuch2008competitive}
B.~Awerbuch and R.~Kleinberg, ``Competitive collaborative learning,''
  \emph{Journal of Computer and System Sciences}, vol.~74, no.~8, pp.
  1271--1288, 2008.

\bibitem{cesa2016delay}
N.~Cesa-Bianchi, C.~Gentile, Y.~Mansour, and A.~Minora, ``Delay and cooperation
  in nonstochastic bandits,'' in \emph{Conference on Learning Theory}.\hskip
  1em plus 0.5em minus 0.4em\relax PMLR, 2016, pp. 605--622.

\bibitem{TV:05}
D.~Tse and P.~Viswanath, \emph{Fundamentals of Wireless Communication}.\hskip
  1em plus 0.5em minus 0.4em\relax Cambridge University Press, 2005.

\bibitem{CSMA2012}
K.~Xu, M.~Gerla, and S.~Bae, ``How effective is the {IEEE 802.11 RTS/CTS}
  handshake in ad hoc networks,'' in \emph{IEEE Global Telecommunications
  Conference}, vol.~1, Nov. 2002, pp. 72--76.

\bibitem{bonnefoi2017multi}
R.~Bonnefoi, L.~Besson, C.~Moy, E.~Kaufmann, and J.~Palicot, ``Multi-armed
  bandit learning in iot networks: Learning helps even in non-stationary
  settings,'' in \emph{International Conference on Cognitive Radio Oriented
  Wireless Networks}.\hskip 1em plus 0.5em minus 0.4em\relax Springer, 2017,
  pp. 173--185.

\bibitem{uchiya2010algorithms}
T.~Uchiya, A.~Nakamura, and M.~Kudo, ``Algorithms for adversarial bandit
  problems with multiple plays,'' in \emph{International Conference on
  Algorithmic Learning Theory}.\hskip 1em plus 0.5em minus 0.4em\relax
  Springer, 2010, pp. 375--389.

\bibitem{auer2002nonstochastic}
P.~Auer, N.~Cesa-Bianchi, Y.~Freund, and R.~E. Schapire, ``The nonstochastic
  multiarmed bandit problem,'' \emph{SIAM Journal on Computing}, vol.~32,
  no.~1, pp. 48--77, 2002.

\bibitem{taskar2013determinantal}
A.~Kulesza, B.~Taskar \emph{et~al.}, ``Determinantal point processes for
  machine learning,'' \emph{Foundations and Trends in Machine Learning},
  vol.~5, no. 2--3, pp. 123--286, 2012.

\bibitem{arora2012online}
R.~Arora, O.~Dekel, and A.~Tewari, ``Online bandit learning against an adaptive
  adversary: from regret to policy regret,'' in \emph{Proceedings of the 29th
  International Conference on Machine Learning}, 2012, pp. 1747--1754.

\bibitem{tallini2002capacity}
L.~G. Tallini, S.~Al-Bassam, and B.~Bose, ``On the capacity and codes for the
  {Z}-channel,'' in \emph{Proceedings of the IEEE International Symposium on
  Information Theory}, 2002, p. 422.

\bibitem{LinBook}
S.~Lin and D.~J. Costello, \emph{Error Control Coding}, 2nd~ed.\hskip 1em plus
  0.5em minus 0.4em\relax USA: Prentice-Hall, Inc., 2004.

\bibitem{chen2013optimal}
P.-N. Chen, H.-Y. Lin, and S.~M. Moser, ``Optimal ultrasmall block-codes for
  binary discrete memoryless channels,'' \emph{IEEE Trans. Info. Theory},
  vol.~59, no.~11, pp. 7346--7378, 2013.

\bibitem{borden1982optimal}
J.~M. Borden, ``Optimal asymmetric error detecting codes,'' \emph{Information
  and Control}, vol.~53, no. 1-2, pp. 66--73, 1982.

\end{thebibliography}

\end{document}